\documentclass[11pt]{article}

\usepackage[margin=1.12in]{geometry}
\usepackage{amsmath,amssymb,amsfonts,amsthm,mathtools,bm}
\usepackage{booktabs}
\usepackage{graphicx}
\usepackage{enumitem}
\usepackage[round,authoryear]{natbib}
\usepackage{algorithm}
\usepackage{algorithmic}
\usepackage{color}
\usepackage[hidelinks]{hyperref}
\mathtoolsset{showonlyrefs=true}
\allowdisplaybreaks
\usepackage{mathtools}
\mathtoolsset{showonlyrefs=true}
\newtheorem{theorem}{Theorem}[section]
\newtheorem{lemma}{Lemma}[section]
\newtheorem{proposition}{Proposition}[section]
\newtheorem{corollary}{Corollary}[section]
\theoremstyle{definition}
\newtheorem{assumption}{Assumption}[section]

\newcommand{\R}{\mathbb R}
\newcommand{\E}{\mathbb E}
\newcommand{\mE}{\mathbf{E}}
\newcommand{\Prb}{\mathbb P}
\newcommand{\Var}{\operatorname{Var}}
\newcommand{\tr}{\operatorname{tr}}
\newcommand{\diag}{\operatorname{diag}}
\newcommand{\argmin}{\operatorname*{arg\,min}}

\newcommand{\T}{^{\top}}
\newcommand{\dd}{\stackrel{d}{\longrightarrow}}
\newcommand{\norm}[1]{\lVert #1\rVert}
\newcommand{\abs}[1]{\left|#1\right|}
\newcommand{\sym}{\operatorname{sym}}
\newcommand{\ind}{\mathbf 1}
\newcommand{\calX}{\mathcal X}
\newcommand{\Estar}{\mathbb E^*}
\newcommand{\Prstar}{\mathbb P^*}
\newcommand{\Varstar}{\operatorname{Var}^*}
\newcommand{\qhat}{\widehat q}

\newcommand{\vX}{\bm{X}}
\newcommand{\vx}{\bm{x}}
\newcommand{\vt}{\bm{t}}
\newcommand{\vtheta}{\bm{\theta}}
\newcommand{\vu}{\bm{u}}
\newcommand{\vv}{\bm{v}}
\newcommand{\vY}{\bm{Y}}
\newcommand{\vS}{\bm{S}}
\newcommand{\vZ}{\bm{Z}}
\newcommand{\ve}{\bm{e}}
\newcommand{\vg}{\bm{g}}
\newcommand{\vh}{\bm{h}}
\newcommand{\va}{\bm{a}}
\newcommand{\vdelta}{\bm{\delta}}
\newcommand{\vDelta}{\bm{\Delta}}
\newcommand{\vr}{\bm{r}}
\newcommand{\mI}{\mathbf{I}}
\newcommand{\mD}{\mathbf{D}}
\newcommand{\mR}{\mathbf{R}}
\newcommand{\mA}{\mathbf{A}}
\newcommand{\mB}{\mathbf{B}}
\newcommand{\mC}{\mathbf{C}}
\newcommand{\mG}{\mathbf{G}}
\newcommand{\mK}{\mathbf{K}}
\newcommand{\mL}{\mathbf{L}}
\newcommand{\mM}{\mathbf{M}}
\newcommand{\mOmega}{\mathbf{\Omega}}
\newcommand{\mH}{\mathbf{H}}
\newcommand{\mX}{\mathbf{X}}

\title{High-Dimensional Two-Sample Test for Elliptical Symmetry Distribution}
\author{Long Feng$^{1}$ and Hongfei Wang$^{2}$\\
$^{1}$Nankai University\\
 $^{2}$Nanjing Audit University}
\date{}
\begin{document}
\maketitle

\begin{abstract}
We study the high-dimensional two-sample location problem under elliptical symmetry with arbitrary dependence in the scatter matrix. Existing spatial-sign procedures are attractive for heavy-tailed data, but their null calibration is tied to weakly dependent scatter matrices and their diagonal standardization does not, in general, recover the diagonal shape under strong dependence. We propose a new spatial-sign test based on coordinatewise pairwise-difference quantile scales. The new diagonal standardizer is location free, requires no positive moment condition on the radial variable, and estimates the diagonal of the elliptical shape up to a {\color{black}scalar specific to the sample}, which disappears after spatial normalization. For the resulting full-sample statistic{\color{black},} we derive an explicit-rate stochastic expansion, establish a general weighted chi-square null distribution under arbitrary correlation structure, justify an empirical diagonal-deletion correction, and show that a Rademacher wild bootstrap consistently estimates the null law. The usual normal approximation appears only as a special case when no eigenvalue dominates.
\end{abstract}

\noindent\textbf{Keywords:} elliptical distribution; high-dimensional inference; pairwise quantile; robust scale; spatial median; spatial sign; wild bootstrap.

\medskip
\noindent\textbf{MSC2020 subject classifications:} Primary 62H15; secondary 62G35, 62E20.

\section{Introduction}

Testing the equality of two high-dimensional mean or location vectors is a fundamental problem in modern multivariate analysis.  Applications arise in genomics, signal processing, neuroimaging, finance, and many other areas where the dimension $p$ is comparable to or much larger than the sample sizes.  In low dimension, the benchmark procedure is Hotelling's $T^2$ test \citep{hotelling1931}.  When the dimension is larger than the sample size, however, the pooled sample covariance matrix is singular and the classical Mahalanobis form is no longer available.  This difficulty has generated a large literature on nonclassical tests for high-dimensional means.  For broad reviews, see \citet{huangliyang2022} and \citet{harrarkong2022}.

Many high-dimensional mean tests avoid estimating or inverting the full covariance matrix by replacing the Mahalanobis distance with the Euclidean distance, leading to quadratic-form procedures that are particularly suited to Gaussian or light-tailed data. \citet{bai1996} showed that the squared Euclidean distance between the sample means continues to yield a useful test when Hotelling's statistic breaks down.  \citet{chenqin2010} removed the self-inner-product bias and obtained an unbiased $U$-statistic with strong performance against dense alternatives.  \citet{huetal2017} extended this quadratic-form strategy to unequal covariance matrices.  

Another important approach improves scale invariance by standardizing each coordinate with an estimate of its marginal scale, thereby reducing the influence of heterogeneous variances across dimensions. \citet{srivastavadu2008} proposed a diagonal Hotelling-type statistic for the one-sample problem.  \citet{srivastava2013} adapted this idea to the two-sample Behrens--Fisher setting with unequal covariance matrices.  \citet{parkayyala2013} introduced leave-out corrections to reduce plug-in bias. \citet{FengZouWangZhu2015} further studied the high-dimensional Behrens--Fisher problem and proposed a scale-invariant test that accommodates unequal covariance matrices across the two samples, thereby extending two-sample mean inference beyond the equal-covariance setting.  More recently, \citet{zhang2020scale} proposed a simple scale-invariant two-sample test and showed that, after normalization, its limit can be either normal or nonnormal, depending on the covariance structure.
For sparse alternatives under dependence, \citet{cailiuxia2014} proposed a {\color{black}max}-type procedure that exploits the largest standardized coordinatewise discrepancies. Considerable effort has also been devoted to constructing adaptive procedures that remain powerful over a broad range of sparsity regimes. \citet{XuLinWeiPan2016} proposed an adaptive two-sample test that combines a family of \(L_p\)-type statistics and thereby bridges dense and sparse alternatives. More recently, \citet{FengJiangLiLiu2024} established a general {\color{black}theory of asymptotic independence between} the sum-type and max-type statistics under weak dependence, which provides a rigorous foundation for adaptive combination tests that remain effective across different sparsity patterns.

A closely related body of work is particularly relevant to the present paper because it explicitly allows arbitrary covariance structures.  \citet{zhangxu2009} studied the $k$-sample Behrens--Fisher problem by an $L^2$-norm statistic together with chi-square approximation.  \citet{zhang2021normal} developed a normal-reference approach for high-dimensional two-sample Behrens--Fisher problems and showed that the test statistic and a chi-square-type mixture share the same normal or nonnormal limit under general covariance heterogeneity.  \citet{wangxu2022} proposed an approximate randomization test based on the {\color{black}statistic in \cite{chenqin2010}} and established validity without imposing conditions on the eigenstructure of the covariance matrices.  \citet{zhang2023nrscale} further developed a normal-reference scale-invariant test, while \citet{zhuwangzhang2024} proposed an $F$-type normal-reference procedure for the same Behrens--Fisher setting.  These papers make clear that, under strong dependence, the correct reference law for high-dimensional mean tests is often a chi-square-type mixture rather than a Gaussian limit.

Robust procedures are needed when the underlying distributions are heavy-tailed, contaminated, or only weakly moment constrained, because mean-based quadratic statistics can be highly sensitive to outliers and may suffer from severe size distortion in finite samples. This has motivated a substantial literature on high-dimensional mean testing based on ranks, signs, spatial signs, and other robust transformations \citep{Feng2026EllipticalHDDA}. \citet{wangpengli2015} proposed a high-dimensional nonparametric test for the mean vector under general multivariate distributions.  \citet{fengsun2016} developed a scalar-invariant one-sample spatial-sign test under elliptical symmetry.  For the two-sample problem, \citet{feng2016} introduced a leave-one-out spatial-sign procedure, and \citet{li2016} proposed a simpler bias-corrected version in the same framework.  \citet{wanglintang2019} studied a robust two-sample mean test under dependence, and \citet{jiangwangwen2024} recently proposed nonparametric two-sample tests via random integration.  \citet{chakrabortychaudhuri2017} compared mean-, sign-, and rank-based procedures and clarified how their relative efficiencies depend on tail behavior and the geometry of the alternative.

Despite this substantial literature, there remains a gap between the arbitrary-covariance mean-based line and the heavy-tail-robust spatial-sign line.  The two-sample spatial-sign tests of \citet{feng2016} and \citet{li2016} are calibrated under trace conditions that are naturally compatible with sparse or sufficiently weakly dependent scatter matrices.  When the scatter matrix is strongly and densely correlated, two difficulties arise simultaneously.  First, the leading quadratic form retains the whole spectrum of the scatter matrix, so the null law is generally a weighted chi-square mixture rather than a normal distribution.  Second, the fixed-point diagonal standardizer used in \citet{feng2016} and \citet{li2016} does not, in general, recover the diagonal of the elliptical shape matrix under arbitrary dependence.  Hence the existing spatial-sign theory becomes unreliable precisely in the regime where the normal-reference mean-based literature predicts nonnormal limits.

This paper resolves these two issues under elliptical symmetry.  We replace the spatial-sign-based fixed-point diagonal standardizer by a coordinatewise pairwise-difference quantile estimator.  The new diagonal estimator is location free, requires no positive moment condition on the radial variable, and recovers the diagonal shape up to a within-sample scalar that disappears after spatial normalization.  Based on this diagonal standardizer and full-sample spatial medians, we construct a diagonal-deleted spatial-sign statistic whose leading term is a canonical quadratic statistic.  Under arbitrary correlation structure, we derive explicit-rate expansions, establish a general weighted chi-square null distribution, justify an empirical diagonal-deletion correction, and show that a Rademacher wild bootstrap consistently estimates the null law.

The rest of the paper is organized as follows.  Section~2 introduces the proposed procedure and states the main theoretical results.  Section~3 reports simulation results under strong dependence.  Appendix~A contains the auxiliary {\color{black}lemmas} and the proofs of all theoretical results, while Appendix~B discusses the radius and spectral assumptions in representative elliptical models.

\paragraph{Notation.}
For a vector \(\bm{x}\), let \(\|\bm{x}\|\) denote its Euclidean norm, and define
\(U(\bm{x})=\bm{x}/\|\bm{x}\|\) for \(\bm{x}\neq \bm{0}\) and \(U(\bm{0})=\bm{0}\).
For a matrix \(\mathbf{A}\), write \(\|\mathbf{A}\|_{\rm op}\) and
\(\|\mathbf{A}\|_{F}\) for its operator and Frobenius norms, respectively,
\(\operatorname{tr}(\mathbf{A})\) for its trace, and
\(\lambda_{\max}(\mathbf{A})\) and \(\lambda_{\min}(\mathbf{A})\) for its largest
and smallest eigenvalues when \(\mathbf{A}\) is symmetric.  For a vector
\(\bm{a}\), \(\operatorname{diag}(\bm{a})\) denotes the diagonal matrix with
diagonal entries given by \(\bm{a}\); for a square matrix \(\mathbf{A}\),
\(\operatorname{diag}(\mathbf{A})\) denotes the diagonal matrix formed from the
diagonal entries of \(\mathbf{A}\).  {\color{black}Let \(\mathbf{I}_k\) be the \(k\times k\)
identity matrix and \(\bm{1}_k=(1,\ldots,1)^{\top}\in\mathbb R^k\) for any $k\in \mathbb Z^{+}$.}
We write \(a_n\lesssim b_n\) if \(a_n\le Cb_n\) for a universal constant
\(C>0\), and \(a_n\asymp b_n\) if both \(a_n\lesssim b_n\) and
\(b_n\lesssim a_n\).  %The symbols \(O_p(\cdot)\) and \(o_p(\cdot)\) have their usual meanings.  
Conditional probability and expectation with respect to the
wild-bootstrap multipliers are denoted by \(\Pr^*(\cdot)\) and \(\mathbb E^*(\cdot)\).
%Throughout, {\color{black}we let $n_1$ and $n_2$ be the sample sizes, $p$ the dimension, and define} \(N=n_1+n_2\), \(n_0=\min(n_1,n_2)\), and
%\(x_N=\log(4pN)\).

\section{Method}

We consider two independent samples
$
\{\vX_{1i}\}_{i=1}^{n_1}$
and
$\{\vX_{2i}\}_{i=1}^{n_2},$
drawn from two $p$-variate elliptically symmetric populations. Specifically, for $k=1,2$ and $i=1,\ldots,n_k$, we assume that
\begin{equation}\label{eq:model}
\vX_{ki}
=
\vtheta_k
+
(\mD_k^\circ)^{1/2}\mR_k^{1/2}\xi_{ki}\vu_{ki},
\end{equation}
where $\vtheta_k\in\mathbb{R}^p$ is the location parameter of the $k$th population,
$
\mD_k^\circ=\diag(d_{k1}^\circ,\ldots,d_{kp}^\circ)
$
is a diagonal matrix with strictly positive entries, $\mR_k$ is a positive definite correlation matrix, $\vu_{ki}\sim \mathrm{Unif}(\mathbb{S}^{p-1})$, and $\vu_{ki}$ is independent of the positive radial variable $\xi_{ki}$. Throughout the paper, no normalization condition such as $\E(\xi_{ki}^2)=p$ is imposed. Under model \eqref{eq:model}, the two populations are allowed to have arbitrary correlation structures through $\mR_1$ and $\mR_2$.

Our goal is to test whether the two populations have the same location vector, that is,
\begin{equation}\label{eq:hypothesis}
H_0:\ \vtheta_1=\vtheta_2
\qquad\text{versus}\qquad
H_1:\ \vtheta_1\neq\vtheta_2.
\end{equation}
\citet{feng2016} proposed the following leave-one-out statistic
\begin{equation}\label{eq:FZW}
R_n^{\mathrm{FZW}}
=
-\frac{1}{n_1n_2}\sum_{i=1}^{n_1}\sum_{j=1}^{n_2}
U\{\widetilde{\mD}_{1,i}^{-1/2}(\vX_{1i}-\widetilde{\vtheta}_{2,j})\}\T
U\{\widetilde{\mD}_{2,j}^{-1/2}(\vX_{2j}-\widetilde{\vtheta}_{1,i})\},
\end{equation}
where \((\widetilde{\vtheta}_{k,\ell},\widetilde{\mD}_{k,\ell})\) are computed by the following algorithm \ref{alg:feng-iteration} from the \(k\)th sample with observation \(\ell\) deleted.  

\begin{algorithm}[htbp]
\caption{Iterative estimation of $\bm{\theta}_i$ and $\mathbf{D}_i$ in \cite{feng2016}}
\label{alg:feng-iteration}
\begin{algorithmic}[1]
\STATE Choose initial values $\bm{\theta}_i^{(0)}$ and $\mathbf{D}_i^{(0)}$.
\STATE Set $t=0$.
\REPEAT
\STATE Compute
$$
\bm{\epsilon}_{ij}^{(t)}
=
\bigl(\mathbf{D}_i^{(t)}\bigr)^{-1/2}
\bigl(\mathbf{X}_{ij}-\bm{\theta}_i^{(t)}\bigr),
\qquad
j=1,\ldots,n_i.
$$
\STATE Update $\bm{\theta}_i$ by
$$
\bm{\theta}_i^{(t+1)}
=
\bm{\theta}_i^{(t)}
+
\frac{
\bigl(\mathbf{D}_i^{(t)}\bigr)^{1/2}
\sum_{j=1}^{n_i} U\!\left(\bm{\epsilon}_{ij}^{(t)}\right)
}{
\sum_{j=1}^{n_i}\left\|\bm{\epsilon}_{ij}^{(t)}\right\|^{-1}
}.
$$
\STATE Update $\mathbf{D}_i$ by
$$
\mathbf{D}_i^{(t+1)}
=
p\,
\bigl(\mathbf{D}_i^{(t)}\bigr)^{1/2}
\operatorname{diag}
\left\{
n_i^{-1}
\sum_{j=1}^{n_i}
U\!\left(\bm{\epsilon}_{ij}^{(t)}\right)
U\!\left(\bm{\epsilon}_{ij}^{(t)}\right)^{\top}
\right\}
\bigl(\mathbf{D}_i^{(t)}\bigr)^{1/2}.
$$
\STATE Set $t\leftarrow t+1$.
\UNTIL{a convergence criterion is satisfied}
\end{algorithmic}
\end{algorithm}

\citet{li2016} proposed a simpler bias-corrected statistic:
\begin{equation}\label{eq:LWZ}
\begin{aligned}
T_n^{\mathrm{LWZ}}
={}&
-\frac{1}{n_1n_2}\sum_{i=1}^{n_1}\sum_{j=1}^{n_2}
U\{\widetilde{\mD}_{1}^{-1/2}(\vX_{1i}-\widetilde{\vtheta}_{2})\}\T
U\{\widetilde{\mD}_{2}^{-1/2}(\vX_{2j}-\widetilde{\vtheta}_{1})\}\\
&\quad
-\frac{\hat c_2}{\hat c_1 n_1p}\tr(\widetilde{\mD}_{1}^{1/2}\widetilde{\mD}_{2}^{-1/2})
-\frac{\hat c_1}{\hat c_2 n_2p}\tr(\widetilde{\mD}_{2}^{1/2}\widetilde{\mD}_{1}^{-1/2}),
\end{aligned}
\end{equation}
where
\begin{equation}\label{eq:chat}
    \hat c_k=\frac{1}{n}\sum_{i=1}^{n}
    \norm{\widetilde{\mD}_{k}^{-1/2}(\vX_{ki}-\widetilde{\vtheta}_{k})}^{-1},
    \qquad k=1,2.
\end{equation}

Both statistics are calibrated by normal-reference theory under trace conditions that effectively require the scatter matrices to be sparse or sufficiently weakly dependent.  Under strong dense dependence, however, the leading quadratic form retains the whole spectrum of the scatter matrix, and the null law is generally a weighted chi-square mixture rather than a normal distribution.  Mean-based procedures under arbitrary covariances have been developed by \citet{zhang2021normal,wangxu2022,zhang2023nrscale,zhuwangzhang2024}, but those results do not transfer directly to \eqref{eq:FZW}--\eqref{eq:LWZ} because spatial signs require an additional diagonal standardization and the fixed-point diagonal estimator itself fails under general elliptical dependence.  In particular, the sign-based fixed-point diagonal standardization used in \citet{feng2016} and \citet{li2016} does not, in general, target the diagonal of the elliptical shape under arbitrary dependence.  This is the second difficulty addressed here.

We therefore estimate the diagonal shape coordinatewise through pairwise-difference quantiles. 
Let \(\vv\in\mathbb S^{p-1}\) {\color{black} and $\bm e_1$
 be the p-dimensional unit vector with 1 in the 1-th position and 0 elsewhere. }.  Since \(\vu_{ki}\) is uniform on the sphere, the law of \(\xi_{ki}\vv\T \vu_{ki}\) is independent of \(\vv\).  For independent copies \(Z_{k1}\) and \(Z_{k2}\) of \(\xi_{ki}\ve_1\T \vu_{ki}\), define
\[
        F_{k,\Delta}(t)=\Prb\{\abs{Z_{k1}-Z_{k2}}\le t\},
        \qquad
        q_{k,\alpha}=F_{k,\Delta}^{-1}(\alpha),
\]
and set
\begin{equation}\label{eq:Dtarget}
        \mD_k=q_{k,\alpha}^2\mD_k^\circ{\color{black}=(d_{k1},\dots,d_{kp})\T},
        \qquad
        d_{kj}=q_{k,\alpha}^2d^\circ_{kj}.
\end{equation}
The constants \(q_{k,\alpha}\) are nuisance scale factors.  They are immaterial for spatial signs because \(U\{(c\mD)^{-1/2}\vx\}=U(\mD^{-1/2}\vx)\) for every \(c>0\).

Define the oracle standardized variables and signs
\begin{equation}\label{eq:YS}
        \vY_{ki}=\mD_k^{-1/2}(\vX_{ki}-\vtheta_k),
        \qquad
        \vS_{ki}=U(\vY_{ki}).
\end{equation}
Then
\[
        \vS_{ki}=\frac{\mR_k^{1/2}\vu_{ki}}{(\vu_{ki}\T \mR_k\vu_{ki})^{1/2}},
\]
so the sign distribution depends only on the angular part \(\mR_k\), not on the radial law.  Put
\begin{align}
    \mOmega_k&=\E(\vS_{ki}\vS_{ki}\T),
    \tr(\mOmega_k)=1,\label{eq:Omega}\\
    \mG_k&=\E\left[\norm{\vY_{ki}}^{-1}\{\mI_p-\vS_{ki}\vS_{ki}\T\}\right].\label{eq:G}
\end{align}
The matrix \(\mG_k\) is the population Jacobian of the spatial-median score under the working diagonal standardization. For a fixed \(\alpha\in(0,1)\), define {\color{black}$\bm X_{ki}=(X_{ki1},\dots ,X_{kip})\T$,} 
\begin{equation}\label{eq:pdq-cdf}
    \widehat F_{kj}(t)
    =\binom{n_k}{2}^{-1}\sum_{1\le i<\ell\le n_k}
        \ind\{\abs{X_{kij}-X_{k\ell j}}\le t\},
    \qquad t\ge0,
\end{equation}
and then
\begin{equation}\label{eq:Dhat}
    \qhat_{kj,\alpha}=\inf\{t\ge0:\widehat F_{kj}(t)\ge\alpha\},
    \qquad
    \hat d_{kj}=\qhat_{kj,\alpha}^2,
    \qquad
    \widehat{\mD}_k=\diag(\hat d_{k1},\ldots,\hat d_{kp}).
\end{equation}
This is a coordinatewise \(U\)-quantile scale estimator.  It is location free and does not use squared observations.

Given \(\widehat{\mD}_k\), define the full-sample standardized spatial median
\begin{equation}\label{eq:thetahat}
    \widehat{\vtheta}_k
    \in\argmin_{\vtheta\in\R^p}
    \sum_{i=1}^{n_k}\norm{\widehat{\mD}_k^{-1/2}(\vX_{ki}-\vtheta)}.
\end{equation}
The fitted within-sample signs are
\begin{equation}\label{eq:fitted-signs}
    \widehat{\vY}_{ki}=\widehat{\mD}_k^{-1/2}(\vX_{ki}-\widehat{\vtheta}_k),
    \qquad
    \widehat{\vS}_{ki}=U(\widehat{\vY}_{ki}).
\end{equation}
Our test statistic is
\begin{equation}\label{eq:Rhat}
\hat R_n^{\rm PDQ}
= -\frac1{n_1n_2}\sum_{i=1}^{n_1}\sum_{j=1}^{n_2}
 U\{\widehat{\mD}_1^{-1/2}(\vX_{1i}-\widehat{\vtheta}_2)\}\T
 U\{\widehat{\mD}_2^{-1/2}(\vX_{2j}-\widehat{\vtheta}_1)\}.
\end{equation}
Define
\begin{equation}\label{eq:hatOmegaG}
    \widehat{\mOmega}_k=n_k^{-1}\sum_{i=1}^{n_k}\widehat{\vS}_{ki}\widehat{\vS}_{ki}\T,
    \qquad
    \widehat{\mG}_k=n_k^{-1}\sum_{i=1}^{n_k}\norm{\widehat{\vY}_{ki}}^{-1}
    (\mI_p-\widehat{\vS}_{ki}\widehat{\vS}_{ki}\T).
\end{equation}

We now state the primitive assumptions and the population quantities that determine the null distribution.
Let
$N=n_1+n_2,n_0=\min(n_1,n_2)$, $x_N=\log(8pN),$ and $r_D=(x_N/n_0)^{1/2}.$

\begin{assumption}\label{ass:np} Assume that
\[
        n_1/N\to\kappa\in(0,1),
        \qquad
        x_N/n_0\to0.
\]
\end{assumption}

\begin{assumption}\label{ass:elliptical}
{\color{black}The samples are independent and identically distributed (i.i.d.), and satisfy condition \eqref{eq:model}. }%The samples are independent, each sample is i.i.d., and \eqref{eq:model} holds. 
 There are constants \(0<\underline d<\bar d<\infty\) such that
\[
        \underline d\le d^\circ_{kj}\le \bar d,
        \qquad k=1,2,
        \quad j=1,\ldots,p.
\]
\end{assumption}

\begin{assumption}\label{ass:quantile}
For the fixed \(\alpha\in(0,1)\), there exist constants \(0<q_-<q_+<\infty\), \(c_F>0\), \(C_F<\infty\), and \(\varepsilon_F>0\) such that, for \(k=1,2\),
\begin{equation}\label{eq:qbounded}
        q_-\le q_{k,\alpha}\le q_+,
\end{equation}
and, whenever \(\abs{t-q_{k,\alpha}}\le\varepsilon_F\),
\begin{equation}\label{eq:local-slope}
        c_F\abs{t-q_{k,\alpha}}
        \le
        \abs{F_{k,\Delta}(t)-\alpha}
        \le
        C_F\abs{t-q_{k,\alpha}}.
\end{equation}
\end{assumption}

\begin{assumption}\label{ass:radius}
With \(\vY_{ki}\) defined in \eqref{eq:YS}, there is a deterministic sequence \(M_{Y,n}\ge1\) such that
\begin{equation}\label{eq:inverse-radius}
    \max_{k=1,2}
    \left\{
    p^{1/2}\E\norm{\vY_{k1}}^{-1}
    +p\E\norm{\vY_{k1}}^{-2}
    +p^2\E\norm{\vY_{k1}}^{-4}
    \right\}
    \le M_{Y,n},
\end{equation}

and
\begin{equation}\label{eq:smallball}
    \max_{k=1,2}\sup_{0<t\le1}
    t^{-2}\Prb\{\norm{\vY_{k1}}\le t p^{1/2}\}
    \le M_{Y,n}.
\end{equation}

\end{assumption}

The condition in Assumption \ref{ass:radius} is a lower-tail and inverse-radius regularity condition for the standardized radius \(\norm{\vY_{k1}}\).  Appendix~\ref{app:assumption-comments} shows that it is compatible with the AR(1) and compound-symmetry benchmark shapes, relates it to the inverse-radius moment condition of \citet{ZouPengFengWang2014}, and verifies it for the normal, fixed-degree \(t\), and finite-scale mixture normal elliptical models.

Define the deterministic curvature factor
\begin{equation}\label{eq:Cn}
\mathfrak C_n
=1+M_{Y,n}^2+
\max_{k=1,2}\left[
    \left\{\norm{p^{1/2}\mG_k}_{\rm op}
    +\norm{(p^{1/2}\mG_k)^{-1}}_{\rm op}\right\}^4
    +\sup_{\norm{\va}=1}\E\{\va\T\mOmega_k^{-1/2}\vS_{ki}\}^4
\right].
\end{equation}
Set
\begin{equation}\label{eq:rates-basic}
    q_{\theta,n}=\mathfrak C_n p^{1/2}x_N/n_0,
    \qquad
    a_{S,n}=\mathfrak C_n\{r_D+n_0^{-1/2}+x_N/n_0\},
\end{equation}
and
\begin{equation}\label{eq:aK}
    a_{K,n}=\mathfrak C_n\{r_D+(x_N/n_0)^{1/2}+a_{S,n}\}.
\end{equation}

Under \(H_0\), write the common center as \(\vtheta\).  Put
\[
        \mA_{12}=\mD_1^{-1/2}\mD_2^{1/2},
        \qquad
        \mA_{21}=\mD_2^{-1/2}\mD_1^{1/2}.
\]
The matrices governing the first-order expansion are
\begin{align}
    \mM_1&=\mG_2\mA_{21}\mG_1^{-1},
    &\mK_1&=\sym(\mM_1)=\frac12(\mM_1+\mM_1\T),\label{eq:K1}\\
    \mM_2&=\mG_2^{-1}\mA_{12}\T \mG_1,
    &\mK_2&=\sym(\mM_2)=\frac12(\mM_2+\mM_2\T),\label{eq:K2}
\end{align}
and
\begin{equation}\label{eq:K3}
\mC_{12}=\mG_2^{-1}\mA_{12}\T \mG_1\mG_2\mA_{21}\mG_1^{-1},
    \qquad
    \mK_3=\mI_p+\mC_{12}\T.
\end{equation}
Let
\[
        L_{K,n}=1+\max\{\norm{\mK_1}_{\rm op},\norm{\mK_2}_{\rm op},\norm{\mK_3}_{\rm op}\}.
\]
For \(\bar{\vS}_k=n_k^{-1}\sum_i\vS_{ki}\), define
\begin{equation}\label{eq:Qn}
    Q_n
    =\bar{\vS}_1\T \mK_1\bar{\vS}_1
     +\bar{\vS}_2\T \mK_2\bar{\vS}_2
     -\bar{\vS}_1\T \mK_3\bar{\vS}_2.
\end{equation}
The full quadratic statistic has mean
\begin{equation}\label{eq:bn}
    b_n=\E Q_n
    =\frac1{n_1}\tr(\mK_1\mOmega_1)
     +\frac1{n_2}\tr(\mK_2\mOmega_2).
\end{equation}
Set
\begin{equation}\label{eq:Bfull}
    \mB_n=\begin{pmatrix}
    n_1^{-1}\mOmega_1^{1/2}\mK_1\mOmega_1^{1/2}
    &-(2\sqrt{n_1n_2})^{-1}\mOmega_1^{1/2}\mK_3\mOmega_2^{1/2}\\
    -(2\sqrt{n_1n_2})^{-1}\mOmega_2^{1/2}\mK_3\T\mOmega_1^{1/2}
    &n_2^{-1}\mOmega_2^{1/2}\mK_2\mOmega_2^{1/2}
    \end{pmatrix}.
\end{equation}
Let \(\lambda_{n1},\ldots,\lambda_{n2p}\) be the eigenvalues of \(\mB_n\), and define
\begin{equation}\label{eq:tau-def}
    \tau_n^2=2\tr(\mB_n^2)=2\sum_{r=1}^{2p}\lambda_{nr}^2.
\end{equation}

The oracle empirical diagonal correction is
\begin{equation}\label{eq:btilden}
    \tilde b_n=\frac1{n_1^2}\sum_{i=1}^{n_1}\vS_{1i}\T \mK_1\vS_{1i}
    +\frac1{n_2^2}\sum_{j=1}^{n_2}\vS_{2j}\T \mK_2\vS_{2j}.
\end{equation}
The corresponding diagonal-deleted oracle statistic is
\begin{equation}\label{eq:Un}
\begin{aligned}
    U_n=Q_n-\tilde b_n
    ={}&\frac1{n_1^2}\sum_{i\ne \ell}\vS_{1i}\T\mK_1\vS_{1\ell}
      +\frac1{n_2^2}\sum_{j\ne \ell}\vS_{2j}\T\mK_2\vS_{2\ell}  \\
    &\quad-\frac1{n_1n_2}\sum_{i=1}^{n_1}\sum_{j=1}^{n_2}\vS_{1i}\T\mK_3\vS_{2j}.
\end{aligned}
\end{equation}
The role of \(\tilde b_n\) is to remove the empirical diagonal terms; it is not used as a high-precision estimator of \(b_n\).  Define the angular row-leverage quantity
\begin{equation}\label{eq:Delta-row}
\begin{aligned}
\Delta_{{\rm row},n}
=\tau_n^{-2}\max\Bigg\{&
\frac{\norm{\mK_1\mOmega_1\mK_1}_{\rm op}}{n_1^3},
\frac{\norm{\mK_2\mOmega_2\mK_2}_{\rm op}}{n_2^3},\\
&\frac{\norm{\mK_3\mOmega_2\mK_3\T}_{\rm op}}{n_1^2n_2},
\frac{\norm{\mK_3\T\mOmega_1\mK_3}_{\rm op}}{n_1n_2^2}
\Bigg\}.
\end{aligned}
\end{equation}

\begin{assumption}\label{ass:spectral}
The matrix \(\mB_n\) satisfies \(\tr(\mB_n^2)>0\).  With
\[
        \rho_{nr}=\lambda_{nr}/\{\tr(\mB_n^2)\}^{1/2},
        \qquad r=1,\ldots,2p,
\]
there is a square-summable sequence \(\rho_1,\rho_2,\ldots\) such that, after setting \(\rho_{nr}=0\) for \(r>2p\),
\begin{equation}\label{eq:l2}
    \sum_{r\ge1}(\rho_{nr}-\rho_r)^2\to0,
    \qquad
    \rho_0^2=1-\sum_{r\ge1}\rho_r^2\ge0.
\end{equation}
The explicit approximation rates satisfy
\begin{equation}\label{eq:spectral-rates}
\begin{gathered}
    r_D\to0,
    \qquad
    q_{\theta,n}\to0,
    \qquad
    \frac{x_N}{n_0^2\tau_n}\to0,
    \qquad
    a_{K,n}\to0,\\
    \Delta_{{\rm row},n}\to0,
    \qquad
    \delta_{H,n}\to0,
    \qquad
    \delta_{{\rm dJ},n}\to0,
\end{gathered}
\end{equation}
where
\begin{align}
    \delta_{H,n}
    &=a_{K,n}^2+L_{K,n}^2a_{S,n}^2+\Delta_{{\rm row},n}+n_0^{-1},\label{eq:deltaH}\\
    \delta_{{\rm dJ},n}
    &=\mathfrak C_n\Delta_{{\rm row},n}.\label{eq:deltadj}
\end{align}
\end{assumption}

Assumption \ref{ass:spectral} is the only condition involving the shape structure; it is stated through the angular covariance matrices and does not require sparsity, bandedness, or weak correlation of \(\mR_k\).  Appendix~\ref{app:assumption-comments} gives explicit implications under bounded-spectrum, AR(1), and compound-symmetry regimes, including the relaxed dimensional requirement \(p=o\{n_0^2/\log^2(p+n_0)\}\) in the bounded-spectrum case.

\medskip\noindent We next state the main asymptotic results.

\begin{theorem}\label{thm:D}
Under {\color{black}\(H_0\) and} Assumptions \ref{ass:np}--\ref{ass:quantile},
\begin{equation}\label{eq:D-rate}
    \max_{k=1,2}\max_{1\le j\le p}
    \abs{\frac{\hat d_{kj}}{d_{kj}}-1}
    =O_p(r_D).
\end{equation}
\end{theorem}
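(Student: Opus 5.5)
The plan is to reduce the statement to a uniform concentration bound for one-dimensional $U$-quantiles, one coordinate at a time. Fix $k$ and $j$. Under model \eqref{eq:model}, the $j$th coordinate is
\[
X_{kij}-\theta_{kj}=(d^\circ_{kj})^{1/2}\xi_{ki}\,\ve_j\T\mR_k^{1/2}\vu_{ki}.
\]
Since $\mR_k$ is a correlation matrix, $\ve_j\T\mR_k^{1/2}$ has unit norm. Rotation invariance of $\vu_{ki}$ then gives $\xi_{ki}\ve_j\T\mR_k^{1/2}\vu_{ki}\overset{d}{=}Z_{k1}$, and this holds jointly across $i$ by independence. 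Hence the pairwise differences satisfy
\[
|X_{kij}-X_{k\ell j}|\overset{d}{=}(d^\circ_{kj})^{1/2}|Z_{k1}-Z_{k2}|,
\]
with the location cancelling, which is why $H_0$ plays no role. It follows that $\E\widehat F_{kj}(t)=F_{k,\Delta}\{t/(d^\circ_{kj})^{1/2}\}$, and the population $\alpha$-quantile of the $j$th coordinate is exactly $q_{k,\alpha}(d^\circ_{kj})^{1/2}=d_{kj}^{1/2}$.

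\textbf{Concentration.} Write $s=(d^\circ_{kj})^{1/2}\in[\underline d^{1/2},\bar d^{1/2}]$ and $t_\pm=s(q_{k,\alpha}\pm\epsilon)$ with $\epsilon=C r_D\le\varepsilon_F$, which holds for large $n$ by Assumption~\ref{ass:np}. By \eqref{eq:local-slope},
\[
F_{k,\Delta}(q_{k,\alpha}+\epsilon)\ge\alpha+c_F\epsilon,\qquad F_{k,\Delta}(q_{k,\alpha}-\epsilon)\le\alpha-c_F\epsilon.
\]
The quantity $\widehat F_{kj}(t)$ is a $U$-statistic of order two with a kernel bounded by one. Hoeffding's inequality for $U$-statistics (the average over $\lfloor n_k/2\rfloor$ disjoint pairs) gives
\[
\Prb\{|\widehat F_{kj}(t)-\E\widehat F_{kj}(t)|\ge u\}\le 2\exp(-n_k u^2/2).
\]
Now use the definition of $\qhat$ as a left-continuous generalized inverse. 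The event $\qhat_{kj,\alpha}>t_+$ implies $\widehat F_{kj}(t_+)<\alpha$, which forces a deviation of at least $c_F\epsilon$. Similarly, $\qhat_{kj,\alpha}<t_-$ implies $\widehat F_{kj}(t_-)\ge\alpha$, again a deviation of at least $c_F\epsilon$. Taking a union bound over $2p$ coordinates, two samples and two endpoints, the failure probability is at most
\[
8p\exp(-n_0c_F^2C^2r_D^2/2)=8p\exp(-c_F^2C^2x_N/2).
\]
Since $x_N=\log(8pN)$, this is at most $(8pN)^{1-c_F^2C^2/2}\to0$ once $C$ is large. Hence $\max_{k,j}|\qhat_{kj,\alpha}/d_{kj}^{1/2}-1|\le \epsilon/q_{k,\alpha}\le Cr_D/q_-$ with probability tending to one.

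\textbf{Squaring.} Since $\hat d_{kj}/d_{kj}-1=(x-1)(x+1)$ with $x=\qhat_{kj,\alpha}/d_{kj}^{1/2}$, and $x+1\le 3$ on the event above, we obtain $O_p(r_D)$ uniformly over $k$ and $j$.

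The only delicate step is the quantile inversion. It requires $\epsilon\le\varepsilon_F$ so that the local slope bound applies, and careful handling of ties and discontinuities in $\widehat F_{kj}$ through the strict versus non-strict inequalities in the generalized inverse. The rest is a routine union bound.
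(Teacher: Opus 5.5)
Your proposal is correct and follows the paper's proof essentially step for step. It uses the same marginal identification via rotation invariance (Lemma~\ref{lem:marginal-id}), the same Hoeffding bound for bounded-kernel $U$-statistics (Lemma~\ref{lem:ustat-tail}), the same two-sided quantile inversion with the local slope condition and a union bound over $k,j$ (Lemma~\ref{lem:uq}), and the same factorization $(x-1)(x+1)$ to pass from $\qhat_{kj,\alpha}$ to $\hat d_{kj}$.
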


\begin{proposition}\label{prop:derived}
Under {\color{black}\(H_0\) and} Assumptions \ref{ass:np}--\ref{ass:radius},
\begin{equation}\label{eq:Bahadur-main}
    \mD_k^{-1/2}(\widehat{\vtheta}_k-\vtheta_k)
    =\mG_k^{-1}\bar{\vS}_k+\vDelta_k,
    \qquad
    \max_{k=1,2}\norm{\vDelta_k}=O_p(q_{\theta,n}).
\end{equation}
Moreover,
\begin{equation}\label{eq:sign-pert}
    \max_{k=1,2}n_k^{-1}\sum_{i=1}^{n_k}\norm{\widehat{\vS}_{ki}-\vS_{ki}}
    =O_p(a_{S,n}),
\end{equation}
and, if \(\widehat{\mK}_1,\widehat{\mK}_2,\widehat{\mK}_3\) are formed from \(\widehat{\mD}_k,\widehat{\mG}_k\) by the analogues of \eqref{eq:K1}--\eqref{eq:K3}, then
\begin{equation}\label{eq:Khat-rate}
    \max_{r=1,2,3}\norm{\widehat{\mK}_r-\mK_r}_{\rm op}=O_p(a_{K,n}).
\end{equation}
\end{proposition}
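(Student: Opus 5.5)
We will prove Proposition~\ref{prop:derived} in three stages, in the order stated. The Bahadur representation \eqref{eq:Bahadur-main} comes first. The other two claims are then largely consequences of it together with Theorem~\ref{thm:D}.

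\emph{Step 1: working in oracle coordinates.} Write $\widehat{\mD}_k=\mD_k^{1/2}\mH_k\mD_k^{1/2}$ with $\mH_k=\diag(\hat d_{kj}/d_{kj})$. By Theorem~\ref{thm:D}, $\norm{\mH_k-\mI_p}_{\rm op}=O_p(r_D)$. Set $\vdelta=\mD_k^{-1/2}(\vtheta-\vtheta_k)$. The spatial median \eqref{eq:thetahat} then solves
\[
\sum_i U\{\mH_k^{-1/2}(\vY_{ki}-\vdelta)\}=0 ,
\]
after the usual argument showing that no data point is hit, which holds a.s.\ by continuity of $\xi\vu$. We split the score into three pieces: the oracle score $n_k^{-1}\sum_i U(\vY_{ki}-\vdelta)$; the perturbation caused by $\mH_k\ne\mI_p$; and the linearization error.

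For the oracle score we Taylor expand $U(\vy-\vdelta)=U(\vy)-\norm{\vy}^{-1}(\mI_p-U U\T)\vdelta+R(\vy,\vdelta)$. The remainder satisfies
\[
|R|\lesssim \norm{\vdelta}^2\norm{\vy}^{-2}\quad\text{when } \norm{\vdelta}\le\norm{\vy}/2,
\]
and $|R|\le 2$ otherwise. The event $\norm{\vY}\le 2\norm{\vdelta}$ is controlled by the small-ball bound \eqref{eq:smallball}, and the moment terms by \eqref{eq:inverse-radius}. Uniformly over $\norm{\vdelta}\le C\,p^{1/2}\cdot$(rate), matrix Bernstein and a chaining/net argument with $\log$-cardinality of order $x_N$ give
\[
n_k^{-1}\sum_i \norm{\vY_{ki}}^{-1}(\mI_p-\vS\vS\T)=\mG_k+O_p\{\norm{p^{-1/2}}\cdot\mathfrak C_n(x_N/n_0)^{1/2}\}.
\]
This Hessian concentration is what produces $q_{\theta,n}$. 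The $\mH_k$-perturbation term is handled the same way: $U(\mH^{-1/2}\vy)-U(\vy)$ is linear in $\mH-\mI_p$ to first order. Its sample mean is a sum of centered terms plus a mean term. By symmetry of $\vu$ (under $\vdelta=0$ the mean is odd in $\vy$), that mean term vanishes, so the contribution is of order $r_D\,n_0^{-1/2}$ times a bounded factor.

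\emph{Step 2: localization.} We first establish consistency, $\norm{\vdelta}=O_p(p^{1/2}n_0^{-1/2})$ in the scaled sense. This follows from a convexity argument: the objective is convex, and the directional derivative on the sphere $\norm{\vdelta}=t$ is positive once $t\,\lambda_{\min}(\mG_k)$ exceeds $\norm{\bar{\vS}_k}+{}$the perturbation terms. Here we use $\norm{(p^{1/2}\mG_k)^{-1}}_{\rm op}\le\mathfrak C_n^{1/4}$ and $\E\norm{\bar{\vS}_k}^2=n_k^{-1}$. Inverting the linearized score equation then gives $\vdelta=\mG_k^{-1}\bar{\vS}_k+\vDelta_k$. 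The remainder $\vDelta_k$ collects three pieces, namely $\mG_k^{-1}$ times (Hessian error)$\cdot\vdelta$, the quadratic remainder, and the $\mH$-perturbation. Each of these is $O_p(\mathfrak C_n p^{1/2}x_N/n_0)$. We expect this uniform Hessian control, with only inverse-radius moments available, to be the main obstacle. Truncation at $\norm{\vY}\ge t p^{1/2}$ together with \eqref{eq:smallball} is the tool we would try first.

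\emph{Step 3: signs and $\widehat{\mK}$.} We use the decomposition $\widehat{\vS}_{ki}-\vS_{ki}=U\{\mH^{-1/2}(\vY_{ki}-\vdelta)\}-U(\vY_{ki})$, with
\[
\norm{U(\vx)-U(\vy)}\le 2\norm{\vx-\vy}/\norm{\vy}.
\]
Averaging gives a bound of the form
\[
\lesssim r_D+\norm{\vdelta}\,n_k^{-1}\sum_i\norm{\vY_{ki}}^{-1}=O_p\bigl(r_D+p^{1/2}\norm{\vdelta}M_{Y,n}p^{-1/2}\bigr),
\]
and $\norm{\vdelta}=O_p(\mathfrak C_n^{1/4}n_0^{-1/2}+q_{\theta,n})$, which yields $a_{S,n}$. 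For \eqref{eq:Khat-rate}, $\widehat{\mA}_{12}-\mA_{12}$ is diagonal with operator norm $O_p(r_D)$ by Theorem~\ref{thm:D}. For $p^{1/2}(\widehat{\mG}_k-\mG_k)$ we combine three contributions: concentration of the oracle average at rate $(x_N/n_0)^{1/2}$ with the $p^2\E\norm{\vY}^{-4}$ moment; the change $\norm{\widehat{\vY}}^{-1}-\norm{\vY}^{-1}$, which is bounded through $r_D$ and $\norm{\vdelta}$; and the sign change, controlled by $a_{S,n}$. After rescaling by $\mD_k$, which cancels in the $\mK_r$ up to $\mH$, we use the resolvent identity
\[
\widehat{\mG}^{-1}-\mG^{-1}=-\widehat{\mG}^{-1}(\widehat{\mG}-\mG)\mG^{-1}.
\]
Since every $\mK_r$ is a product of boundedly many factors of the form $p^{1/2}\mG$, $(p^{1/2}\mG)^{-1}$ and $\mA$, each bounded by $\mathfrak C_n^{1/4}$, the product rule gives the rate $a_{K,n}$.
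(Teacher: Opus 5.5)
Your outline follows the paper's route closely: oracle coordinates, linearization of the score about $\vdelta=\bm0$ with the small-ball bound for the remainder, truncated matrix Bernstein for the Jacobian, inversion through $\lambda_{\min}(p^{1/2}\mG_k)\ge\mathfrak C_n^{-1/4}$, then the averaged sign bound and the resolvent identity. Your convexity localization is a useful addition that the paper leaves implicit. There is, however, a genuine gap at the step that produces $q_{\theta,n}$: the effect of the estimated diagonal on the score at the true center.

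You treat the average of $U(\mH_k^{-1/2}\vY_{ki})-U(\vY_{ki})$ as ``a sum of centered terms plus a mean term'' and conclude it is $O_p(r_Dn_0^{-1/2})$. That is valid only for deterministic $\mH_k$. Here $\mH_k$ is built from the same sample, so the summands are neither independent nor individually centered. The mean does vanish, since flipping all $\vY_{ki}$ at once leaves $\widehat{\mD}_k$ unchanged, but the issue is the fluctuation. The crude bound $\norm{U(\mH_k^{-1/2}\vY)-U(\vY)}\le Cr_D$ gives only $O_p(r_D)$. After multiplying by $\norm{\mG_k^{-1}}_{\rm op}$, which is of order $p^{1/2}$, this exceeds $q_{\theta,n}$ by a factor of order $(n_0/x_N)^{1/2}$.

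Your ``handled the same way'' net argument does not rescue this. The admissible diagonal matrices form a $p$-dimensional set, so a net has log-cardinality of order $p$, not $x_N$. A union bound then gives only $r_D(p/n_0)^{1/2}$, hence $p\,x_N^{1/2}/n_0$ after applying $\mG_k^{-1}$, which is not $O(q_{\theta,n})$ unless $p\lesssim x_N$. For $\vdelta$ the net is simply unnecessary, because the linearization remainder is bounded by quantities monotone in $\norm{\vdelta}$.

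The missing idea is a way to decouple $\widehat{\mD}_k$ from the sample. The paper does this in three steps.
\begin{enumerate}
\item It linearizes $U(\mH_k^{-1/2}\vY_{ki})-U(\vY_{ki})=\mL(\vY_{ki})\diag(\vY_{ki})\widehat{\ve}_k+O(r_D^2)$, where $\widehat e_{kj}=(\hat d_{kj}/d_{kj})^{-1/2}-1$. The matrix $\mL(\vY)\diag(\vY)$ is odd in $\vY$ and satisfies $\norm{\mL(\vY)\diag(\vY)\va}\le\norm{\va}_\infty$.
\item It uses a Bahadur representation of the pairwise-difference $U$-quantile, $\widehat e_{kj}=n_k^{-1}\sum_\ell\varphi_{kj}(Y_{k\ell,j})+\rho_{kj}$. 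Here $\varphi_{kj}$ is even (because $Y_{k1,j}$ is symmetric) and $\max_j\abs{\rho_{kj}}=O_p(x_N/n_k)$.
\item The double sum $n_k^{-2}\sum_{i,\ell}\mL(\vY_{ki})\diag(\vY_{ki})\varphi_k(\vY_{k\ell})$ then has second moment $O(n_k^{-2})$. Independence and the odd/even parity kill every unpaired moment, including the diagonal terms $i=\ell$, where $\widehat{\ve}_k$ and $\vY_{ki}$ are dependent. The remainder contributes at most $\norm{\bm{\rho}_k}_\infty$.
\end{enumerate}
Hence the score at $\vdelta=\bm0$ equals $\bar{\vS}_k+O_p(r_Dn_k^{-1/2}+x_N/n_k)$, and only then does multiplying by $\norm{\mG_k^{-1}}_{\rm op}$ yield $q_{\theta,n}$. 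An $O_p(n_k^{-1/2})$ bound on $\sup_{\norm{\va}_\infty\le1}\norm{n_k^{-1}\sum_i\mL(\vY_{ki})\diag(\vY_{ki})\va}$ would also suffice, but your outline supplies neither device.

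The rest is fine up to bookkeeping. You need $\norm{\vdelta}=O_p\{\mathfrak C_n^{1/4}(p/n_k)^{1/2}+q_{\theta,n}\}$, not $\mathfrak C_n^{1/4}n_0^{-1/2}+q_{\theta,n}$. You must also drop the extra $p^{1/2}$ in your sign bound for the result to come out as $O_p(a_{S,n})$.
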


\begin{theorem}\label{thm:oracle}
{\color{black}Under} \(H_0\), Assumptions \ref{ass:np}, \ref{ass:elliptical}, \ref{ass:radius}, and \ref{ass:spectral}.  Let
\begin{equation}\label{eq:Gamma}
    \Gamma_n=\sum_{r=1}^{2p}\lambda_{nr}(\chi_{1r}^2-1),
\end{equation}
where the \(\chi_{1r}^2\)'s are independent \(\chi_1^2\) random variables.  Then
\begin{equation}\label{eq:oracle-law}
    \sup_{t\in\R}
    \abs{
    \Prb\left(\frac{U_n}{\tau_n}\le t\right)
    -
    \Prb\left(\frac{\Gamma_n}{\tau_n}\le t\right)}\to0.
\end{equation}
Also,
\begin{equation}\label{eq:limit-mixture}
    \frac{U_n}{\tau_n}
    \dd
    \frac1{\sqrt2}\sum_{r=1}^{\infty}\rho_r(\chi_{1r}^2-1)+\rho_0Z,
\end{equation}
where \(Z\sim N(0,1)\) is independent of the chi-square variables.
\end{theorem}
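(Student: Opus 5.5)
Theorem~\ref{thm:oracle} concerns only the oracle statistic $U_n$, a degenerate generalized $U$-statistic in the i.i.d.\ signs $\vS_{ki}$, which have mean zero under ellipticity. The plan is to rewrite $U_n$ as a single canonical quadratic form in whitened signs. Set $\vZ_{1i}=\mOmega_1^{-1/2}\vS_{1i}$ and $\vZ_{2j}=\mOmega_2^{-1/2}\vS_{2j}$. These are isotropic: $\E\vZ=0$ and $\E\vZ\vZ\T=\mI_p$. Stack them into $2p$-vectors $\vv_{m}$, $m=1,\ldots,N$, with $\vv_m=(\vZ_{1i}\T,0)\T$ for first-sample indices and $(0,\vZ_{2j}\T)\T$ for second-sample indices. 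With suitable weights $w_m$ ($n_1^{-1/2}$ or $n_2^{-1/2}$), $U_n=\sum_{m\ne m'}w_mw_{m'}\vv_m\T\mB_n\vv_{m'}$ up to the factor conventions in \eqref{eq:Bfull}. The cross term $-\bar\vS_1\T\mK_3\bar\vS_2$ is split symmetrically into the two off-diagonal blocks, which explains the $(2\sqrt{n_1n_2})^{-1}$. Because the vectors are independent but not identically distributed, I would compare $U_n$ with the Gaussian analogue obtained by replacing $\vZ_{ki}$ with i.i.d.\ $N(0,\mI_p)$ vectors $\vg_{ki}$. After the same diagonal deletion, the Gaussian analogue equals in law $\sum_r\lambda_{nr}(\chi^2_{1r}-1)$ plus a diagonal-fluctuation term. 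That term has variance of order $n_0^{-1}\tr(\mB_n^2)$ relative to $\tau_n^2$, so it is negligible.

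The key step is a Lindeberg replacement for degenerate quadratic forms in the style of Rotar', Mossel--O'Donnell--Oleszkiewicz and Chatterjee. Replace the $\vv_m$ by Gaussian copies one at a time and apply a third-order Taylor expansion to a smooth test function $f\in C^3_b$ of $U_n/\tau_n$. Write $U_n=A_m+\vv_m\T\vh_m$, where $\vh_m$ collects the interactions with the other vectors and is independent of $\vv_m$. Because first and second moments match, the one-step error is bounded by $\|f'''\|_\infty\tau_n^{-3}\E|\vv_m\T\vh_m|^3$. By Cauchy--Schwarz this is at most $\tau_n^{-3}\{\E(\vv_m\T\vh_m)^4\}^{3/4}$. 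The fourth moment conditional on $\vh_m$ is controlled by the bound $\sup_{\norm{\va}=1}\E(\va\T\vZ)^4\le\mathfrak C_n$ in \eqref{eq:Cn}, giving $\E(\vv_m\T\vh_m)^4\lesssim\mathfrak C_n\E\norm{\vh_m}^4$. The norm $\E\norm{\vh_m}^2$ is a weighted row quantity, $w_m^2\sum_{m'}w_{m'}^2\norm{\mB_n\text{-row}}^2$, which is exactly what $\Delta_{{\rm row},n}\tau_n^2$ bounds, in operator form, through the four terms in \eqref{eq:Delta-row}. Summing over $m$ should give a total error of order $\mathfrak C_n^{c}\Delta_{{\rm row},n}^{1/2}$ or $\delta_{{\rm dJ},n}$. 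By \eqref{eq:spectral-rates} this tends to zero. The fourth moment of $\vh_m$ needs a hypercontractivity-type bound: expand $\norm{\vh_m}^4$ as a degenerate sum and use the same fourth-moment bound again.

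The Lindeberg step gives convergence for smooth test functions. To upgrade it to the Kolmogorov distance in \eqref{eq:oracle-law}, I would show that $\Gamma_n/\tau_n$ has a density bounded uniformly in $n$, or at least a uniformly small anti-concentration. By Assumption~\ref{ass:spectral}, $\Gamma_n/\tau_n$ converges in law to the mixture in \eqref{eq:limit-mixture}. That limit is continuous whenever $\rho_0>0$ or at least one $\rho_r\ne0$, and this holds because $\sum_r\rho_r^2+\rho_0^2=1$. Pólya-type uniform convergence then converts weak convergence into sup-norm convergence. The limit statement \eqref{eq:limit-mixture} itself follows from characteristic functions: $\E e^{is\Gamma_n/\tau_n}=\prod_r(1-2is\lambda_{nr}/\tau_n)^{-1/2}e^{-is\lambda_{nr}/\tau_n}$. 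Split the factors into those with $r\le R$, whose $\rho_{nr}$ converge to $\rho_r$, and the tail. The tail has vanishing maximal coefficient and total variance $\rho_0^2+o(1)$ as $R\to\infty$, so it produces the Gaussian factor $e^{-\rho_0^2s^2/2}$. The $\ell_2$ convergence in \eqref{eq:l2} justifies exchanging the limits. Combining this with the Lindeberg bound proves both displays.

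I expect the main obstacle to be the Lindeberg error bound. Turning the fourth-moment control of $\norm{\vh_m}$ into exactly the row-leverage quantity $\Delta_{{\rm row},n}$ requires careful handling of the non-symmetric $\mK_3$ block, and of the fact that the $\vZ$ are only uncorrelated, not independent, across coordinates. I would try first to bound $\E\norm{\vh_m}^4\le3(\E\norm{\vh_m}^2)^2+\mathfrak C_n\cdot(\text{operator-norm term})$ using the projection structure of the $\vS_{ki}$.
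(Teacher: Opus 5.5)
Your proposal is correct in outline, but it reaches the key Gaussian-comparison step by a different route from the paper. The paper first computes $\Var(U_n)/\tau_n^2=1+O(n_0^{-1})$ and bounds the zero-diagonal row sums by $\Delta_{{\rm row},n}$. It then imports the comparison wholesale from Lemma~\ref{lem:dejong-tool}, with $h_n\le C\Delta_{{\rm row},n}$ and $d_n\le C\delta_{{\rm dJ},n}$; that lemma is stated directly in Kolmogorov distance. Finally it strips the Gaussian diagonal correction (variance at most $C\tau_n^2n_0^{-1}$), diagonalizes $\mB_n$, and cites \eqref{eq:l2} for \eqref{eq:limit-mixture}. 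Your whitened stacking reproduces exactly the block structure of $\mB_n$. Your diagonal-fluctuation and characteristic-function steps coincide with the paper's final steps and flesh them out. What differs is that you prove the comparison yourself by a Lindeberg swap. That only gives closeness for $C^3_b$ test functions, so you then reach the Kolmogorov distance through continuity of the mixture limit (guaranteed by $\rho_0^2+\sum_r\rho_r^2=1$) and P\'olya's theorem. This makes the argument self-contained, and it uses the tool that naturally accommodates a non-Gaussian weighted chi-square target. The cost is that the Kolmogorov statement becomes purely qualitative and leans on the $\ell_2$ spectral convergence, whereas the paper's lemma asserts an explicit modulus $\omega_{\rm dJ}(h_n+d_n)$.

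One estimate should be changed to match the paper's rate conditions. The bound $\E\abs{\vv_m\T\vh_m}^3\le\{\E(\vv_m\T\vh_m)^4\}^{3/4}$, summed over $m$, yields $\mathfrak C_n^{3/4}\Delta_{{\rm row},n}^{1/2}$. This is not implied by $\Delta_{{\rm row},n}\to0$ and $\delta_{{\rm dJ},n}=\mathfrak C_n\Delta_{{\rm row},n}\to0$; for example, take $\mathfrak C_n=\Delta_{{\rm row},n}^{-4/5}$. Use instead $\E\abs{X}^3\le(\E X^2)^{1/2}(\E X^4)^{1/2}$ together with $\E(\vv_m\T\vh_m)^2=\E\norm{\vh_m}^2$, which holds by isotropy (here $\vh_m$ is the block paired with $\vv_m$).

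Write $\vh_m=\sum_{m'\ne m}\bm{\eta}_{m'}$ with independent mean-zero summands. Then $\E\norm{\vh_m}^4\le3(\E\norm{\vh_m}^2)^2+\sum_{m'}\E\norm{\bm{\eta}_{m'}}^4$, and the quartic sum is of smaller order. For the sign summands this is because $\norm{\vS_{ki}}=1$ bounds them deterministically. For example, the $(1,1)$ block of $\mB_n$ applied to $\mOmega_1^{-1/2}\vS_{1\ell}$ is $n_1^{-1}\mOmega_1^{1/2}\mK_1\vS_{1\ell}$, of norm at most $n_1^{-1}\norm{\mK_1\mOmega_1\mK_1}_{\rm op}^{1/2}$. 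Gaussian summands satisfy $\E\norm{\bm{\eta}}^4\le3(\E\norm{\bm{\eta}}^2)^2$. So no hypercontractivity is needed.

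Now $\max_m\E\norm{\vh_m}^2\le C\Delta_{{\rm row},n}\tau_n^2$, using $\tr(\mOmega_k)=1$ to pass from traces to the operator norms in \eqref{eq:Delta-row}. Also $\sum_m\E\norm{\vh_m}^2=2\Var(U_n)\le C\tau_n^2$. Hence the total swap error is $O(\mathfrak C_n^{1/2}\Delta_{{\rm row},n}^{1/2})=O(\delta_{{\rm dJ},n}^{1/2})\to0$, which is exactly what Assumption~\ref{ass:spectral} supplies.
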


Define the feasible diagonal-deleted statistic
\begin{equation}\label{eq:Tpdq}
    T_n^{\rm PDQ}=\hat R_n^{\rm PDQ}-\hat b_n,
\end{equation}
where
\begin{equation}\label{eq:bhat}
    \hat b_n
    =\frac1{n_1}\tr(\widehat{\mK}_1\widehat{\mOmega}_1)
     +\frac1{n_2}\tr(\widehat{\mK}_2\widehat{\mOmega}_2)
    =\sum_{k=1}^2\frac1{n_k^2}\sum_{i=1}^{n_k}
       \widehat{\vS}_{ki}\T\widehat{\mK}_k\widehat{\vS}_{ki}.
\end{equation}
Here \(\hat b_n\) is the empirical diagonal-deletion correction: subtracting it removes the within-sample diagonal terms in the full-sample quadratic expansion, rather than requiring \(\hat b_n\) to estimate \(b_n\) at the \(\tau_n\) scale.

\begin{theorem}\label{thm:feasible}
{\color{black}Under} \(H_0\) and Assumptions \ref{ass:np}--\ref{ass:spectral}.  Then
\begin{equation}\label{eq:main-expansion}
    \frac{T_n^{\rm PDQ}-U_n}{\tau_n}
    =O_p\left(r_D+a_{K,n}+L_{K,n}a_{S,n}+\frac{x_N}{n_0^2\tau_n}\right).
\end{equation}
Consequently,
\begin{equation}\label{eq:full-law}
    \sup_{t\in\R}
    \abs{
    \Prb\left(\frac{T_n^{\rm PDQ}}{\tau_n}\le t\right)
    -
    \Prb\left(\frac{\Gamma_n}{\tau_n}\le t\right)}\to0.
\end{equation}
\end{theorem}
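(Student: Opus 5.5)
The plan is to expand each summand of $\hat R_n^{\rm PDQ}$ around the oracle signs and then compare with $U_n$. Under $H_0$ write $\vtheta_1=\vtheta_2=\vtheta$. For the first sample,
\[
\widehat{\mD}_1^{-1/2}(\vX_{1i}-\widehat{\vtheta}_2)
=\widehat{\mD}_1^{-1/2}\mD_1^{1/2}\{\vY_{1i}-\mA_{12}\mD_2^{-1/2}(\widehat{\vtheta}_2-\vtheta)\}.
\]
By Theorem~\ref{thm:D}, the prefactor $\widehat{\mD}_1^{-1/2}\mD_1^{1/2}$ equals $\mI_p+O_p(r_D)$ in operator norm. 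By Proposition~\ref{prop:derived},
\[
\mD_2^{-1/2}(\widehat{\vtheta}_2-\vtheta)=\mG_2^{-1}\bar{\vS}_2+\vDelta_2 .
\]
I will use the second-order expansion of the sign map,
\[
U(\vy-\vh)=U(\vy)-\norm{\vy}^{-1}(\mI_p-U(\vy)U(\vy)\T)\vh+\vr(\vy,\vh),
\qquad
\norm{\vr}\lesssim\norm{\vh}^2/\norm{\vy}^2 ,
\]
with the analogous expansion for the second sample.

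Substituting and averaging over $i$ and $j$, the averaged Jacobian terms $n_k^{-1}\sum_i\norm{\vY_{ki}}^{-1}(\mI_p-\vS_{ki}\vS_{ki}\T)$ are replaced by $\mG_k$, with an $O_p(n_0^{-1/2})$ error in operator norm (up to $\mathfrak C_n$). The bilinear product $-\{n_1^{-1}\sum_i(\cdot)\}\T\{n_2^{-1}\sum_j(\cdot)\}$ then splits into three pieces. The first is a cross term $-\bar{\vS}_1\T\bar{\vS}_2$. The second consists of linear terms such as $\bar{\vS}_1\T\mG_1\mA_{12}\mG_2^{-1}\bar{\vS}_2$ and $\bar{\vS}_1\T\mG_1^{-1}\mA_{21}\T\ldots$. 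The third consists of quadratic terms such as $-\bar{\vS}_1\T\mG_1\mA_{12}\mG_2^{-1}\cdot\mG_1^{-1}\ldots$.

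I then collect the quadratic forms and symmetrize. Matching terms with \eqref{eq:K1}--\eqref{eq:K3} should reproduce exactly $Q_n=\bar{\vS}_1\T\mK_1\bar{\vS}_1+\bar{\vS}_2\T\mK_2\bar{\vS}_2-\bar{\vS}_1\T\mK_3\bar{\vS}_2$; the definitions of $\mM_1,\mM_2,\mC_{12}$ are tailored to this. The remainder consists of several contributions:
\begin{enumerate}
\item[(i)] Terms carrying the factor $\widehat{\mD}_k^{-1/2}\mD_k^{1/2}-\mI_p$. Since this factor is diagonal and multiplies signs that are themselves rescaled, these contribute $O_p(r_D)\cdot O_p(\tau_n)$.
\item[(ii)] Terms in $\vDelta_k$. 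Using $\norm{\bar{\vS}_k}=O_p(n_0^{-1/2})$, $\tau_n\gtrsim n_0^{-1}\norm{\mB_n}_F$, and $q_{\theta,n}$, these are absorbed into $x_N/(n_0^2\tau_n)$ after dividing by $\tau_n$.
\item[(iii)] Second-order remainders $\vr$, controlled by the inverse-radius moments in Assumption~\ref{ass:radius}. Their size is $p\E\norm{\vY}^{-2}\,\norm{\mD^{-1/2}(\widehat{\vtheta}-\vtheta)}^2$, which again is $O_p(x_N/n_0^2)$ up to $\mathfrak C_n$.
\end{enumerate}

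Next I handle the correction term. Write
\[
\hat b_n-\tilde b_n=\sum_k n_k^{-2}\sum_i\bigl(\widehat{\vS}_{ki}\T\widehat{\mK}_k\widehat{\vS}_{ki}-\vS_{ki}\T\mK_k\vS_{ki}\bigr).
\]
Each summand is bounded by $\norm{\widehat{\mK}_k-\mK_k}_{\rm op}+2L_{K,n}\norm{\widehat{\vS}_{ki}-\vS_{ki}}$. By \eqref{eq:sign-pert} and \eqref{eq:Khat-rate}, this gives $\hat b_n-\tilde b_n=O_p(n_0^{-1}(a_{K,n}+L_{K,n}a_{S,n}))$. To get the result relative to $\tau_n$, I need $n_0^{-1}\lesssim\tau_n$ up to constants. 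This holds because $\tr(\mOmega_k)=1$ and $\mK_3$ is close to $2\mI_p$ in the sense that the cross block has Frobenius norm of order $n_0^{-1}\norm{\mOmega_1^{1/2}\mK_3\mOmega_2^{1/2}}_F$. Where this crude comparison fails, I will instead keep the correction in the form $n_0^{-1}\times$rate and argue via the row-leverage quantity $\Delta_{{\rm row},n}$.

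Combining, $T_n^{\rm PDQ}-U_n=(\hat R_n^{\rm PDQ}-Q_n)-(\hat b_n-\tilde b_n)$, which gives \eqref{eq:main-expansion}. Finally, \eqref{eq:full-law} follows from Theorem~\ref{thm:oracle} together with a Slutsky-type argument in Kolmogorov distance: the rates vanish by \eqref{eq:spectral-rates}, and the limit law in \eqref{eq:limit-mixture} is continuous (it is a nondegenerate weighted chi-square/normal mixture with unit variance), so a perturbation of order $o_p(1)$ changes the supremum distance by $o(1)$.

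The main obstacle is the second-order remainder. Because $\vh$ in the sign expansion is random and shared across all $i$, the remainder $n_1^{-1}\sum_i\vr(\vY_{1i},\vh_2)$ must be bounded at order $\norm{\vh}^2\, n_1^{-1}\sum_i\norm{\vY_{1i}}^{-2}$. It must then be paired against the other sample's average, which is only $O_p(n_0^{-1/2})$, rather than crudely against a unit vector. I plan to bound it by conditioning on the other sample and using independence, plus the small-ball bound \eqref{eq:smallball} to control observations with very small radius, where the Taylor expansion fails. For those observations I will use the trivial bound $\norm{U(\cdot)}\le1$ on the event $\norm{\vY_{ki}}\le\norm{\vh}$.
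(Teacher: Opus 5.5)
Your architecture (sign expansion, Bahadur plug-in, matching the quadratic part to $Q_n$ via \eqref{eq:K1}--\eqref{eq:K3}, then Theorem~\ref{thm:oracle} plus continuity of the limit) is the paper's. The gap is in how you control the within-sample diagonal (mean-level) terms.

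You bound $\hat b_n-\tilde b_n$ by $O_p\{n_0^{-1}(a_{K,n}+L_{K,n}a_{S,n})\}$ and then need $n_0^{-1}\lesssim\tau_n$. That inequality is false outside spiked regimes. The quantity $\tau_n^2$ is built from Frobenius-type terms such as $n_1^{-2}\tr\{(\mOmega_1^{1/2}\mK_1\mOmega_1^{1/2})^2\}$, which are governed by $\tr(\mOmega_k^2)$, not by $\tr(\mOmega_k)=1$. In the bounded-spectrum case $\mOmega_k\asymp p^{-1}\mI_p$, so $\tau_n\asymp n_0^{-1}p^{-1/2}$ (Appendix~\ref{app:assumption-comments}). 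Your bound then only gives $(\hat b_n-\tilde b_n)/\tau_n=O_p\{p^{1/2}(a_{K,n}+L_{K,n}a_{S,n})\}$, of order $(px_N/n_0)^{1/2}$. This does not vanish in the range $p=o\{n_0^2/\log^2(p+n_0)\}$ that the paper allows.

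The same scale mismatch undermines your bound on $\hat R_n^{\rm PDQ}-Q_n$. Both the $O_p(r_D)$ factor in item (i) and the replacement of $n_k^{-1}\sum_i\mL(\vY_{ki})$ by $\mG_k$ perturb the kernel. A kernel perturbation acts on the diagonal part $n_k^{-2}\sum_i\vS_{ki}\T(\cdot)\vS_{ki}$, whose size is $n_0^{-1}$, not $\tau_n$. This is not just a loose bound. For non-Gaussian radial laws, the scalar part of $n_k^{-1}\sum_i\mL(\vY_{ki})-\mG_k$ has relative order $n_0^{-1/2}$. Comparing the empirical-kernel form with $Q_n$ term by term can therefore incur errors of order $n_0^{-3/2}\asymp(p/n_0)^{1/2}\tau_n$ in the bounded-spectrum case. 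Only the matching diagonal part of $\hat b_n$ cancels these.

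Your fallback through $\Delta_{{\rm row},n}$ cannot repair this. That quantity measures the row leverage of the off-diagonal kernel relative to $\tau_n^2$ (the input to Lemma~\ref{lem:dejong-tool}). It says nothing about a shift in the diagonal level.

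What is missing is the diagonal-deletion device the paper relies on. Do not compare $\hat R_n^{\rm PDQ}$ with $Q_n$ and $\hat b_n$ with $\tilde b_n$ separately. Instead:
\begin{enumerate}
\item Write the quadratic expansion of $\hat R_n^{\rm PDQ}$ with the plug-in kernels $\widehat{\mK}_r$. By the identity in \eqref{eq:bhat}, its within-sample diagonal terms are then exactly the ones that $\hat b_n$ removes.
\item Cancel them \emph{before} bounding any remainder.
\item Only then compare the remaining off-diagonal, degenerate sums with those of $U_n$.
\end{enumerate}
For degenerate sums, an operator-norm kernel error or an averaged sign error is multiplied by the fluctuation scale $\tau_n$ rather than the mean scale $n_0^{-1}$. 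This is how $a_{K,n}$ and $L_{K,n}a_{S,n}$ enter \eqref{eq:main-expansion}. It is also why the paper needs no condition of the form $\hat b_n-b_n=o_p(\tau_n)$, which your route implicitly imposes.

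A smaller problem of the same kind affects item (ii). Combining $\norm{\vDelta_k}=O_p(q_{\theta,n})$ with $\norm{\mG_k}_{\rm op}\le\mathfrak C_n^{1/4}p^{-1/2}$ and averaged signs of size $O_p(n_0^{-1/2})$ gives $O_p(x_Nn_0^{-3/2})$, up to powers of $\mathfrak C_n$. That is a factor $n_0^{1/2}$ larger than the $x_N/n_0^2$ you claim, so this step also needs more than a norm bound.
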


\begin{corollary}\label{cor:normal}
Under the assumptions of Theorem \ref{thm:feasible}, if
\begin{equation}\label{eq:no-dom}
    \lambda_{\max}^2(\mB_n)/\tr(\mB_n^2)\to0,
\end{equation}
then
\[
        T_n^{\rm PDQ}/\tau_n\dd N(0,1).
\]
\end{corollary}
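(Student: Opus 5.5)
The plan is to reduce the corollary to Theorem~\ref{thm:feasible} together with the spectral structure in Assumption~\ref{ass:spectral}. By \eqref{eq:full-law}, $T_n^{\rm PDQ}/\tau_n$ and $\Gamma_n/\tau_n$ have uniformly asymptotically equivalent distribution functions. It therefore suffices to show $\Gamma_n/\tau_n\dd N(0,1)$. The normal limit is continuous, so Pólya's theorem turns this into uniform convergence of distribution functions, and the triangle inequality then transfers it to $T_n^{\rm PDQ}/\tau_n$.

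For the Gaussian limit of $\Gamma_n/\tau_n$, write
\[
\Gamma_n/\tau_n=\sum_{r=1}^{2p}w_{nr}(\chi^2_{1r}-1),\qquad w_{nr}=\lambda_{nr}/\tau_n=\rho_{nr}/\sqrt2 .
\]
The summands are independent and centred, with total variance $2\sum_r w_{nr}^2=2\tr(\mB_n^2)/\tau_n^2=1$ by \eqref{eq:tau-def}. I would verify the Lyapunov condition with fourth moments. Since $\E(\chi^2_1-1)^4=60$,
\[
\sum_r w_{nr}^4\E(\chi^2_{1r}-1)^4\le 60\max_r w_{nr}^2\sum_r w_{nr}^2=\frac{15}{2}\,\frac{\lambda_{\max}^2(\mB_n)}{\tr(\mB_n^2)},
\]
where $\lambda_{\max}$ is understood as the largest eigenvalue in absolute value, since $\mB_n$ need not be positive semidefinite. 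This bound tends to zero by \eqref{eq:no-dom}, so the Lindeberg--Feller central limit theorem gives $\Gamma_n/\tau_n\dd N(0,1)$.

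An alternative route goes through \eqref{eq:limit-mixture}. Condition \eqref{eq:no-dom} gives $\max_r|\rho_{nr}|\to0$. Together with $\ell_2$ convergence $\rho_n\to\rho$ in \eqref{eq:l2}, this forces $|\rho_r|\le|\rho_{nr}|+\|\rho_n-\rho\|_2\to0$ for every $r$, so $\rho_r=0$ for all $r$ and $\rho_0=1$. The limit is then $Z$. This also gives consistency with Theorem~\ref{thm:oracle}.

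No real obstacle is expected. The one point needing care is that eigenvalues of $\mB_n$ may be negative, so \eqref{eq:no-dom} must control $\max_r\lambda_{nr}^2$ rather than only the top eigenvalue. I would read it that way, or note that the argument above only uses $\max_r\lambda_{nr}^2$.
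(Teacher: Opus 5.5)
Your proof is correct and follows the paper's route: transfer via \eqref{eq:full-law} to $\Gamma_n/\tau_n$, then prove a CLT for the weighted chi-square sum. The only difference is that the paper uses the Berry--Esseen bound of Lemma~\ref{lem:be-chisq}, which gives the explicit rate $C\max_r|\lambda_{nr}|/\{\tr(\mB_n^2)\}^{1/2}$ directly in sup-norm and so needs no P\'olya step, whereas you use Lyapunov's condition; your reading of \eqref{eq:no-dom} as $\max_r\lambda_{nr}^2/\tr(\mB_n^2)\to0$ matches the paper's, and your Lyapunov constant should be $15$ rather than $15/2$, which is immaterial.
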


We finally introduce the Rademacher wild bootstrap to calibrate the feasible test. In the present problem, the null distribution of the centered statistic is generally a non-Gaussian weighted chi-square mixture under strong dependence, and its spectral weights depend on unknown population quantities. Although the preceding theory characterizes the limiting law, direct implementation based on that law is not convenient in practice. The bootstrap step therefore serves two purposes: it avoids explicit estimation of the unknown limiting distribution, and it remains valid both in regimes where the limit is asymptotically normal and in regimes where the leading eigenvalues are non-negligible. The Rademacher scheme is particularly convenient here because it preserves the diagonal-deleted quadratic-form structure of the statistic and yields a simple conditional approximation to the distribution of \(T_n^{\rm PDQ}=\hat R_n^{\rm PDQ}-\hat b_n\).

Let \(e_{ki}\), \(k=1,2\), \(i=1,\ldots,n_k\), be independent of the data and mutually independent Rademacher variables,
\begin{equation}\label{eq:rademacher}
    \Prstar(e_{ki}=1)=\Prstar(e_{ki}=-1)=\frac12.
\end{equation}
Then \(\Estar e_{ki}=0\), \(\Estar e_{ki}^2=1\), \(\Estar e_{ki}^3=0\), and \(\Estar e_{ki}^4=1\), where \(\Estar\) denotes expectation conditional on the data \(\calX\).  Define
\begin{equation}\label{eq:barSstar}
    \bar{\vS}_k^*=\frac1{n_k}\sum_{i=1}^{n_k}e_{ki}\widehat{\vS}_{ki},
    \qquad k=1,2,
\end{equation}
and
\begin{equation}\label{eq:Qstar}
    Q_n^*
    =\bar{\vS}_1^{*\T}\widehat{\mK}_1\bar{\vS}_1^*
     +\bar{\vS}_2^{*\T}\widehat{\mK}_2\bar{\vS}_2^*
     -\bar{\vS}_1^{*\T}\widehat{\mK}_3\bar{\vS}_2^*.
\end{equation}
Then
\begin{equation}\label{eq:Qstar-mean}
    \Estar Q_n^*=\hat b_n,
\end{equation}
and the centered bootstrap statistic is
\begin{equation}\label{eq:Tstar}
    T_n^*=Q_n^*-\hat b_n.
\end{equation}

Write \(Q_n^*=\ve\T\widehat{\mH}_n\ve\).  The symmetric \(N\times N\) matrix \(\widehat{\mH}_n=(\hat a_{ab})\) has blocks
\begin{align}\label{eq:An-blocks}
    \hat a_{1i,1\ell}&=n_1^{-2}\widehat{\vS}_{1i}\T\widehat{\mK}_1\widehat{\vS}_{1\ell},
    &\hat a_{2j,2\ell}&=n_2^{-2}\widehat{\vS}_{2j}\T\widehat{\mK}_2\widehat{\vS}_{2\ell},\\
    \hat a_{1i,2j}&=\hat a_{2j,1i}
    =-(2n_1n_2)^{-1}\widehat{\vS}_{1i}\T\widehat{\mK}_3\widehat{\vS}_{2j}.\nonumber
\end{align}
Let \(\widehat{\mH}_{n,0}\) be \(\widehat{\mH}_n\) with its diagonal set to zero.

\begin{proposition}\label{prop:boot-rates}
Under Assumptions \ref{ass:np}--\ref{ass:spectral},
\begin{align}
    \frac{\max_{1\le a\le N}\sum_{b\ne a}\hat a_{ab}^2}{\tau_n^2}
        &=O_p(\delta_{H,n}),\label{eq:boot-row}\\
    \frac{2\tr(\widehat{\mH}_{n,0}^2)}{\tau_n^2}
        &=1+O_p(\delta_{H,n}).\label{eq:boot-var}
\end{align}
Consequently, for \(\hat\tau_{*,n}^2=2\tr(\widehat{\mH}_{n,0}^2)\),
\begin{equation}\label{eq:hattau-star-rate}
    \abs{\hat\tau_{*,n}/\tau_n-1}=O_p(\delta_{H,n}^{1/2}+\delta_{H,n}).
\end{equation}
\end{proposition}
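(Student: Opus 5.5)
The plan is to compare \(\widehat{\mH}_n\) with its oracle counterpart \(\mH_n=(a_{ab})\). This matrix has the same block form \eqref{eq:An-blocks}, but with \(\vS_{ki}\) and \(\mK_r\) in place of \(\widehat{\vS}_{ki}\) and \(\widehat{\mK}_r\). We then treat the oracle quantities by second-moment calculations. Note that \(U_n=\vS\T\)-type quadratic form \(=\ve_0\T\mH_{n,0}\ve_0\) with all multipliers equal to one. Writing \(\mH_{n,0}\) for the diagonal-deleted oracle matrix, the first step is the oracle versions of \eqref{eq:boot-row} and \eqref{eq:boot-var}.

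\textbf{Oracle row bound.} For a row \(a=1i\),
\[
\sum_{b\ne a}a_{ab}^2=n_1^{-4}\sum_{\ell\ne i}(\vS_{1i}\T\mK_1\vS_{1\ell})^2+(4n_1^2n_2^2)^{-1}\sum_j(\vS_{1i}\T\mK_3\vS_{2j})^2 .
\]
Condition on \(\vS_{1i}\). The sum over \(\ell\) has mean at most \(n_1^{-3}\vS_{1i}\T\mK_1\mOmega_1\mK_1\vS_{1i}\le n_1^{-3}\norm{\mK_1\mOmega_1\mK_1}_{\rm op}\), since \(\norm{\vS_{1i}}=1\). The cross sum is bounded similarly by \(n_1^{-2}n_2^{-1}\norm{\mK_3\mOmega_2\mK_3\T}_{\rm op}\). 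The \(2j\) rows are handled in the same way. Dividing by \(\tau_n^2\) gives exactly the four terms of \(\Delta_{{\rm row},n}\).

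To pass from the expectation to the maximum over the \(N\) rows, I would bound the fluctuation of each row sum around its conditional mean. The fourth-moment control \(\sup_{\norm{\va}=1}\E(\va\T\mOmega_k^{-1/2}\vS_{ki})^4\le\mathfrak C_n\) in \eqref{eq:Cn} bounds the variance of each summand. Combining this with a union bound, or simply bounding the maximum by a sum of centered terms, yields \(O_p(\Delta_{{\rm row},n}+n_0^{-1})\). The \(n_0^{-1}\) term appears in \(\delta_{H,n}\) for this purpose.

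\textbf{Oracle variance.} Compute \(\E\,2\tr(\mH_{n,0}^2)\). The diagonal-block contribution is \(2n_1^{-4}n_1(n_1-1)\E(\vS_{11}\T\mK_1\vS_{12})^2=2(1-n_1^{-1})n_1^{-2}\tr(\mK_1\mOmega_1\mK_1\mOmega_1)\). The off-diagonal contribution is \(4\cdot(4n_1^2n_2^2)^{-1}n_1n_2\tr(\mK_3\mOmega_2\mK_3\T\mOmega_1)\). Adding the blocks reproduces \(2\tr(\mB_n^2)=\tau_n^2\) up to a relative \(O(n_0^{-1})\) error.

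The variance of \(\tr(\mH_{n,0}^2)\) is a degree-4 \(U\)-statistic calculation. Its leading terms are bounded by \(\max_a\E\sum_b a_{ab}^2\cdot\tau_n^2\), which after normalization is \(O(\Delta_{{\rm row},n}+n_0^{-1})\). Here the fourth-moment factor enters, and the condition \(\delta_{{\rm dJ},n}=\mathfrak C_n\Delta_{{\rm row},n}\to0\) is what makes this term vanish. I would therefore state the oracle relative error as \(O_p(\Delta_{{\rm row},n}+n_0^{-1})\), possibly with a square root that can be absorbed.

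\textbf{Perturbation to the feasible matrix.} This is the main obstacle. Write \(\widehat{\mH}_{n,0}-\mH_{n,0}=\mathbf{E}_n\). The difference \(\hat a_{ab}-a_{ab}\) splits into two parts:
\begin{itemize}
\item a \(\mK\)-error, bounded via \eqref{eq:Khat-rate} by \(a_{K,n}\) times the analogous oracle entries, so it contributes \(a_{K,n}^2\) relative to \(\tau_n^2\);
\item a sign-error part such as \(n_1^{-2}(\widehat{\vS}_{1i}-\vS_{1i})\T\mK_1\vS_{1\ell}\).
\end{itemize}
A crude bound uses only \(\norm{\mK_r}_{\rm op}\le L_{K,n}\) and \eqref{eq:sign-pert}. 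A crude Cauchy--Schwarz step gives \(\norm{\mathbf E_n}_F^2/\tau_n^2\lesssim a_{K,n}^2+L_{K,n}^2a_{S,n}^2\). This bound requires that the row sums of the perturbed entries be compared to \(\tau_n^2\) using \(\vS_{1\ell}\)-averaging, i.e. that the quantity \(n_1^{-4}\sum_{\ell}(\vv\T\mK_1\vS_{1\ell})^2\) be \(\lesssim n_1^{-3}\norm{\mK_1\mOmega_1\mK_1}_{\rm op}\norm{\vv}^2\) uniformly in \(\vv\). This is exactly the row-leverage structure; if it fails, I would fall back on the cruder bound and absorb it into the \(L_{K,n}^2a_{S,n}^2\) term, which is the only delicate point.

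Then \(\abs{\tr(\widehat{\mH}_{n,0}^2)-\tr(\mH_{n,0}^2)}\le 2\norm{\mH_{n,0}}_F\norm{\mathbf E_n}_F+\norm{\mathbf E_n}_F^2\). This gives \eqref{eq:boot-var} with error \(\delta_{H,n}^{1/2}+\delta_{H,n}\); I would aim to state it with \(\delta_{H,n}\), and in any case \eqref{eq:hattau-star-rate} follows. For the row maxima, \((x+y)^2\le2x^2+2y^2\) combined with the oracle bound gives \eqref{eq:boot-row}. Finally, \eqref{eq:hattau-star-rate} follows from \(\abs{\sqrt{1+x}-1}\le\abs{x}\).
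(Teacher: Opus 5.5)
Your route (an oracle matrix \(\mH_{n,0}\) plus control of \(\mathbf E_n=\widehat{\mH}_{n,0}-\mH_{n,0}\) through \eqref{eq:sign-pert} and \eqref{eq:Khat-rate}) is in spirit the paper's. The paper bounds each row by \(\norm{\widehat{\mK}_1\widehat{\mOmega}_1\widehat{\mK}_1}_{\rm op}/n_1^3\), asserts an operator-norm comparison with \(\mK_1\mOmega_1\mK_1\), and appeals to U-statistic concentration for the trace. But the step you call the only delicate point is a genuine gap, and your fallback does not close it.

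The uniform bound \(n_1^{-1}\sum_\ell(\vv\T\mK_1\vS_{1\ell})^2\lesssim\norm{\mK_1\mOmega_1\mK_1}_{\rm op}\norm{\vv}^2\) is false once \(p\gg n_0\). The left side is \(\vv\T\mK_1\widetilde{\mOmega}_1\mK_1\vv\) with \(\widetilde{\mOmega}_1=n_1^{-1}\sum_\ell\vS_{1\ell}\vS_{1\ell}\T\). Testing against \(\vS_{11}\) shows \(\norm{\widetilde{\mOmega}_1}_{\rm op}\ge n_1^{-1}\), whereas \(\norm{\mOmega_1}_{\rm op}\asymp p^{-1}\) under bounded spectrum. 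The crude bound \(\abs{(\widehat{\vS}_{1i}-\vS_{1i})\T\mK_1\vS_{1\ell}}\le L_{K,n}\norm{\widehat{\vS}_{1i}-\vS_{1i}}\) only gives \(\norm{\mathbf E_n}_F^2\lesssim L_{K,n}^2a_{S,n}n_0^{-2}\). (Since \eqref{eq:sign-pert} is an \(\ell_1\)-average, you do not even get \(a_{S,n}^2\).) Dividing by \(\tau_n^2\) then costs the factor \((n_0^2\tau_n^2)^{-1}\asymp p\), which cannot be absorbed into \(L_{K,n}^2a_{S,n}^2\).

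Your \(\mK\)-error step has the same defect. An operator-norm bound on \(\widehat{\mK}_1-\mK_1\) does not control \(\vS_{1i}\T(\widehat{\mK}_1-\mK_1)\vS_{1\ell}\) entrywise by the oracle entries. Because \(\widehat{\mK}_1\) is data-dependent, you also cannot average over \(\vS_{1\ell}\) with \(\mOmega_1\). The available bound is \(\sum_{i,\ell}\{\vS_{1i}\T(\widehat{\mK}_1-\mK_1)\vS_{1\ell}\}^2\le n_1^2\norm{\widehat{\mK}_1-\mK_1}_{\rm op}^2\norm{\widetilde{\mOmega}_1}_F^2\), and \(\norm{\widetilde{\mOmega}_1}_F^2\ge n_1^{-1}\gg\tr(\mOmega_1^2)\) when \(p\gg n_0\), so it loses a factor of order \(p/n_0\). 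The paper's operator-norm comparison fails for the same reason (\(\norm{\widehat{\mOmega}_1}_{\rm op}\ge n_1^{-1}\)), so there is no device there that you are missing.

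More fundamentally, your target \(\norm{\mathbf E_n}_F^2/\tau_n^2\lesssim a_{K,n}^2+L_{K,n}^2a_{S,n}^2\) is false in part of the allowed range, because the sign perturbation is structured rather than generic.

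\begin{itemize}
\item The first-order condition for \eqref{eq:thetahat} gives \(\sum_i\widehat{\vS}_{1i}=\bm 0\). Hence \(\sum_{i\ne\ell}\widehat{\vS}_{1i}\T\widehat{\mK}_1\widehat{\vS}_{1\ell}=-n_1\tr(\widehat{\mK}_1\widehat{\mOmega}_1)\).
\item Take identical populations with bounded spectrum, so \(\mK_1=\mI_p\), \(\mOmega_1\asymp p^{-1}\mI_p\) and \(\tau_n^2\asymp n_0^{-2}p^{-1}\). The oracle sum is \(O_p(n_1p^{-1/2})\), so the entries of the first diagonal block of \(\mathbf E_n\) add up to about \(-n_1^{-1}\).
\item Cauchy--Schwarz over the \(n_1(n_1-1)\) entries of that block gives \(\norm{\mathbf E_n}_F^2\gtrsim n_1^{-4}\), i.e.\ \(\norm{\mathbf E_n}_F^2/\tau_n^2\gtrsim p/n_0^2\).
\item This exceeds \(a_{K,n}^2+L_{K,n}^2a_{S,n}^2\asymp x_N/n_0\) whenever \(p\gg n_0x_N\), a range the assumptions permit.
\end{itemize}

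The missing idea is to linearize as in Lemma \ref{lem:sign-pert}: \(\widehat{\vS}_{ki}-\vS_{ki}\approx\mL(\vY_{ki})(\widehat{\mE}_k\vY_{ki}-\vdelta_k)\) with \(\vdelta_k\approx\mG_k^{-1}\bar{\vS}_k\). When pairing with \(\vS_{k\ell}\), split off the \(n_k^{-1}\mG_k^{-1}\vS_{k\ell}\) part of \(\vdelta_k\). The leave-\(\ell\)-out remainder can then be averaged against the population \(\mOmega_k\). The split-off part is exactly this approximately rank-one centring component.

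Two consequences follow.
\begin{itemize}
\item \eqref{eq:boot-row} has to be proved row by row. In this example the centring piece costs only \(O(p/n_0^3)=o(\Delta_{{\rm row},n})\) per row, whereas passing through \(\norm{\mathbf E_n}_F^2\) costs \(p/n_0^2\).
\item In this example the trace comparison must carry a relative term of order \(p/n_0^2\). It is \(o(1)\) because \(q_{\theta,n}\to0\), which suffices for \(\hat\tau_{*,n}/\tau_n\to1\) and Theorem \ref{thm:boot}. It is not \(O(\delta_{H,n})\), however, and for \(p\gg n_0^{3/2}x_N^{1/2}\) it even exceeds \(\delta_{H,n}^{1/2}\).
\end{itemize}

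Finally, your retreat to \(1+O_p(\delta_{H,n}^{1/2})\) in \eqref{eq:boot-var} is not the weak point. Already for the oracle matrix, the first Hoeffding projection of \(\sum_{i\ne\ell}(\vS_{1i}\T\mK_1\vS_{1\ell})^2\) has relative size up to \(\Delta_{{\rm row},n}^{1/2}\). Under compound symmetry this is of order \(n_0^{-1/2}\), while \(\delta_{H,n}\asymp x_N/n_0\). In addition, an \(O_p(a_{K,n})\) operator-norm error in \(\widehat{\mK}_r\) moves the trace at first order in \(a_{K,n}\). So the \(\delta_{H,n}\) rate asserted in the paper is not what these ingredients deliver.
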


\begin{theorem}\label{thm:boot}
Under \(H_0\) and Assumptions \ref{ass:np}--\ref{ass:spectral},  then
\begin{equation}\label{eq:boot-main}
    \sup_{t\in\R}
    \abs{
    \Prstar\left(\frac{T_n^*}{\tau_n}\le t\right)
    -
    \Prb\left(\frac{T_n^{\rm PDQ}}{\tau_n}\le t\right)}
    \longrightarrow0
    \quad\hbox{in probability}.
\end{equation}
Moreover,
\begin{equation}\label{eq:boot-main-hattau}
    \sup_{t\in\R}
    \abs{
    \Prstar\left(\frac{T_n^*}{\hat\tau_{*,n}}\le t\right)
    -
    \Prb\left(\frac{T_n^{\rm PDQ}}{\hat\tau_{*,n}}\le t\right)}
    \longrightarrow0
    \quad\hbox{in probability}.
\end{equation}
If \(c_{1-\beta}^*\) is the conditional \((1-\beta)\)-quantile of \(T_n^*\), then the test
\[
        \ind\{T_n^{\rm PDQ}>c_{1-\beta}^*\}
\]
has asymptotic size \(\beta\) at continuity points of the limiting null law.
\end{theorem}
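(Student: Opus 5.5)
Conditionally on the data $\calX$, the centered bootstrap statistic is a diagonal-free Rademacher chaos. Since $e_{ki}^2=1$, we have $Q_n^*-\hat b_n=\ve\T\widehat{\mH}_{n,0}\ve$. The plan is therefore to compare three laws, each at scale $\tau_n$. The first is the conditional law of $\ve\T\widehat{\mH}_{n,0}\ve$. The second is the law of a Gaussian chaos $\vg\T\widehat{\mH}_{n,0}\vg-\tr(\widehat{\mH}_{n,0})=\vg\T\widehat{\mH}_{n,0}\vg$ with $\vg\sim N(0,\mI_N)$. The third is the weighted chi-square law of $\Gamma_n/\tau_n$. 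Theorem~\ref{thm:feasible} already gives Kolmogorov closeness of $T_n^{\rm PDQ}/\tau_n$ to $\Gamma_n/\tau_n$. So the whole task reduces to showing
\[
\sup_t\bigl|\Prstar(\ve\T\widehat{\mH}_{n,0}\ve/\tau_n\le t)-\Prb(\Gamma_n/\tau_n\le t)\bigr|\to0
\]
in probability.

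\textbf{Step 1: Rademacher to Gaussian.} I would apply an invariance principle for degenerate quadratic forms, in the style of Mossel--O'Donnell--Oleszkiewicz or Rotar/de Jong. It bounds the distance in smooth test functions by the maximal influence $\max_a\sum_{b\ne a}\hat a_{ab}^2/\tr(\widehat{\mH}_{n,0}^2)$. By \eqref{eq:boot-row}--\eqref{eq:boot-var} of Proposition~\ref{prop:boot-rates}, this influence is $O_p(\delta_{H,n})\to0$. Smooth-function closeness upgrades to Kolmogorov closeness once the limit law has a continuous distribution function with uniformly controlled modulus. That holds for the limit in \eqref{eq:limit-mixture}, because the variance is normalized to one, so either $\rho_0>0$ or some $\rho_r\ne0$, and the law is absolutely continuous.

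\textbf{Step 2: Gaussian chaos to $\Gamma_n$.} The Gaussian chaos equals $\sum_r\hat\mu_r(\chi^2_{1r}-1)$, where the $\hat\mu_r$ are the eigenvalues of $\widehat{\mH}_{n,0}$. I would compare the spectrum of $\widehat{\mH}_{n,0}$ with that of $\mB_n$. Write $\widehat{\mH}_n=\widehat{\mathbf W}\T\widehat{\mathbf K}\widehat{\mathbf W}$, where $\widehat{\mathbf W}$ stacks the scaled signs $n_k^{-1}\widehat{\vS}_{ki}$ and $\widehat{\mathbf K}$ is the block matrix built from $\widehat{\mK}_r$. Its nonzero spectrum coincides with that of $\widehat{\mathbf K}^{1/2}\widehat{\mathbf W}\widehat{\mathbf W}\T\widehat{\mathbf K}^{1/2}$, modulo the symmetrization needed because $\mK_3$ is not symmetric; for that I would work directly with traces of powers instead. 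The block $\widehat{\mathbf W}\widehat{\mathbf W}\T$ has diagonal blocks $n_k^{-1}\widehat{\mOmega}_k$, which estimate $n_k^{-1}\mOmega_k$. The plan is to show
\[
\tr\bigl\{(\widehat{\mH}_{n,0}-\text{oracle})^2\bigr\}/\tau_n^2=o_p(1),
\]
with the oracle defined using $\vS_{ki}$ and $\mK_r$. This uses \eqref{eq:sign-pert} and \eqref{eq:Khat-rate} and gives an error $L_{K,n}^2a_{S,n}^2+a_{K,n}^2$. Next I would show that the oracle diagonal-deleted matrix has normalized eigenvalues converging in $\ell^2$ to $(\rho_r)$. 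This follows from the $\ell^2$ convergence in \eqref{eq:l2}: the second and fourth trace moments of the oracle matrix match those of $\mB_n$ up to $\Delta_{{\rm row},n}+n_0^{-1}$, by a direct $U$-statistic variance computation using the fourth-moment bound in $\mathfrak C_n$. Finally, for weighted chi-square sums, $\ell^2$-closeness of the normalized eigenvalue sequences implies closeness in law (Wasserstein-2 via coupling), and since the limit is continuous this gives closeness in Kolmogorov distance.

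\textbf{Step 3: combination.} Combining Steps 1--2 with \eqref{eq:full-law} yields \eqref{eq:boot-main} by the triangle inequality, and we may pass to subsequences to turn in-probability statements into almost-sure ones. Statement \eqref{eq:boot-main-hattau} then follows from \eqref{eq:hattau-star-rate}: the ratio $\hat\tau_{*,n}/\tau_n\to1$ in probability, and the limit law is continuous, so the random rescaling is absorbed. The size claim is standard. The conditional quantile $c_{1-\beta}^*/\tau_n$ converges in probability to the $(1-\beta)$-quantile of the limit at continuity points, and then $\Prb(T_n^{\rm PDQ}>c^*_{1-\beta})\to\beta$.

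\textbf{Main obstacle.} I expect Step 2 to be the hardest part, specifically the spectral convergence of the random matrix $\widehat{\mH}_{n,0}$ in $\ell^2$, rather than just convergence of its Frobenius norm. I would first try a fourth-cumulant argument. Two centered chaoses with the same variance whose $\tr(\cdot^3)$ and $\tr(\cdot^4)$ agree asymptotically, together with $\ell^2$-tightness, have the same limit. Failing that, I would use the direct perturbation bound $\sum_r(\hat\mu_r-\mu_r)^2\le\norm{\widehat{\mH}_{n,0}-\mH^{\rm orc}}_F^2$ (Hoffman--Wielandt), combined with the already established oracle limit.
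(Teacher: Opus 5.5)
Your architecture matches the paper's proof. You write $T_n^*=\ve\T\widehat{\mH}_{n,0}\ve$, pass to the Gaussian chaos by a low-influence comparison using Proposition~\ref{prop:boot-rates}, diagonalize to $\sum_r\hat\mu_r(\chi^2_{1r}-1)$, compare spectra with $\Gamma_n$, and finish with Theorem~\ref{thm:feasible}, Slutsky and quantile consistency. Inserting the $N\times N$ oracle matrix $\mH^{\rm orc}_{n,0}$ before Hoffman--Wielandt is a sensible refinement, since Lemma~\ref{lem:HW} cannot compare $\widehat{\mH}_{n,0}$ with the $2p\times 2p$ matrix $\mB_n$ directly.

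\textbf{Where it fails.} The gap is your claim that the normalized eigenvalues of $\mH^{\rm orc}_{n,0}$ converge in $\ell^2$ to $(\rho_r)$. Whenever $\rho_0>0$, which your Step~1 explicitly allows, this is impossible. By \eqref{eq:boot-var} the normalized eigenvalues $\sqrt2\,\hat\mu_r/\tau_n$ have squared $\ell^2$-norm tending to $1$, while $\sum_r\rho_r^2=1-\rho_0^2<1$.

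Concretely, take two identical populations with $\mR_k=\mI_p$, $n_1=n_2=n$ and $n\ll p\ll n^2$. Then $\mK_1=\mK_2=\mI_p$ and $\mK_3=2\mI_p$, so $\mB_n$ has $p$ equal eigenvalues $2/(np)$, each of normalized size $p^{-1/2}$. By contrast, set $\mathbf V=[\vS_{11},\ldots,\vS_{1n},-\vS_{21},\ldots,-\vS_{2n}]$. Then $\mH^{\rm orc}_{n,0}=n^{-2}(\mathbf V\T\mathbf V-\mI_{2n})$ has zero trace and $2n$ semicircle-type eigenvalues of both signs, of normalized size $\asymp n^{-1/2}$. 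The sorted sequences stay at $\ell^2$-distance tending to $\sqrt2$, although both chi-square sums are asymptotically $N(0,1)$.

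The paper's proof asserts the same bound $\sum_r(\hat\nu_{nr}-\lambda_{nr})^2=O_p(\delta_{H,n}\tau_n^2)$ and breaks at the same point. Also, \eqref{eq:l2} read literally forces $\rho_0=0$, because $\sum_r\rho_{nr}^2=1$; that makes the hypothesis of Corollary~\ref{cor:normal} vacuous.

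\textbf{Even with $\rho_0=0$ the justification falls short.} Matching $\tr(\cdot^2)$ and $\tr(\cdot^4)$ cannot detect signs. Weights $\{1,1\}$ and $\{1,-1\}$ agree on both, yet give $\chi^2_2-2$ versus the symmetric $\chi^2_{11}-\chi^2_{12}$. Nor does any fixed finite set of trace powers, including the third and fourth in your fallback, determine a weighted chi-square law with many distinct weights. Prouhet--Tarry--Escott multisets share power sums up to any prescribed order.

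\textbf{What the argument needs instead.} The requirement is weaker than $\ell^2$ convergence and has three parts.
\begin{enumerate}
\item[(i)] Convergence of the total normalized mass, which is \eqref{eq:boot-var}.
\item[(ii)] Convergence in probability of each fixed leading normalized eigenvalue, ordered by modulus, to $\rho_r$. You could get this from $\tr\{(\widehat{\mH}_{n,0}/\tau_n)^k\}-\tr\{(\mB_n/\tau_n)^k\}\to0$ in probability for every fixed $k\ge3$, via a closed-walk expansion. Walks on distinct indices reproduce $\tr(\mB_n^k)$, and walks through repeated indices are bounded using $\Delta_{{\rm row},n}$. Your Hoffman--Wielandt step then transfers the result from the oracle to $\widehat{\mH}_{n,0}$.
\item[(iii)] A truncation: keep the top $R$ eigenvalues and apply Lemma~\ref{lem:be-chisq} to the remainder, whose maximal weight is small. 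This produces the $\rho_0Z$ term in \eqref{eq:limit-mixture}.
\end{enumerate}
Applying the same truncation to $\Gamma_n/\tau_n$ identifies the common limit.
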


To implement the proposed procedure, one first constructs the pairwise-difference quantile estimator of the diagonal scaling matrix, then computes the corresponding spatial median and standardized spatial signs, and finally calibrates the test by the Rademacher wild bootstrap. The resulting feasible algorithm is summarized below.

\begin{algorithm}[htbp]
\caption{Feasible PDQ-based spatial-sign test with Rademacher wild bootstrap}
\label{alg:pdq-test}
\begin{algorithmic}[1]
\STATE Fix \(\alpha\), for example \(\alpha=1/4\) or \(1/2\), and compute \(\widehat{\mD}_k\) from \eqref{eq:pdq-cdf}--\eqref{eq:Dhat}.
\STATE Compute \(\widehat{\vtheta}_k\), \(\widehat{\vS}_{ki}\), \(\widehat{\mOmega}_k\), and \(\widehat{\mG}_k\) from \eqref{eq:thetahat} and \eqref{eq:hatOmegaG}.
\STATE Form \(\widehat{\mK}_1,\widehat{\mK}_2,\widehat{\mK}_3\), and compute \(\hat R_n^{\rm PDQ}\) and \(\hat b_n\) from \eqref{eq:Rhat} and \eqref{eq:bhat}.
\STATE Generate independent Rademacher multipliers according to \eqref{eq:rademacher}.
\FOR{\(b=1,\ldots,B\)}
\STATE Compute
\[
T_n^{*(b)}=Q_n^{*(b)}-\hat b_n.
\]
\ENDFOR
\STATE Reject \(H_0\) for large values of \(T_n^{\rm PDQ}=\hat R_n^{\rm PDQ}-\hat b_n\) using the empirical upper quantile of \(T_n^{*(1)},\ldots,T_n^{*(B)}\).
\end{algorithmic}
\end{algorithm}

\section{Simulation study}\label{sec:simulation}

{\color{black}In this section,} we examine the finite-sample behavior of the proposed procedure under strong dependence, where the distinction between Gaussian and non-Gaussian null calibration is most visible.  Four methods are compared throughout: PDQ, the proposed pairwise-difference-quantile spatial-sign test calibrated by the Rademacher wild bootstrap of Section~2 with $B=200$ resamples; SST, the simpler spatial-sign test of \citet{li2016}, implemented with the sign-based diagonal and location iteration of \citet{feng2016}; ART, the approximate randomization test of \citet{wangxu2022} based on the quadratic statistic {\color{black} of \cite{chenqin2010}}; and BF-F, the normal-reference $F$-type Behrens--Fisher test of \citet{zhuwangzhang2024}.

\subsection{Empirical size under the null {\color{black}hypothesis}}
We first study {\color{black}the empirical} size under the null hypothesis
\[
    H_0:\ \vtheta_1=\vtheta_2=\bm{0},
\]
with equal sample sizes $n_1=n_2=n$.  {\color{black}We consider
$    n\in\{50,100\}$ and} $
    p\in\{100,150,200,300\}.
$
{\color{black}The nominal significance level is set to} $0.05$, and all results are based on $10,000$ Monte Carlo replications.  We consider the following three elliptically symmetric models.

\smallskip
\noindent
(I) \emph{Multivariate normal distribution.}
$$
\mathbf{X}_{ki}\sim \mathcal{N}_p(\boldsymbol{\theta}_k,\mathbf{\Sigma}),
\qquad k=1,2,\ i=1,\ldots,n_k.
$$

\smallskip
\noindent
(II) \emph{Multivariate \(t\)-distribution \(t_{p,3}\).}
$$
\mathbf{X}_{ki}\text{'s are generated from } t_{p,3}
\text{ with location parameter }\boldsymbol{\theta}_k
\text{ and scatter matrix }\mathbf{\Sigma}.
$$

\smallskip
\noindent
(III) \emph{Multivariate mixture normal distribution \(\mathcal{MN}_{\gamma,9}\).}
$$
\mathbf{X}_{ki}\text{'s are generated from }
\gamma f(\boldsymbol{\theta}_k,\mathbf{\Sigma})
+
(1-\gamma)f(\boldsymbol{\theta}_k,9\mathbf{\Sigma}),
$$
which is denoted by \(\mathcal{MN}_{\gamma,9}\), where \(f(\cdot;\cdot)\) is the density
function of the \(p\)-variate multivariate normal distribution. We take \(\gamma=0.8\).

{\color{black}For the shape matrix, we consider two structures:}
\[
    \Sigma_{ab}=0.9^{|a-b|}\quad\text{for AR(1)},
    \qquad
    \Sigma_{aa}=1,\ \Sigma_{ab}=0.9\ (a\neq b)\quad\text{for CS}.
\]
Thus both designs are strongly dependent, with correlation parameter $\rho=0.9$.

Tables~\ref{tab:size-rho09-grid-normal}--\ref{tab:size-rho09-grid-mixnormal} report empirical sizes multiplied by $100$.  For each method and each dependence structure, the last row records
\[
  \mathrm{ARE}=
    \sum_{(n,p)\in\{50,100\}\times\{100,150,200,300\}}
    \bigl|100\times\widehat{\mathrm{size}}(n,p)-5\bigr|,
\]
that is, the aggregate absolute deviation from the nominal level over the eight $(n,p)$ configurations.

\begin{table}[htbp]
\centering
\caption{Empirical sizes under $H_0$ for the {\color{black}multivariate} normal distribution with $\rho=0.9$, based on $10{,}000$ Monte Carlo replications. Entries are size $\times 100$.}
\label{tab:size-rho09-grid-normal}
\begin{tabular}{ccrrrrrrrr}
\toprule
 &  & \multicolumn{4}{c}{AR(1)} & \multicolumn{4}{c}{CS} \\
\cmidrule(lr){3-6} \cmidrule(lr){7-10}
$n$ & $p$ & PDQ & SST & ART & BF-F & PDQ & SST & ART & BF-F \\
\midrule
50 & 100 & 4.70 & 6.47 & 5.46 & 5.29 & 4.41 & 6.21 & 6.10 & 5.13 \\
 & 150 & 4.62 & 6.21 & 5.50 & 5.40 & 3.95 & 5.56 & 5.53 & 4.74 \\
 & 200 & 4.72 & 6.23 & 5.74 & 5.73 & 4.21 & 5.88 & 5.62 & 4.97 \\
 & 300 & 4.14 & 5.89 & 5.48 & 5.37 & 4.13 & 5.69 & 5.76 & 4.86 \\
\midrule
100 & 100 & 5.35 & 6.65 & 5.69 & 5.61 & 4.50 & 6.02 & 5.63 & 5.07 \\
 & 150 & 5.08 & 6.42 & 5.65 & 5.44 & 4.84 & 6.39 & 5.48 & 4.79 \\
 & 200 & 5.33 & 6.61 & 5.77 & 5.61 & 4.69 & 6.32 & 5.60 & 5.06 \\
 & 300 & 5.20 & 6.45 & 5.67 & 5.55 & 4.50 & 5.79 & 5.23 & 4.53 \\
\midrule
\multicolumn{2}{c}{ARE} & 2.78 & 10.93 & 4.96 & 4.00 & 4.77 & 7.86 & 4.95 & 1.37 \\
\bottomrule
\end{tabular}
\end{table}

\begin{table}[htbp]
\centering
\caption{Empirical sizes under $H_0$ for the {\color{black}multivariate} $t$ distribution with $\rho=0.9$, based on $10{,}000$ Monte Carlo replications. Entries are size $\times 100$.}
\label{tab:size-rho09-grid-t}
\begin{tabular}{ccrrrrrrrr}
\toprule
 &  & \multicolumn{4}{c}{AR(1)} & \multicolumn{4}{c}{CS} \\
\cmidrule(lr){3-6} \cmidrule(lr){7-10}
$n$ & $p$ & PDQ & SST & ART & BF-F & PDQ & SST & ART & BF-F \\
\midrule
50 & 100 & 4.14 & 5.90 & 5.38 & 3.12 & 4.07 & 5.73 & 5.82 & 4.74 \\
 & 150 & 4.31 & 5.90 & 5.90 & 2.71 & 3.86 & 5.51 & 5.79 & 4.71 \\
 & 200 & 4.29 & 5.87 & 5.38 & 2.35 & 4.22 & 5.68 & 5.62 & 4.48 \\
 & 300 & 3.81 & 5.37 & 5.61 & 1.60 & 3.96 & 5.63 & 5.45 & 4.28 \\
\midrule
100 & 100 & 5.10 & 6.74 & 5.59 & 3.59 & 5.04 & 6.56 & 5.34 & 4.55 \\
 & 150 & 4.90 & 6.24 & 5.46 & 3.32 & 4.72 & 6.19 & 5.58 & 4.72 \\
 & 200 & 5.12 & 6.54 & 5.73 & 2.79 & 4.51 & 6.23 & 5.40 & 4.77 \\
 & 300 & 5.19 & 6.65 & 5.64 & 2.47 & 4.61 & 6.58 & 5.54 & 4.88 \\
\midrule
\multicolumn{2}{c}{ARE} & 3.96 & 9.21 & 4.69 & 18.05 & 5.09 & 8.11 & 4.54 & 2.87 \\
\bottomrule
\end{tabular}
\end{table}

\begin{table}[htbp]
\centering
\caption{Empirical sizes under $H_0$ for the {\color{black}multivariate} mixture normal distribution with $\rho=0.9$, based on $10{,}000$ Monte Carlo replications. Entries are size $\times 100$.}
\label{tab:size-rho09-grid-mixnormal}
\begin{tabular}{ccrrrrrrrr}
\toprule
 &  & \multicolumn{4}{c}{AR(1)} & \multicolumn{4}{c}{CS} \\
\cmidrule(lr){3-6} \cmidrule(lr){7-10}
$n$ & $p$ & PDQ & SST & ART & BF-F & PDQ & SST & ART & BF-F \\
\midrule
50 & 100 & 4.41 & 6.23 & 5.23 & 4.12 & 4.06 & 5.66 & 5.92 & 5.21 \\
 & 150 & 4.07 & 5.80 & 5.77 & 3.92 & 4.02 & 5.69 & 5.68 & 4.90 \\
 & 200 & 4.07 & 5.64 & 5.27 & 3.35 & 4.60 & 6.22 & 5.72 & 4.83 \\
 & 300 & 3.92 & 5.38 & 5.66 & 2.86 & 4.42 & 5.97 & 5.82 & 5.06 \\
\midrule
100 & 100 & 4.79 & 6.45 & 5.69 & 4.67 & 4.92 & 6.46 & 5.47 & 4.78 \\
 & 150 & 4.81 & 6.49 & 5.51 & 4.37 & 4.58 & 6.35 & 5.89 & 4.98 \\
 & 200 & 5.07 & 6.39 & 5.48 & 4.44 & 4.92 & 6.40 & 6.02 & 5.31 \\
 & 300 & 4.91 & 6.14 & 5.62 & 3.95 & 4.70 & 6.45 & 5.62 & 5.03 \\
\midrule
\multicolumn{2}{c}{ARE} & 4.09 & 8.52 & 4.23 & 8.32 & 3.78 & 9.20 & 6.14 & 1.12 \\
\bottomrule
\end{tabular}
\end{table}

The size results lead to several clear conclusions. First, the proposed PDQ test provides the most accurate and stable control of the nominal level across the entire simulation grid. Under the AR(1) design, its empirical rejection probabilities remain uniformly close to the target level under all three distributional settings, including the heavy-tailed \(t_3\) model and the mixture normal model. This uniform behavior is particularly important in the present problem, because the competing methods are much more sensitive to departures from Gaussianity and to strong dependence. By contrast, the PDQ procedure maintains reliable size control throughout, which confirms the practical effectiveness of the new diagonal normalization and the Rademacher wild bootstrap calibration.

Second, SST is systematically liberal in nearly all settings considered here. This pattern is fully consistent with the theoretical discussion: its normal calibration is justified only under substantially stronger structural assumptions, and these assumptions are violated in the strongly dependent designs studied in this section. As a result, the apparent advantage of SST in some of the power plots must be interpreted with caution, since part of that gain is explained by over-rejection under the null rather than by genuinely superior power.

Third, under the CS design, BF-F remains reasonably competitive for the Gaussian and mixture normal models, but its behavior deteriorates substantially under the heavy-tailed \(t_3\) model, where it becomes distinctly conservative. In contrast, PDQ continues to track the nominal level closely even in this difficult setting. Taken together, these findings show that the proposed method offers the best overall size performance among the procedures under comparison. In particular, when dependence is strong and the underlying distribution is heavy-tailed or contaminated, PDQ is the only method that consistently delivers both accurate calibration and stable behavior over the full range of sample sizes and dimensions considered here.

\subsection{Power under dense alternatives}

{\color{black}Next, we evaluate the power performance under the dense alternative
\[
    \vtheta_1=\bm{0},
    \qquad
    \vtheta_2=\delta\bm{1}_p/\sqrt{p},
\]
with $n_1=n_2=100$, $p=200$, and $\rho=0.9$. as in the size study, we use the same three data generating distributions, with empirical power calculated from 1,000 Monte Carlo replications. For the AR(1) structure, we specify
$\delta \in \{0.25, 0.5, 0.75, 1, 1.25, 1.5, 1.75, 2, 2.25, 2.5\},$
while for the CS structure, we adopt the strong grid
$\delta \in \{1, 2, 3, 4, 5, 6, 7, 8, 9, 10\}.$}

\begin{figure}[htbp]
\centering
\includegraphics[width=.96\textwidth]{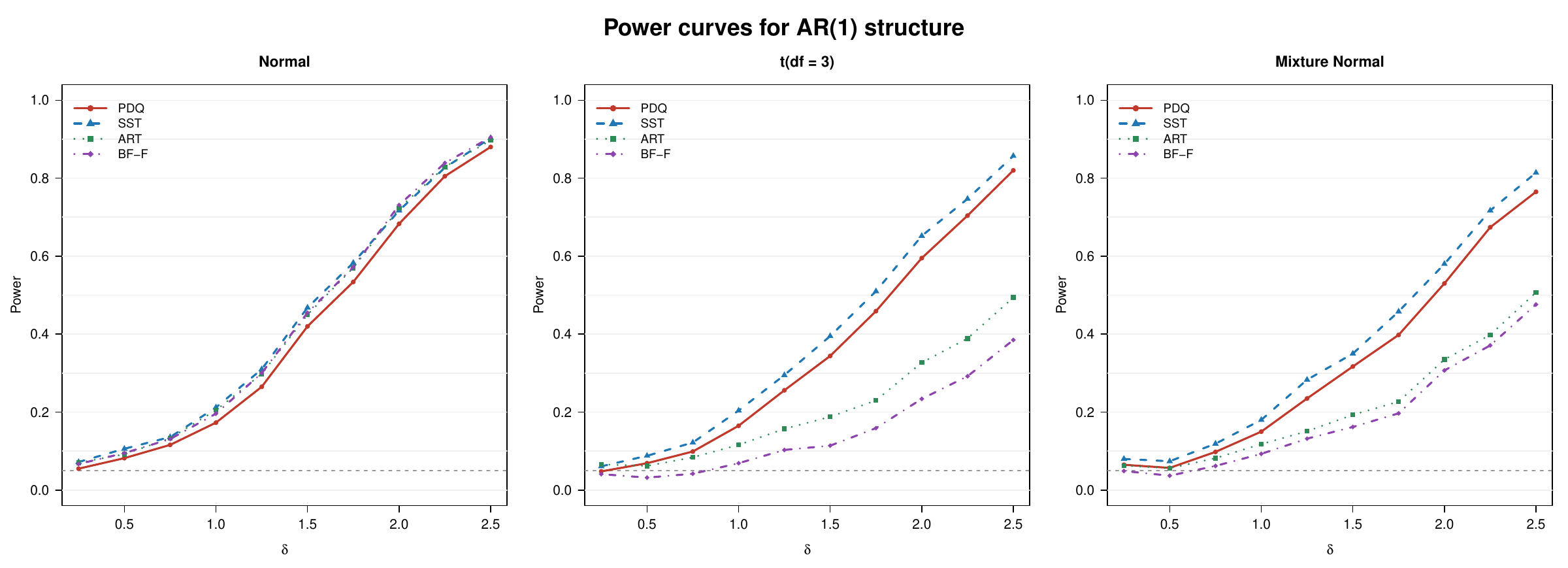}
\caption{Power curves under the AR(1) structure with $\rho=0.9$, $n_1=n_2=100$, $p=200$. {\color{black}The dense shift is $\vtheta_2 = \delta\bm{1}_p/\sqrt{p}$, with results based on 1,000 Monte Carlo replications per signal level.}}
\label{fig:power-ar1-rho09}
\end{figure}

\begin{figure}[htbp]
\centering
\includegraphics[width=.96\textwidth]{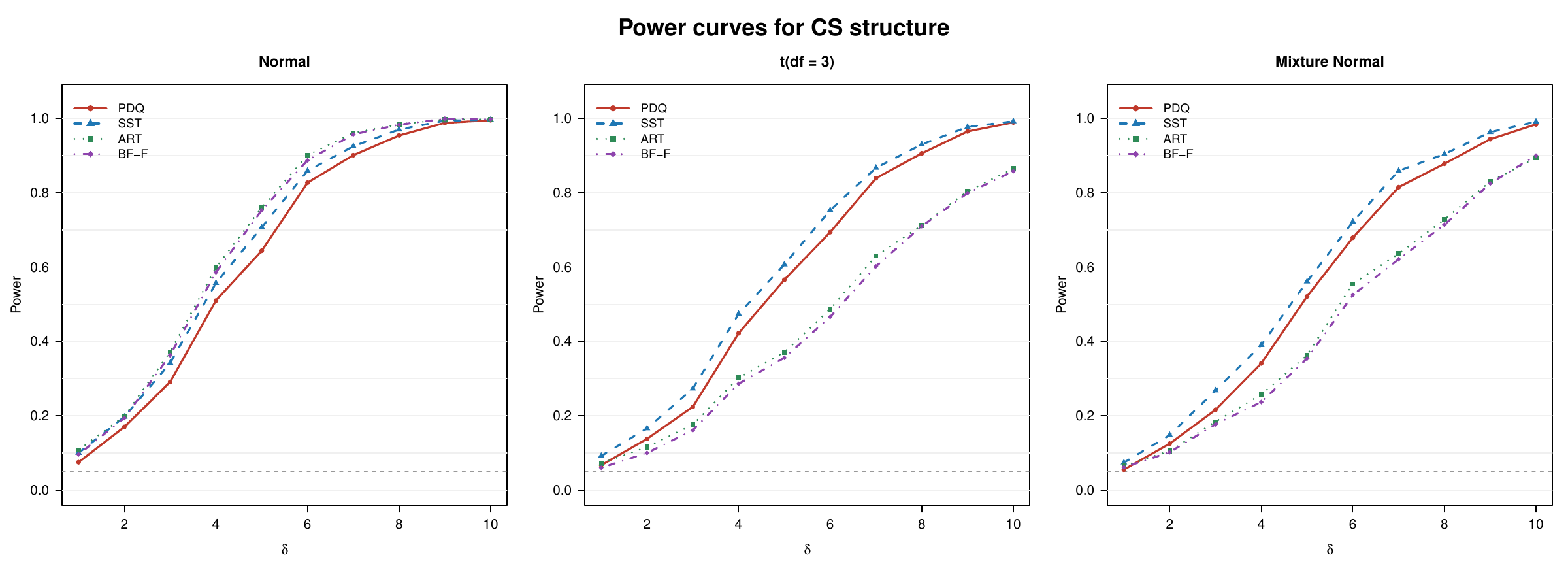}
\caption{Power curves under the CS structure with $\rho=0.9$, $n_1=n_2=100$, $p=200$. {\color{black}The dense shift is $\vtheta_2 = \delta\bm{1}_p/\sqrt{p}$, with results based on 1,000 Monte Carlo replications per signal level.}}
\label{fig:power-cs-rho09}
\end{figure}

The power curves in Figures~\ref{fig:power-ar1-rho09}--\ref{fig:power-cs-rho09} further illustrate the finite-sample advantages of the proposed PDQ procedure. {\color{black}Under the Gaussian model, the power of PDQ is marginally exceeded by ART and BF-F; nonetheless, it tracks their performance closely and remains of the same order across all considered signal strengths.} The advantage of PDQ becomes more pronounced under non-Gaussian designs. In particular, under the \(t_3\) model and the mixture normal model, PDQ delivers uniformly higher power than ART and BF-F over the practically relevant signal range under both AR(1) and compound-symmetry dependence. This pattern is especially clear when the signal is moderate, where robustness against heavy tails and contamination is most consequential for finite-sample performance. These results are consistent with the theoretical motivation of the proposed method: by combining spatial signs with the PDQ diagonal scaling, the test remains stable under strong dependence while being substantially less sensitive to tail heaviness and scale contamination.

The SST curves are in several cases above those of PDQ, and sometimes substantially so. However, these larger rejection probabilities should be interpreted with caution. As shown in Tables~\ref{tab:size-rho09-grid-normal}--\ref{tab:size-rho09-grid-mixnormal}, SST exhibits noticeable size distortion in the same settings, particularly under strong dependence and non-Gaussian models. Hence part of its apparent power gain is attributable to over-rejection under the null rather than to genuinely better discrimination under the alternative. When size control is taken into account jointly with power, PDQ provides the most reliable overall performance among the methods considered, especially in the heavy-tailed and contaminated settings for which robustness is essential.

\section{Conclusion}

This paper studies two-sample spatial-sign inference under high-dimensional elliptical models with arbitrary dependence structures. The main difficulty in this setting is that, under strong dependence, the classical spatial-sign statistics of \citet{feng2016} and \citet{li2016} are no longer asymptotically normal, and their diagonal scaling step is no longer valid. To address these issues, we propose a new pairwise-difference quantile estimator for the diagonal scaling matrix and develop a feasible spatial-sign test that remains valid without sparsity assumptions on the scatter matrix. We show that the feasible statistic admits a weighted chi-square mixture limit under the null, and we further establish the consistency of a Rademacher wild bootstrap for practical calibration. The numerical results indicate that the proposed procedure achieves reliable size control and competitive power, especially under heavy-tailed and contaminated distributions.

Several problems merit further investigation. One direction is to extend the present theory to more general multi-sample or MANOVA-type problems under elliptical symmetry. Another direction is to develop adaptive procedures that combine the present robust spatial-sign statistic with other complementary tests in order to improve power across a broader range of signal configurations.

\appendix

\section{Proofs of the main results}\label{app:proofs}

\subsection{Auxiliary lemmas}
\label{app:aux}

\begin{lemma}\label{lem:ustat-tail}
Let
\[
    U_n=\binom n2^{-1}\sum_{1\le i<j\le n}h(X_i,X_j),
    \qquad
    \theta=\E h(X_1,X_2),
\]
where \(X_i\)'s are independent and \(0\le h(\cdot)\le1\). Then, for every \(x>0\),
\[
    \Prb\{\abs{U_n-\theta}>x\}
    \le 2\exp(-c n x^2).
\]
Hence, for deterministic points \(t_{kj}^{\sigma}\), \(k=1,2\), \(j\le p\), \(\sigma\in\{-,+\}\),
\[
\Prb\left\{
\max_{k,j,\sigma}
\abs{\widehat F_{kj}(t_{kj}^{\sigma})-F_{kj}(t_{kj}^{\sigma})}>x
\right\}
\le 4p\exp(-c n_0x^2).
\]
\end{lemma}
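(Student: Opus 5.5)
The plan is to use Hoeffding's classical decomposition of a $U$-statistic of order two into an average of independent blocks, and then apply Hoeffding's inequality for bounded independent summands. Let $m=\lfloor n/2\rfloor$ and, for a permutation $\pi$ of $\{1,\ldots,n\}$, set
\[
W(\pi)=\frac1m\sum_{r=1}^{m}h(X_{\pi(2r-1)},X_{\pi(2r)}).
\]
If the kernel is not symmetric, replace $h$ by its symmetrization $\{h(x,y)+h(y,x)\}/2$; this leaves $U_n$ and $\theta$ unchanged and preserves $0\le h\le 1$. Averaging over all $n!$ permutations, each unordered pair $\{i,j\}$ appears equally often, so $U_n=(n!)^{-1}\sum_\pi W(\pi)$. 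Each $W(\pi)$ is an average of $m$ independent $[0,1]$-valued variables with mean $\theta$.

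The key step is the moment generating function bound. Convexity of $x\mapsto e^{sx}$ and Jensen's inequality give, for every $s>0$,
\[
\E e^{s(U_n-\theta)}\le (n!)^{-1}\sum_\pi \E e^{s(W(\pi)-\theta)}\le \exp\{s^2/(8m)\},
\]
where the last step is Hoeffding's lemma applied to each bounded summand. Chernoff's bound with $s=4mx$ then yields $\Prb(U_n-\theta>x)\le \exp(-2mx^2)$. The same argument applied to $-h$ handles the lower tail. Since $m\ge n/3$ for $n\ge2$, we obtain $2\exp(-cnx^2)$ with, for example, $c=2/3$.

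For the second claim, fix $k$, $j$ and a deterministic point $t=t_{kj}^\sigma$, and apply the first part with $X_i=X_{kij}$, $n=n_k$ and $h(x,y)=\ind\{|x-y|\le t\}\in[0,1]$. Within sample $k$ the observations are i.i.d. by Assumption~\ref{ass:elliptical}, so $\widehat F_{kj}(t)$ is exactly such a $U$-statistic with mean $F_{kj}(t)$. Here $F_{kj}$ denotes the population CDF of $|X_{k1j}-X_{k2j}|$; under the model this is $F_{k,\Delta}(t/(d_{kj}^\circ)^{1/2})$, but only its role as the mean matters. Each tail probability is at most $2\exp(-cn_kx^2)\le2\exp(-cn_0x^2)$. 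A union bound over $k\in\{1,2\}$, $j\le p$ and $\sigma\in\{-,+\}$ gives $4p$ events of this type. Taken literally this yields the constant $8p$ (that is, $2\cdot 2p\cdot 2$); to reach $4p$, absorb the factor into $c$ when $x$ is bounded away from zero, or bound only the relevant one-sided deviation at each point ($t^-$ lower tail, $t^+$ upper tail), which is all that is needed later. The points must be deterministic, i.e.\ not depending on the data, for the kernel to be fixed.

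The only delicate point is the constant bookkeeping in the union bound and the requirement that the $t_{kj}^\sigma$ be nonrandom. The concentration itself is standard Hoeffding (1963).
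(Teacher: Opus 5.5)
Your proposal is correct and follows the same route as the paper: Hoeffding's permutation representation with $m=\lfloor n/2\rfloor$, Hoeffding's inequality for each block average, $m\ge n/3$, and then a union bound. Your use of the moment generating function via Jensen is in fact the rigorous form of that route, because the paper's displayed step $\Prb\{U_n-\theta>x\}\le (n!)^{-1}\sum_\pi\Prb\{W(\pi)-\theta>x\}$ does not follow from the permutation identity; averaging over permutations controls $\E e^{s(U_n-\theta)}$, not tail probabilities directly.

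Your remark on $8p$ versus $4p$ is also fair. The factor $2$ can be absorbed for every $x>0$, not only for $x$ bounded away from zero, by replacing $c$ with $c/2$: when $n_0x^2<2\log 2/c$, the bound $4p\exp(-cn_0x^2/2)$ already exceeds $1$.
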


\begin{proof}
Let \(m=\lfloor n/2\rfloor\). The symmetrization identity is
\[
    U_n=\frac1{n!}\sum_{\pi}\frac1m
    \sum_{r=1}^m h(X_{\pi(2r-1)},X_{\pi(2r)}).
\]
For every permutation \(\pi\), the summands in the inner average are independent and lie in \([0,1]\). Hence
\[
\begin{aligned}
\Prb\{U_n-\theta>x\}
&\le \frac1{n!}\sum_{\pi}
\Prb\left\{m^{-1}\sum_{r=1}^m
[h(X_{\pi(2r-1)},X_{\pi(2r)})-\theta]>x\right\} \\
&\le \exp(-2mx^2).
\end{aligned}
\]
The lower tail is identical, and \(m\ge n/3\) for \(n\ge2\).
\end{proof}

\begin{lemma}\label{lem:scalar-bern-general}
Let \(W_1,\ldots,W_n\) be independent mean-zero random variables with
\[
    |W_i|\le L,
    \qquad
    v=\sum_{i=1}^n\E W_i^2 .
\]
Then, for every \(s>0\),
\[
    \Prb\left\{\sum_{i=1}^nW_i>s\right\}
    \le
    \exp\left\{-\frac{s^2}{2v+2Ls/3}\right\}.
\]
\end{lemma}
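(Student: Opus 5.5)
The plan is the standard Chernoff/Cramér argument for bounded summands, which is Bernstein's inequality in its usual form. For $\lambda\in(0,3/L)$, Markov's inequality applied to $\exp(\lambda\sum_i W_i)$ together with independence gives
\[
\Prb\Bigl\{\sum_{i=1}^n W_i>s\Bigr\}\le e^{-\lambda s}\prod_{i=1}^n\E e^{\lambda W_i}.
\]
So the task reduces to a bound on each individual moment generating function.

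\emph{Bounding each factor.} Expand $e^{\lambda W_i}=1+\lambda W_i+\sum_{k\ge2}\lambda^kW_i^k/k!$ and use $\E W_i=0$. Since $|W_i|\le L$, we have $|W_i|^k\le W_i^2L^{k-2}$. Combined with $k!\ge 2\cdot 3^{k-2}$, this gives
\[
\E e^{\lambda W_i}\le 1+\frac{\lambda^2\E W_i^2}{2}\sum_{k\ge2}\Bigl(\frac{\lambda L}{3}\Bigr)^{k-2}
=1+\frac{\lambda^2\E W_i^2}{2(1-\lambda L/3)}.
\]
Applying $1+x\le e^x$ and multiplying over $i$ yields
\[
\Prb\Bigl\{\sum_{i=1}^n W_i>s\Bigr\}\le\exp\Bigl\{-\lambda s+\frac{\lambda^2 v}{2(1-\lambda L/3)}\Bigr\}.
\]

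\emph{Choosing $\lambda$.} Take $\lambda=s/(v+Ls/3)$. This lies in $(0,3/L)$, because $\lambda L/3=(Ls/3)/(v+Ls/3)<1$. It also gives $1-\lambda L/3=v/(v+Ls/3)$. Substituting, the exponent becomes
\[
-\frac{s^2}{v+Ls/3}+\frac{s^2}{(v+Ls/3)^2}\cdot\frac{v+Ls/3}{2}=-\frac{s^2}{2(v+Ls/3)}=-\frac{s^2}{2v+2Ls/3},
\]
which is exactly the claimed bound.

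\emph{Degenerate case and main point to check.} If $v=0$, every $W_i=0$ almost surely and the probability is $0$, so the bound holds trivially. The only step needing care is the factorial comparison $k!\ge2\cdot3^{k-2}$ for $k\ge2$. It follows by induction: it holds with equality at $k=2$ and $k=3$, and each further step multiplies the left side by $k+1\ge3$ and the right side by $3$. Everything else is routine algebra.
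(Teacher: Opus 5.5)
Your proposal is correct and follows the same route as the paper. Both arguments apply a Chernoff bound with the per-summand estimate $\log\E e^{\lambda W_i}\le \lambda^2\E W_i^2/\{2(1-\lambda L/3)\}$ for $0<\lambda<3/L$, then take $\lambda=s/(v+Ls/3)$; you additionally justify that estimate via $k!\ge 2\cdot 3^{k-2}$ and handle the degenerate case $v=0$ (where this $\lambda$ equals $3/L$), both of which the paper leaves implicit.
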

\begin{proof}
For \(0<\lambda<3/L\),
\[
    \log\E e^{\lambda W_i}
    \le \frac{\lambda^2\E W_i^2}{2(1-\lambda L/3)} .
\]
Hence
\[
    \Prb\left\{\sum_iW_i>s\right\}
    \le
    \exp\left\{-\lambda s+
    \frac{\lambda^2v}{2(1-\lambda L/3)}\right\}.
\]
The choice
\[
    \lambda=\frac{s}{v+Ls/3}
\]
gives the stated exponent.
\end{proof}

\begin{lemma}\label{lem:bern-ind}
Let \(Z_1,\ldots,Z_n\) be independent random variables with \(0\le Z_i\le1\),
\[
    \bar Z=n^{-1}\sum_{i=1}^nZ_i,
    \qquad
    \bar\mu=n^{-1}\sum_{i=1}^n\E Z_i .
\]
For every \(x>0\),
\[
    \Prb\left\{
    \bar Z-\bar\mu
    >2\left(\frac{\bar\mu x}{n}\right)^{1/2}+\frac{2x}{3n}
    \right\}
    \le e^{-x}.
\]
Consequently, if \(Z_i(r)=\ind\{\norm{\vY_i}\le r\}\) and
\[
    \Prb\{\norm{\vY_i}\le r\}\le M_Yr^2/p,
    \qquad 0<r\le p^{1/2},
\]
then
\[
    n^{-1}\sum_{i=1}^nZ_i(r)
    =O_p\left(M_Yr^2/p+x/n\right)
\]
for every deterministic \(x\ge1\).
\end{lemma}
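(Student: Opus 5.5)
The first claim is a Bernstein-type inequality for a normalized sum of independent $[0,1]$-valued variables, and I will derive it from Lemma~\ref{lem:scalar-bern-general}. Put $W_i=Z_i-\E Z_i$. Then $\abs{W_i}\le1$, so we may take $L=1$. Also
\[
\E W_i^2\le\E Z_i^2\le\E Z_i,
\]
which gives $v=\sum_i\E W_i^2\le n\bar\mu$. Lemma~\ref{lem:scalar-bern-general} then yields
\[
\Prb\Bigl\{\textstyle\sum_iW_i>s\Bigr\}\le\exp\{-s^2/(2n\bar\mu+2s/3)\}.
\]
It remains to show that
\[
s=2(n\bar\mu x)^{1/2}+2x/3
\]
makes the exponent at least $x$, i.e.\ that $s^2\ge x(2n\bar\mu+2s/3)$. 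Writing $a=(n\bar\mu x)^{1/2}$ and $b=x/3$, we have $s=2a+2b$ and $x(2n\bar\mu+2s/3)=2a^2+4ab+4b^2$. On the other hand $s^2=4a^2+8ab+4b^2$, which dominates. Dividing the event by $n$ gives exactly the stated threshold $2(\bar\mu x/n)^{1/2}+2x/(3n)$. This step is routine algebra; the only care needed is to use the crude bound $\E W_i^2\le\E Z_i$ rather than the variance, so that $\bar\mu$ appears.

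For the consequence, take $Z_i=Z_i(r)$, so that $\bar\mu\le M_Yr^2/p$ by the small-ball hypothesis. The main inequality gives, with probability at least $1-e^{-x}$,
\[
n^{-1}\sum_iZ_i(r)\le\bar\mu+2(\bar\mu x/n)^{1/2}+2x/(3n).
\]
The AM--GM inequality gives $2(\bar\mu x/n)^{1/2}\le\bar\mu+x/n$. Hence the average is at most $2M_Yr^2/p+(5/3)x/n$ with probability at least $1-e^{-x}$.

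To read this as an $O_p$ statement, fix $\epsilon>0$ and choose $x'=\max\{x,\log(1/\epsilon)\}$. Since $x\ge1$, we have $x'\le x\log(e/\epsilon)$, which is a constant multiple of $x$. So the ratio of the average to $M_Yr^2/p+x/n$ exceeds a constant $C_\epsilon$ with probability at most $\epsilon$, which is the $O_p$ claim.

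I don't expect a real obstacle. The only point to watch is that the $O_p$ bound must be uniform over a deterministic choice of $r$ and $x$, and the explicit exponential tail handles this directly. I will also note that the bound holds for any fixed deterministic $r\in(0,p^{1/2}]$.
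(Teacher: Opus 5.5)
Your proposal is correct and follows the paper's proof: Lemma~\ref{lem:scalar-bern-general} with $L=1$ and $v\le n\bar\mu$, the same choice $s=2(n\bar\mu x)^{1/2}+2x/3$, and the AM--GM bound $2(\bar\mu x/n)^{1/2}\le \bar\mu+x/n$ for the consequence. Two of your details improve on the paper: the identity $x(2n\bar\mu+2s/3)=2a^2+4ab+4b^2\le s^2$ gives the stated constant $e^{-x}$, whereas the paper's inequality chain only yields an exponent $cx$; and your inflation $x'=\max\{x,\log(1/\epsilon)\}\le x\log(e/\epsilon)$ makes explicit the $O_p$ conversion that the paper leaves implicit.
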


\begin{proof}
Apply Lemma \ref{lem:scalar-bern-general} with
\[
    W_i=Z_i-\E Z_i,
    \qquad
    L=1,
    \qquad
    v=\sum_{i=1}^n\Var(Z_i)
       \le \sum_{i=1}^n\E Z_i=n\bar\mu .
\]
Then
\[
    \Prb\left\{\sum_{i=1}^n(Z_i-\E Z_i)>s\right\}
    \le
    \exp\left\{-\frac{s^2}{2n\bar\mu+2s/3}\right\}.
\]
Set
\[
    s=2(n\bar\mu x)^{1/2}+2x/3.
\]
Then
\[
    s^2\ge 4n\bar\mu x+4x^2/9,
    \qquad
    2n\bar\mu+2s/3\le 4n\bar\mu+4x/3,
\]
so the exponent is bounded below by \(cx\). If \(\bar\mu(r)\le M_Yr^2/p\), then
\[
    \left\{\frac{\bar\mu(r)x}{n}\right\}^{1/2}
    \le \frac12\bar\mu(r)+\frac{x}{2n}
    \le C\{M_Yr^2/p+x/n\}.
\]
\end{proof}

\begin{lemma}\label{lem:matrix-bern-general}
Let \(\mathbf X_1,\ldots,\mathbf X_n\) be independent self-adjoint random matrices with
\[
    \E\mathbf X_i=\mathbf 0,
    \qquad
    \norm{\mathbf X_i}_{\rm op}\le L .
\]
Set
\[
    v=\left\|\sum_{i=1}^n \E\mathbf X_i^2\right\|_{\rm op}.
\]
Then, for every \(t>0\),
\[
    \Prb\left\{\left\|\sum_{i=1}^n\mathbf X_i\right\|_{\rm op}>t\right\}
    \le
    2p\exp\left\{-\frac{t^2/2}{v+Lt/3}\right\}.
\]
Consequently, for every \(x\ge\log(2p)\),
\[
    \Prb\left\{\left\|\sum_{i=1}^n\mathbf X_i\right\|_{\rm op}
    >C\left(\sqrt{vx}+Lx\right)\right\}
    \le 2p e^{-x}.
\]
\end{lemma}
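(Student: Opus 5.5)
This is the matrix Bernstein inequality of Tropp, and we prove it by the Lieb–Tropp matrix Laplace-transform method. Write $\mathbf S=\sum_i\mathbf X_i$. Since $\norm{\mathbf S}_{\rm op}=\max\{\lambda_{\max}(\mathbf S),\lambda_{\max}(-\mathbf S)\}$, and $-\mathbf X_i$ satisfies the same hypotheses, it suffices to show
\[
\Prb\{\lambda_{\max}(\mathbf S)>t\}\le p\exp\{-(t^2/2)/(v+Lt/3)\}.
\]
The two one-sided bounds then add to give the factor $2p$.

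First, for $\lambda>0$ we use $\Prb\{\lambda_{\max}(\mathbf S)>t\}\le e^{-\lambda t}\E\, e^{\lambda_{\max}(\lambda\mathbf S)}\le e^{-\lambda t}\E\tr\exp(\lambda\mathbf S)$. This holds because $e^{\lambda_{\max}(\mathbf A)}\le\tr e^{\mathbf A}$ for positive definite $e^{\mathbf A}$.

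Second, we invoke Lieb's concavity theorem: $\mathbf A\mapsto\tr\exp(\mathbf H+\log\mathbf A)$ is concave on positive definite matrices. Combining this with Jensen's inequality and conditioning on the $\mathbf X_i$ one at a time gives the subadditivity of matrix cumulants,
\[
\E\tr\exp\Bigl(\sum_i\lambda\mathbf X_i\Bigr)\le\tr\exp\Bigl(\sum_i\log\E e^{\lambda\mathbf X_i}\Bigr).
\]
This is the only genuinely nontrivial ingredient, and we quote it rather than reprove it.

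Third, we bound each matrix mgf, mirroring the scalar computation in Lemma~\ref{lem:scalar-bern-general}. The scalar function $f(x)=(e^{\lambda x}-1-\lambda x)/x^2$ is increasing. Hence, for $x\le L$ and $0<\lambda<3/L$,
\[
e^{\lambda x}\le 1+\lambda x+x^2\frac{\lambda^2/2}{1-\lambda L/3}.
\]
The transfer rule applies to the self-adjoint $\mathbf X_i$, whose eigenvalues lie in $[-L,L]$, so this holds in the semidefinite order:
\[
\E e^{\lambda\mathbf X_i}\preceq\mathbf I+g(\lambda)\E\mathbf X_i^2,\qquad g(\lambda)=\frac{\lambda^2/2}{1-\lambda L/3}.
\]
Operator monotonicity of $\log$ and $\log(1+x)\le x$ then give $\log\E e^{\lambda\mathbf X_i}\preceq g(\lambda)\E\mathbf X_i^2$. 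Since $\tr\exp$ is monotone in the semidefinite order,
\[
\E\tr e^{\lambda\mathbf S}\le\tr\exp\Bigl\{g(\lambda)\sum_i\E\mathbf X_i^2\Bigr\}\le p\,e^{g(\lambda)v}.
\]

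Fourth, we optimize exactly as in Lemma~\ref{lem:scalar-bern-general}, taking $\lambda=t/(v+Lt/3)<3/L$. This gives $-\lambda t+g(\lambda)v\le-(t^2/2)/(v+Lt/3)$, which is the stated tail bound.

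For the consequence, set $t=C(\sqrt{vx}+Lx)$. With $C\ge2$ we have $t^2\ge C^2(vx+L^2x^2)$ and $v+Lt/3\le v+CL\sqrt{vx}/3+CL^2x/3$. By AM–GM, $L\sqrt{vx}\le (v+L^2x)/2$ when $x\ge1$; here $x\ge\log(2p)\ge\log 2$ suffices after enlarging $C$. Hence $v+Lt/3\lesssim C(v+L^2x)$, so the exponent $(t^2/2)/(v+Lt/3)$ is at least of order $Cx$, which is $\ge x$ for $C$ large. This yields the bound $2pe^{-x}$.

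The main obstacle is the second step, the matrix cumulant subadditivity, since noncommutativity blocks the scalar factorization of the mgf. The plan is to cite Lieb's theorem and Tropp's corollary rather than reprove them. Everything else is a scalar calculation.
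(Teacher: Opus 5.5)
Your proposal is correct and follows the same route as the paper: the paper cites Tropp (2012) for the one-sided bound on $\lambda_{\max}(\sum_i\mathbf X_i)$, applies it to $\pm\sum_i\mathbf X_i$ with a union bound, and substitutes $t=C(\sqrt{vx}+Lx)$, while you also unpack Tropp's Laplace-transform argument, quoting only the Lieb-based cumulant subadditivity. In the last step the caveat about $x\ge1$ is unnecessary, since $L\sqrt{vx}\le(v+L^2x)/2$ holds for every $x\ge0$; tracking constants gives an exponent of at least $C^2x/(2+C)$, so $C=2$ already suffices.
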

\begin{proof}
{\color{black}According to \citep{tropp2012}, it holds that}
\[
    \Prb\left\{\lambda_{\max}\left(\sum_i\mathbf X_i\right)>t\right\}
    \le p\exp\left\{-\frac{t^2/2}{v+Lt/3}\right\},
\]
and the same inequality for \(\lambda_{\max}(-\sum_i\mathbf X_i)\).  Therefore
\[
    \Prb\left\{\left\|\sum_i\mathbf X_i\right\|_{\rm op}>t\right\}
    \le 2p\exp\left\{-\frac{t^2/2}{v+Lt/3}\right\}.
\]
For \(t=C(\sqrt{vx}+Lx)\),
\[
    \frac{t^2/2}{v+Lt/3}\ge x
\]
for a sufficiently large universal \(C\).
\end{proof}

\begin{lemma}\label{lem:matrix-bern-L}
Let \(\vY_1,\ldots,\vY_n\) be independent copies of \(\vY\in\R^p\). Define
\[
    \mL(\vY)=\norm{\vY}^{-1}\{\mI_p-U(\vY)U(\vY)\T\},
    \qquad
    \mG=\E\mL(\vY).
\]
Assume
\[
    p\E\norm{\vY}^{-2}\le M_Y,
    \qquad
    \sup_{0<t\le1}t^{-2}\Prb\{\norm{\vY}\le t p^{1/2}\}\le M_Y .
\]
For \(x\ge\log(2p)\), put
\[
    \ell=(px/n)^{1/2},
    \qquad
    \mL_i^{>} = \mL(\vY_i)\ind\{\norm{\vY_i}>\ell\},
    \qquad
    \mL_i^{<} = \mL(\vY_i)\ind\{\norm{\vY_i}\le\ell\}.
\]
Then
\[
    \left\|n^{-1}\sum_{i=1}^n\{\mL(\vY_i)-\mG\}\right\|_{\rm op}
    =O_p\left(M_Y\sqrt{\frac{x}{np}}\right).
\]
\end{lemma}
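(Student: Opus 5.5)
The plan is a truncation argument at level $\ell=(px/n)^{1/2}$. Write $\mL(\vY_i)-\mG=(\mL_i^{>}-\E\mL_i^{>})+(\mL_i^{<}-\E\mL_i^{<})$. The large-radius part is handled by Lemma \ref{lem:matrix-bern-general}. The small-radius part is handled by cruder trace/scalar bounds, using the small-ball condition through Lemma \ref{lem:bern-ind}.

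\emph{Step 1: the bulk part.} Apply Lemma \ref{lem:matrix-bern-general} to $\mathbf X_i=n^{-1}(\mL_i^{>}-\E\mL_i^{>})$. Since $\mI_p-U U\T$ is a projection, $\norm{\mL(\vY)}_{\rm op}\le\norm{\vY}^{-1}$. Hence $\norm{\mathbf X_i}_{\rm op}\le 2/(n\ell)=:L$. For the variance proxy, $\mL(\vY)^2=\norm{\vY}^{-2}(\mI_p-UU\T)\preceq \norm{\vY}^{-2}\mI_p$, so
\[
v\le n^{-1}\E\norm{\vY}^{-2}\le M_Y/(np).
\]
Lemma \ref{lem:matrix-bern-general} then gives, with probability at least $1-2pe^{-x}$,
\[
\Big\|n^{-1}\sum_i(\mL_i^{>}-\E\mL_i^{>})\Big\|_{\rm op}\lesssim \sqrt{M_Yx/(np)}+\frac{x}{n\ell}.
\]
Because $\ell=(px/n)^{1/2}$, we have $x/(n\ell)=(x/(np))^{1/2}$. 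So this part is $O_p(M_Y\sqrt{x/(np)})$, using $M_Y\ge1$.

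\emph{Step 2: the truncated expectation.} Bound $\norm{\E\mL_1^{<}}_{\rm op}\le \E\norm{\vY}^{-1}\ind\{\norm{\vY}\le\ell\}\le (\E\norm{\vY}^{-2})^{1/2}\Prb(\norm{\vY}\le\ell)^{1/2}$. When $\ell\le p^{1/2}$, which holds as $x\le n$ eventually, the small-ball condition with $t=\ell p^{-1/2}$ gives $\Prb(\norm{\vY}\le \ell)\le M_Y\ell^2/p=M_Yx/n$. Therefore
\[
\norm{\E\mL_1^{<}}_{\rm op}\le (M_Y/p)^{1/2}(M_Yx/n)^{1/2}=M_Y\sqrt{x/(np)}.
\]

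\emph{Step 3: the empirical truncated part, the main obstacle.} The quantity $n^{-1}\sum_i\norm{\vY_i}^{-1}\ind\{\norm{\vY_i}\le\ell\}$ can be heavy, since $\norm{\vY}^{-1}$ is unbounded near zero. First I would bound its expectation, which is the quantity from Step 2, and conclude by Markov's inequality that it is $O_p(M_Y\sqrt{x/(np)})$. That suffices because only an $O_p$ statement is needed.

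If Markov's inequality turns out to be insufficient, I would instead use a dyadic peeling. Set $r_m=2^{-m}\ell$. On each shell, use Lemma \ref{lem:bern-ind} to control the counts $n^{-1}\sum_i\ind\{\norm{\vY_i}\le r_m\}=O_p(M_Yr_m^2/p+x/n)$, and sum $r_{m+1}^{-1}$ times the counts. The $x/n$ term is the delicate one, since it does not decay with $m$. I expect that one must fall back on the Markov/expectation bound for the deepest shells. That bound is valid and already of the right order, so I would present the Markov argument as the primary route.

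Combining Steps 1–3 with the triangle inequality gives the claimed rate $O_p(M_Y\sqrt{x/(np)})$.
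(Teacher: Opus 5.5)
Your proposal is correct and follows the paper's proof: the same truncation at $\ell=(px/n)^{1/2}$, the same matrix Bernstein bound for the bulk part (with $x/(n\ell)=\sqrt{x/(np)}$), and the same Cauchy--Schwarz plus small-ball bound for $\norm{\E\mL_1^{<}}_{\rm op}$. The only difference is the empirical truncated term. The paper applies Cauchy--Schwarz there, bounding $\norm{n^{-1}\sum_i\mL_i^{<}}_{\rm op}$ by $(n^{-1}\sum_i\norm{\vY_i}^{-2})^{1/2}$ times the square root of the empirical small-ball count, which it controls via Lemma~\ref{lem:bern-ind}. Your direct Markov bound on $n^{-1}\sum_i\norm{\vY_i}^{-1}\ind\{\norm{\vY_i}\le\ell\}$ reuses the Step~2 expectation, is valid, is slightly simpler, and gives the same $O_p$ rate, so the peeling fallback is unnecessary.
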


\begin{proof}
For
\[
    \mX_i=\mL_i^{>}-\E\mL_i^{>},
\]
\[
    \norm{\mX_i}_{\rm op}\le2\ell^{-1},
    \qquad
    \left\|\sum_{i=1}^n\E \mX_i^2\right\|_{\rm op}
    \le n\E\norm{\mL(\vY)}_{\rm op}^2
    \le nM_Y/p.
\]
Lemma \ref{lem:matrix-bern-general}, with
\[
    v\le nM_Y/p,
    \qquad
    L=2\ell^{-1},
\]
gives
\[
\Prb\left[
\left\|n^{-1}\sum_i\mX_i\right\|_{\rm op}
>C\left\{\sqrt{\frac{M_Yx}{np}}+\frac{x}{n\ell}\right\}
\right]
\le2p e^{-x},
\qquad
\frac{x}{n\ell}=\sqrt{\frac{x}{np}}.
\]
For the lower-tail part,
\[
    \norm{\E \mL_i^{<}}_{\rm op}
    \le \E\{\norm{\vY}^{-1}\ind(\norm{\vY}\le\ell)\}
    \le (\E\norm{\vY}^{-2})^{1/2}\Prb(\norm{\vY}\le\ell)^{1/2}
    \le M_Y\sqrt{\frac{x}{np}}.
\]
Moreover,
\[
\begin{aligned}
\left\|n^{-1}\sum_i\mL_i^{<}\right\|_{\rm op}
&\le
\left(n^{-1}\sum_i\norm{\vY_i}^{-2}\right)^{1/2}
\left(n^{-1}\sum_i\ind(\norm{\vY_i}\le\ell)\right)^{1/2}  \\
&=O_p\left((M_Y/p)^{1/2}\right)
  O_p\left(M_Y\ell^2/p+x/n\right)^{1/2} \\
&=O_p\left(M_Y\sqrt{\frac{x}{np}}\right).
\end{aligned}
\]
\end{proof}

\begin{lemma}\label{lem:dejong-tool}
Let \(\bm{V}_{n1},\ldots,\bm{V}_{nN}\) be independent mean-zero random vectors and let
\[
    H_n=\sum_{a\ne b}\bm{V}_{na}\T \mathbf{C}_{nab}\bm{V}_{nb},
    \qquad
    \sigma_n^2=\Var(H_n)>0.
\]
Define the row influence and fourth-projection ratios by
\[
    h_n=\sigma_n^{-2}\max_a
    \sum_{b\ne a}\E\{\bm{V}_{na}\T \mathbf{C}_{nab}\bm{V}_{nb}\}^2,
\]
\[
    d_n=\sigma_n^{-4}
    \sum_a\E\left[
    \left\{\sum_{b\ne a}\E(\bm{V}_{na}\T \mathbf{C}_{nab}\bm{V}_{nb}\mid \bm{V}_{na})^2\right\}^2
    \right].
\]
If \(\bm{G}_{na}\) are independent Gaussian vectors with the same covariance matrices as \(\bm{V}_{na}\), then there is a deterministic function \(\omega_{\rm dJ}\), \(\omega_{\rm dJ}(u)\downarrow0\), such that
\[
    \sup_t\left|
    \Prb(H_n/\sigma_n\le t)-\Prb(H_{n,G}/\sigma_n\le t)
    \right|
    \le \omega_{\rm dJ}(h_n+d_n),
\]
where \(H_{n,G}=\sum_{a\ne b}\bm{G}_{na}\T \mathbf{C}_{nab}\bm{G}_{nb}\). The same bound holds conditionally on the data.
\end{lemma}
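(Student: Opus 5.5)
The plan is a Lindeberg replacement argument for degenerate quadratic forms in independent vectors, in the spirit of de Jong's theorem for generalized quadratic forms and of Mossel--O'Donnell--Oleszkiewicz invariance. Put $\vg_a$ for the Gaussian vectors. For a smooth test function $f$ with three bounded derivatives, we replace $\bm V_{na}$ by $\bm G_{na}$ one index at a time, $a=1,\ldots,N$. We then bound $\abs{\E f(H_n/\sigma_n)-\E f(H_{n,G}/\sigma_n)}$ by the sum of the one-step errors. Afterwards we pass from smooth $f$ to indicator functions using anti-concentration of the Gaussian chaos $H_{n,G}/\sigma_n$.

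\emph{One-step decomposition.} At step $a$, write the hybrid form as $H^{(a)}=H_{-a}+\bm V_{na}\T\bm\ell_a$. Here $\bm\ell_a=\sum_{b\ne a}(\mathbf C_{nab}+\mathbf C_{nba}\T)\bm W_b$, and $\bm W_b$ equals $\bm G_{nb}$ for $b<a$ and $\bm V_{nb}$ for $b>a$. Both $H_{-a}$ and $\bm\ell_a$ are independent of $\bm V_{na}$ and $\bm G_{na}$. Taylor expand $f$ to third order around $H_{-a}/\sigma_n$. The first-order terms match because both vectors have mean zero. The second-order terms match because the covariances coincide and $\bm\ell_a$ is independent of the swapped vector. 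The remainder is bounded by $\norm{f'''}_\infty\sigma_n^{-3}\bigl\{\E\abs{\bm V_{na}\T\bm\ell_a}^3+\E\abs{\bm G_{na}\T\bm\ell_a}^3\bigr\}$.

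\emph{Controlling the remainders.} I would not estimate third moments directly. Instead I would use a truncated expansion: for any $\epsilon>0$, $\abs{r}\le \min(\norm{f''}\abs{Z}^2,\norm{f'''}\abs{Z}^3)$, which is bounded by $\epsilon\norm{f'''}Z^2+\norm{f''}Z^2\ind\{\abs Z>\epsilon\}$, where $Z=\bm V_{na}\T\bm\ell_a/\sigma_n$. Summing $\E Z^2$ over $a$ gives $O(1)$, since this sum is comparable to $\Var(H_n)/\sigma_n^2$. The Lindeberg tail $\sum_a\E Z^2\ind\{\abs Z>\epsilon\}\le\epsilon^{-2}\sum_a\E Z^4$ is handled by expanding $\E(\bm V_{na}\T\bm\ell_a)^4$ after conditioning on $\bm V_{na}$. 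The diagonal part gives $\sum_b\E(\bm V_{na}\T\mathbf C_{nab}\bm V_{nb})^4$-type terms. The off-diagonal part gives $\E[\{\sum_{b}\E(\cdot\mid\bm V_{na})^2\}^2]$. The latter is exactly the $d_n$ term. The former is bounded via $\max_a$ of the row sums, i.e.\ $h_n$, multiplied by $\sum_a\sum_b\E(\cdot)^2/\sigma_n^2=O(1)$.

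I expect the main obstacle to be the hybrid terms, in which $\bm\ell_a$ mixes Gaussian and original coordinates. Fourth moments of mixed sums require hypercontractivity of Gaussian second chaos for the Gaussian parts, and the $d_n$ quantity for the non-Gaussian parts. Choosing $\epsilon=(h_n+d_n)^{1/6}$ then yields a smooth-metric bound $C\norm{f}_{C^3}(h_n+d_n)^{1/6}$.

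\emph{From smooth to Kolmogorov distance.} Finally, smooth the indicator at bandwidth $\eta$ and combine this with the Carbery--Wright anti-concentration for Gaussian polynomials of degree two. Because $\Var(H_{n,G})=\sigma_n^2$, this gives $\sup_t\Prb(\abs{H_{n,G}/\sigma_n-t}\le\eta)\le C\eta^{1/2}$. Optimizing over $\eta$ gives a bound $\omega_{\rm dJ}(h_n+d_n)$ with $\omega_{\rm dJ}(u)=Cu^{c}$, which decreases to zero. The conditional statement follows from the same argument applied with the data fixed, where the multipliers play the role of the independent vectors.
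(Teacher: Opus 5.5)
Your Lindeberg skeleton is sound as far as it goes. The first- and second-order terms cancel as you say. The passage from smooth test functions to the Kolmogorov distance via Carbery--Wright works because $\Var(H_{n,G})=\sigma_n^2$. Every piece of the one-step error that involves a Gaussian vector is, after conditioning, a Gaussian fourth moment that $h_n$ and $d_n$ do control (at least when $\mathbf{C}_{nba}=\mathbf{C}_{nab}\T$). The genuine gap is the ``diagonal part''. Write $X_{ab}=\bm{V}_{na}\T\mathbf{C}_{nab}\bm{V}_{nb}$. When both $\bm{V}_{na}$ and $\bm{V}_{nb}$ are still the original vectors, the one-step fourth moments contain $\sum_{a\ne b}\E X_{ab}^4$, and you bound $\sigma_n^{-4}\sum_{a\ne b}\E X_{ab}^4$ by $h_n\cdot O(1)$. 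That step needs $\E X_{ab}^4\le C(\E X_{ab}^2)^2$, a hypercontractivity property of the $\bm{V}_{nb}$ that is not among the hypotheses. Neither quantity in the statement sees it. The quantity $h_n$ involves only second moments. The quantity $d_n$ involves only $\E[\{\E(X_{ab}^2\mid\bm{V}_{na})\}^2]$, which by Jensen is at most $\E X_{ab}^4$ and never involves the fourth moment of $\bm{V}_{nb}$. So the hybrid terms are not the obstacle; the purely non-Gaussian pairs are.

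No sharper estimate can close this, because the lemma is false as written. Take scalar i.i.d.\ $V_{na}$, $a=1,\ldots,2M$, with $\Prb(V_{na}=\pm p^{-1/2})=p/2$ and $\Prb(V_{na}=0)=1-p$. Let $\mathbf{C}_{nab}=1$ if $\{a,b\}=\{2m-1,2m\}$ for some $m$, and $0$ otherwise. Then $H_n=2\sum_{m=1}^{M}V_{n,2m-1}V_{n,2m}$, $\sigma_n^2=4M$, $h_n=1/(4M)$ and $d_n=1/(8Mp)$. With $p=M^{-3/4}$ we get $h_n+d_n\to0$. Yet $\Prb(H_n\ne0)\le Mp^2=M^{-1/2}\to0$ while $H_{n,G}/\sigma_n\dd N(0,1)$, so the Kolmogorov distance tends to $1/2$. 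Here $\sigma_n^{-4}\sum_{a\ne b}\E X_{ab}^4=M^{1/2}/8\to\infty$, which is precisely your uncontrolled term. A second example shows that the symmetry $\mathbf{C}_{nba}=\mathbf{C}_{nab}\T$ is also needed. Take a star array with $\mathbf{C}_{nbN}=1$ for $b<N$, all other blocks zero, and Rademacher scalars. Then $h_n+d_n=2/(N-1)$, but $H_n/\sigma_n$ is asymptotically $N(0,1)$ while $H_{n,G}/\sigma_n$ is a product of two independent standard normals.

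The paper's own proof is only a citation of de Jong (1987). De Jong's theorem is a normal limit theorem under the hypothesis $\E H_n^4/\sigma_n^4\to3$, and that hypothesis fails in the first example, where $\E H_n^4/\sigma_n^4=M^{1/2}+3-3/M$. So the cited result does not give a bound in terms of $h_n+d_n$ alone either. What is missing is an explicit fourth-moment input, for instance $\E(\bm{u}\T\bm{V}_{nb})^4\le\kappa_n\{\E(\bm{u}\T\bm{V}_{nb})^2\}^2$ for all $\bm{u}$ and $b$. Conditioning on $\bm{V}_{na}$ then gives $\sigma_n^{-4}\sum_{a\ne b}\E X_{ab}^4\le\kappa_nd_n$, and with symmetric blocks your argument goes through with a bound of the form $\omega_{\rm dJ}(h_n+\kappa_nd_n)$. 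In the paper's two applications this input is available. For the spatial signs, $\kappa_n\le\mathfrak C_n$ via the term $\sup_{\norm{\va}=1}\E\{\va\T\mOmega_k^{-1/2}\vS_{ki}\}^4$ in $\mathfrak C_n$. For the Rademacher multipliers, $\kappa_n=1$.
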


\begin{proof}
This is the coefficient form of de Jong's comparison theorem for generalized degenerate quadratic forms \citep{dejong1987}.
\end{proof}

\begin{lemma}\label{lem:HW}
For symmetric matrices \(\mathbf A,\mathbf B\in\R^{m\times m}\), let \(\lambda_1(\mathbf A),\ldots,\lambda_m(\mathbf A)\) and \(\lambda_1(\mathbf B),\ldots,\lambda_m(\mathbf B)\) be arranged in nonincreasing order. Then
\[
    \sum_{r=1}^m\{\lambda_r(\mathbf A)-\lambda_r(\mathbf B)\}^2
    \le \norm{\mathbf A-\mathbf B}_F^2.
\]
\end{lemma}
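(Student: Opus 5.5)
The inequality in Lemma \ref{lem:HW} is the Hoffman--Wielandt inequality for symmetric matrices. I would derive it from two facts: the unitary invariance of the Frobenius norm, and a trace inequality for ordered spectra.

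First, expand the right-hand side. For symmetric \(\mathbf A,\mathbf B\),
\[
    \norm{\mathbf A-\mathbf B}_F^2=\tr(\mathbf A^2)+\tr(\mathbf B^2)-2\tr(\mathbf A\mathbf B).
\]
Since \(\tr(\mathbf A^2)=\sum_r\lambda_r(\mathbf A)^2\) and likewise for \(\mathbf B\), the claimed inequality is equivalent to
\[
    \tr(\mathbf A\mathbf B)\le\sum_{r=1}^m\lambda_r(\mathbf A)\lambda_r(\mathbf B).
\]
This is von Neumann's (Fan's) trace inequality in the symmetric case.

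Second, prove that trace inequality. Diagonalize \(\mathbf A=\mathbf P\Lambda\mathbf P\T\) and \(\mathbf B=\mathbf Q\mathrm M\mathbf Q\T\), with both spectra in nonincreasing order. Put \(\mathbf W=\mathbf P\T\mathbf Q\), which is orthogonal. Then
\[
    \tr(\mathbf A\mathbf B)=\sum_{r,s}\lambda_r\mu_s w_{rs}^2 .
\]
The matrix \((w_{rs}^2)\) is doubly stochastic, so \(\tr(\mathbf A\mathbf B)\) is a linear function evaluated at a point of the Birkhoff polytope. By Birkhoff's theorem this linear function attains its maximum over the polytope at a permutation matrix. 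By the rearrangement inequality, the best permutation pairs the ordered eigenvalues, giving \(\sum_r\lambda_r\mu_r\).

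If one prefers to avoid Birkhoff's theorem, there is an Abel-summation route. Write \(\lambda_r=\sum_{k\ge r}(\lambda_k-\lambda_{k+1})+\lambda_m\) with nonnegative increments, and similarly for \(\mu_s\). One then uses that every partial sum \(\sum_{r\le k,s\le l}w_{rs}^2\) is at most \(\min(k,l)\). This reduces the claim to a sum of nonnegative terms.

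The step needing the most care is the doubly-stochastic/rearrangement argument. The remaining steps are algebra. Since the result is classical, citing Hoffman and Wielandt (1953) would also suffice.
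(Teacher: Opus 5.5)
Your proof is correct, but it is not the paper's route: the paper gives no derivation. It simply cites the Hoffman--Wielandt theorem for normal matrices and says it is being restricted to symmetric matrices.

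What you give is essentially the classical Hoffman--Wielandt argument, specialized to the real symmetric case. You use $\norm{\mathbf A-\mathbf B}_F^2=\tr(\mathbf A^2)+\tr(\mathbf B^2)-2\tr(\mathbf A\mathbf B)$ to reduce the claim to $\tr(\mathbf A\mathbf B)\le\sum_r\lambda_r(\mathbf A)\lambda_r(\mathbf B)$. You then write the left side as a bilinear form in the two spectra, weighted by the orthostochastic matrix $(w_{rs}^2)$, and maximize over the Birkhoff polytope.

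The citation is shorter and covers complex spectra. Your route is self-contained, and it makes explicit a step the paper's one-line citation leaves implicit. The normal-matrix theorem only asserts that \emph{some} permutation $\sigma$ satisfies $\sum_r|\lambda_r(\mathbf A)-\lambda_{\sigma(r)}(\mathbf B)|^2\le\norm{\mathbf A-\mathbf B}_F^2$. To get the sorted pairing in the lemma, one needs exactly your rearrangement step: for real spectra, matching nonincreasing order to nonincreasing order minimizes $\sum_r\{\lambda_r(\mathbf A)-\lambda_{\sigma(r)}(\mathbf B)\}^2$ over $\sigma$.

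One small point in your Abel-summation alternative: the last coefficient $\lambda_m$ is not an increment and may be negative. This is harmless, because the partial sums with $k=m$ or $l=m$ equal $\min(k,l)$ exactly, so those terms vanish. Alternatively, shift $\mathbf A$ and $\mathbf B$ by multiples of the identity, which changes both sides of the trace inequality by the same amount.
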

\begin{proof}
This is the Hoffman--Wielandt inequality for normal matrices \citep{hoffmanwielandt1953}, restricted to symmetric matrices.
\end{proof}

\begin{lemma}\label{lem:be-chisq}
Let \(\zeta_r=\chi_{1r}^2-1\) be independent and
\[
    W_n=\sum_r a_{nr}\zeta_r,
    \qquad
    s_n^2=2\sum_r a_{nr}^2.
\]
If \(s_n^2>0\), then
\[
    \sup_t\abs{\Prb(W_n/s_n\le t)-\Phi(t)}
    \le C\frac{\sum_r\abs{a_{nr}}^3}{(\sum_r a_{nr}^2)^{3/2}}
    \le C\frac{\max_r\abs{a_{nr}}}{(\sum_r a_{nr}^2)^{1/2}}.
\]
\end{lemma}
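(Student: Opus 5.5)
The plan is to obtain Lemma \ref{lem:be-chisq} from the classical Berry--Esseen theorem for sums of independent, non-identically distributed summands. Set \(X_r=a_{nr}\zeta_r\). These are independent with mean zero and \(\Var(X_r)=2a_{nr}^2\), so \(\Var(W_n)=s_n^2\). Only finitely many \(a_{nr}\) are nonzero in the applications (at most \(2p\)). If the sum is infinite, we first note that \(\sum_r a_{nr}^2<\infty\) makes \(W_n\) an \(L^2\) limit of its partial sums. We then apply the bound to the truncations and pass to the limit, using continuity of \(\Phi\) and convergence of the ratio on the right-hand side.

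The Berry--Esseen inequality of Esseen (or Shevtsova's version with an absolute constant) gives
\[
\sup_t\abs{\Prb(W_n/s_n\le t)-\Phi(t)}\le C_0\frac{\sum_r\E\abs{X_r}^3}{s_n^3}.
\]
Since \(\E\abs{\chi_1^2-1}^3=c_3<\infty\) is a universal constant, the numerator equals \(c_3\sum_r\abs{a_{nr}}^3\). The denominator is \(s_n^3=2^{3/2}(\sum_r a_{nr}^2)^{3/2}\). This yields the first inequality with \(C=C_0c_3/2^{3/2}\).

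The second inequality is elementary: \(\sum_r\abs{a_{nr}}^3\le\max_r\abs{a_{nr}}\sum_r a_{nr}^2\). Dividing by \((\sum_r a_{nr}^2)^{3/2}\) gives \(\max_r\abs{a_{nr}}/(\sum_r a_{nr}^2)^{1/2}\).

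There is no real obstacle here. The only points needing care are the finiteness of the third absolute moment of \(\chi_1^2-1\) (it has all moments) and, in the infinite-sum case, the limiting argument. For the latter, the Kolmogorov distance to \(\Phi\) of the normalized partial sums converges because \(\Phi\) is continuous and the partial sums converge in distribution.
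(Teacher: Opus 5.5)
Your proposal is correct and matches the paper's proof: the paper also normalizes the summands, applies the non-i.i.d.\ Berry--Esseen inequality (citing Petrov) using $\E\abs{\chi_1^2-1}^3<\infty$, and finishes with $\sum_r\abs{a_{nr}}^3\le\max_r\abs{a_{nr}}\sum_r a_{nr}^2$. Your truncation argument for infinitely many terms is an extra the paper omits, since only $2p$ terms arise. For that step, weak convergence at continuity points plus right-continuity already gives the needed inequality, so you do not need the Kolmogorov distances to converge.
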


\begin{proof}
Set \(X_{nr}=a_{nr}\zeta_r/s_n\). Then
\[
    \sum_r\E X_{nr}^2=1,
    \qquad
    \sum_r\E\abs{X_{nr}}^3
    =s_n^{-3}\E\abs{\chi_1^2-1}^3\sum_r\abs{a_{nr}}^3.
\]
The Berry--Esseen inequality \citep[see][]{petrov1995} gives
\[
    \sup_t\abs{\Prb(W_n/s_n\le t)-\Phi(t)}
    \le C\sum_r\E\abs{X_{nr}}^3.
\]
Since
\[
    \sum_r\abs{a_{nr}}^3
    \le \max_r\abs{a_{nr}}\sum_r a_{nr}^2,
\]
the second bound follows.
\end{proof}

\subsection{Pairwise-quantile diagonal standardization}
\label{app:D}

\begin{lemma}\label{lem:marginal-id}
Under \eqref{eq:model} {\color{black}and $H_0$}, for every \(k,j\),
\[
    \Prb\{\abs{X_{ki j}-X_{k\ell j}}\le t\}
    =F_{k,\Delta}(t/(d^\circ_{kj})^{1/2}),
    \qquad t\ge0.
\]
Thus the \(\alpha\)-quantile of \(\abs{X_{ki j}-X_{k\ell j}}\) is \(d_{kj}^{1/2}\), with \(d_{kj}\) defined in \eqref{eq:Dtarget}.
\end{lemma}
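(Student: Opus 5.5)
The plan is to compute the $j$th coordinate of a within-sample difference directly from model \eqref{eq:model} and reduce it to the distribution defining $F_{k,\Delta}$. Fix $k$, $j$ and $i\neq\ell$. Subtracting the two observations cancels the location $\vtheta_k$, so no use of $H_0$ is really needed. This gives
\[
X_{kij}-X_{k\ell j}=(d^\circ_{kj})^{1/2}\,\ve_j\T\mR_k^{1/2}\bigl(\xi_{ki}\vu_{ki}-\xi_{k\ell}\vu_{k\ell}\bigr),
\]
where $\ve_j$ is the $j$th unit vector. Set $\va=\mR_k^{1/2}\ve_j$. Since $\mR_k$ is a correlation matrix, $\norm{\va}^2=\ve_j\T\mR_k\ve_j=(\mR_k)_{jj}=1$. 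Hence $\va\in\mathbb S^{p-1}$, and the difference equals $(d^\circ_{kj})^{1/2}\{\xi_{ki}\va\T\vu_{ki}-\xi_{k\ell}\va\T\vu_{k\ell}\}$.

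The next step uses rotational invariance. Choose an orthogonal matrix $\mathbf{Q}$ with $\mathbf{Q}\va=\ve_1$. Because $\vu_{ki}$ is uniform on the sphere and independent of $\xi_{ki}$, the pair $(\xi_{ki},\mathbf{Q}\vu_{ki})$ has the same law as $(\xi_{ki},\vu_{ki})$. Therefore $\xi_{ki}\va\T\vu_{ki}$ has the law of $\xi_{ki}\ve_1\T\vu_{ki}$, which is the law of $Z_{k1}$. This is exactly the observation already made in the paper that the law of $\xi_{ki}\vv\T\vu_{ki}$ does not depend on $\vv$.

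Observations $i$ and $\ell$ are independent within the i.i.d. sample. So the two terms are independent copies, and $(\xi_{ki}\va\T\vu_{ki},\,\xi_{k\ell}\va\T\vu_{k\ell})\overset{d}{=}(Z_{k1},Z_{k2})$. It follows that
\[
\Prb\{\abs{X_{kij}-X_{k\ell j}}\le t\}=\Prb\{(d^\circ_{kj})^{1/2}\abs{Z_{k1}-Z_{k2}}\le t\}=F_{k,\Delta}\bigl(t/(d^\circ_{kj})^{1/2}\bigr).
\]

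For the quantile claim, take the generalized inverse $G^{-1}(\alpha)=\inf\{s:G(s)\ge\alpha\}$. For any $c>0$ and $G(t)=F(t/c)$, this inverse satisfies $G^{-1}(\alpha)=cF^{-1}(\alpha)$. With $c=(d^\circ_{kj})^{1/2}$ the $\alpha$-quantile is $(d^\circ_{kj})^{1/2}q_{k,\alpha}=(q_{k,\alpha}^2d^\circ_{kj})^{1/2}=d_{kj}^{1/2}$, using \eqref{eq:Dtarget}.

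The only point needing care is the unit norm of $\va$, which relies on $\mR_k$ having unit diagonal. Everything else is routine.
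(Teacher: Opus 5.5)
Your proposal is correct and follows essentially the same route as the paper. Both arguments use the unit norm of $\mR_k^{1/2}\ve_j$ (from the unit diagonal of $\mR_k$), rotational invariance to reduce to the law of $\xi_{ki}\ve_1\T\vu_{ki}$, and scaling of the quantile. You also spell out two steps the paper leaves implicit, the independence of observations $i$ and $\ell$ and the generalized-inverse scaling, and your remark that $H_0$ is not needed, since the location cancels in within-sample differences, is also right.
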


\begin{proof}
Let \(a_{kj}\T\) be the \(j\)th row of \(\mR_k^{1/2}\).  Since \(R_{k,jj}=1\), \(\norm{a_{kj}}=1\).  Hence
\[
    X_{ki j}-\theta_{kj}
    =(d^\circ_{kj})^{1/2}\xi_{ki}a_{kj}\T \vu_{ki},
\]
\[
    \xi_{ki}a_{kj}\T \vu_{ki}\stackrel{d}{=}\xi_{ki}e_1\T \vu_{ki},
\]
\[
    \frac{\abs{X_{ki j}-X_{k\ell j}}}{(d^\circ_{kj})^{1/2}}
    \stackrel{d}{=}\abs{Z_{k1}-Z_{k2}}.
\]
Therefore
\[
    Q_\alpha\{\abs{X_{ki j}-X_{k\ell j}}\}
    =q_{k,\alpha}(d^\circ_{kj})^{1/2}=d_{kj}^{1/2}.
\]
\end{proof}

\begin{lemma}\label{lem:uq}
Let \(\hat d_{kj}\) be defined in \eqref{eq:Dhat}.  Under {\color{black} $H_0$ and} Assumptions \ref{ass:np}--\ref{ass:quantile}, for every sufficiently large constant \(M\),
\[
\Prb\left\{\max_{k,j}\abs{\frac{\qhat_{kj,\alpha}}{d_{kj}^{1/2}}-1}>Mr_D\right\}
\le C\exp(-cM^2x_N+\log p).
\]
Consequently,
\[
    \max_{k,j}\abs{\frac{\qhat_{kj,\alpha}}{d_{kj}^{1/2}}-1}=O_p(r_D).
\]
\end{lemma}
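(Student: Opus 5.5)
The argument is the standard quantile-inversion bracketing, built on Lemma \ref{lem:marginal-id} and the $U$-statistic tail bound of Lemma \ref{lem:ustat-tail}.

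\textbf{Setup.} Fix $(k,j)$ and write $F_{kj}(t)=\Prb\{\abs{X_{kij}-X_{k\ell j}}\le t\}$. By Lemma \ref{lem:marginal-id}, $F_{kj}(t)=F_{k,\Delta}(t/(d^\circ_{kj})^{1/2})$, and its $\alpha$-quantile is $d_{kj}^{1/2}=q_{k,\alpha}(d^\circ_{kj})^{1/2}$. For a constant $M$, set $\epsilon=Mr_D$ and define the deterministic bracket points
\[
t^{\pm}_{kj}=(1\pm\epsilon)d_{kj}^{1/2}.
\]

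\textbf{Step 1: reduce to two events.} Because $\widehat F_{kj}$ is nondecreasing and $\qhat_{kj,\alpha}$ is its generalized inverse:
\begin{itemize}
\item if $\qhat_{kj,\alpha}>t^+_{kj}$, then $\widehat F_{kj}(t^+_{kj})<\alpha$;
\item if $\qhat_{kj,\alpha}<t^-_{kj}$, then $\widehat F_{kj}(t^-_{kj})\ge\alpha$.
\end{itemize}
Hence the event $\abs{\qhat_{kj,\alpha}/d_{kj}^{1/2}-1}>\epsilon$ is contained in
\[
\{\widehat F_{kj}(t^+_{kj})<\alpha\}\cup\{\widehat F_{kj}(t^-_{kj})\ge\alpha\}.
\]

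\textbf{Step 2: population margins.} The scaled argument is $t^{\pm}_{kj}/(d^\circ_{kj})^{1/2}=q_{k,\alpha}(1\pm\epsilon)$, whose distance to $q_{k,\alpha}$ is $q_{k,\alpha}\epsilon\le q_+\epsilon$. Since $r_D\to0$, this distance is at most $\varepsilon_F$ for large $n$. The lower slope bound \eqref{eq:local-slope} together with \eqref{eq:qbounded} then gives
\[
F_{kj}(t^+_{kj})-\alpha\ge c_Fq_-\epsilon,\qquad \alpha-F_{kj}(t^-_{kj})\ge c_Fq_-\epsilon.
\]
A sign issue needs care here. \eqref{eq:local-slope} is stated with absolute values, but $F_{k,\Delta}$ is nondecreasing, so $F_{k,\Delta}(t)-\alpha$ is nonnegative for $t>q_{k,\alpha}$ and nonpositive for $t<q_{k,\alpha}$. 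This yields the signed bounds above.

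\textbf{Step 3: concentration.} Each bad event forces $\abs{\widehat F_{kj}(t^\sigma_{kj})-F_{kj}(t^\sigma_{kj})}\ge c_Fq_-\epsilon$. The $t^\sigma_{kj}$ are deterministic, and $\widehat F_{kj}$ is a $U$-statistic with kernel in $[0,1]$ based on i.i.d.\ observations. The union bound of Lemma \ref{lem:ustat-tail} over $k\in\{1,2\}$, $j\le p$ and $\sigma\in\{-,+\}$ gives
\[
\Prb\Bigl\{\max_{k,j}\abs{\qhat_{kj,\alpha}/d_{kj}^{1/2}-1}>Mr_D\Bigr\}\le 4p\exp(-cn_0c_F^2q_-^2M^2r_D^2).
\]
Substituting $n_0r_D^2=x_N$, the right side equals $C\exp(-c'M^2x_N+\log p)$, which is the displayed bound after renaming constants.

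\textbf{Step 4: the $O_p$ statement.} Since $x_N=\log(8pN)\ge\log p$, for large $M$ the exponent satisfies $-c'M^2x_N+\log p\le-(c'M^2-1)x_N$. This tends to $-\infty$ as $M$ grows, uniformly in $n$ because $x_N\ge\log 8$. Hence the maximum is $O_p(r_D)$.

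\textbf{Main obstacle.} The only delicate point is ensuring the bracket stays inside the local-slope window $\varepsilon_F$ for the chosen $M$. This holds because $r_D\to0$ under Assumption \ref{ass:np}, so for each fixed $M$ it is true for all large $n$. Finite $n$ can be absorbed into the constant $C$, since the probability is trivially at most $1$.
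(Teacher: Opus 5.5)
Your proposal is correct and follows essentially the same route as the paper: bracket $\qhat_{kj,\alpha}$ between $(1\pm Mr_D)d_{kj}^{1/2}$ and obtain an order-$Mr_D$ population margin from Lemma \ref{lem:marginal-id} and the local-slope bound. The paper then applies the $U$-statistic tail bound of Lemma \ref{lem:ustat-tail} with a union over $k,j$ and substitutes $n_0r_D^2=x_N$, exactly as you do. Your explicit handling of the sign of $F_{k,\Delta}(t)-\alpha$ and of the window condition $q_+Mr_D\le\varepsilon_F$ simply spells out what the paper states more tersely.
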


\begin{proof}
The desired event is obtained from the two inequalities below.  For \(\delta=Mr_D\) and \(M r_D\le\varepsilon_F/q_+\), put
\[
    t_{kj}^+=d_{kj}^{1/2}(1+\delta),
    \qquad
    t_{kj}^-=d_{kj}^{1/2}(1-\delta).
\]
By Lemma \ref{lem:marginal-id} and \eqref{eq:local-slope},
\[
    F_{kj}(t_{kj}^+)-\alpha
    =F_{k,\Delta}\{q_{k,\alpha}(1+\delta)\}-\alpha
    \ge c\delta,
\]
\[
    \alpha-F_{kj}(t_{kj}^-)
    =\alpha-F_{k,\Delta}\{q_{k,\alpha}(1-\delta)\}
    \ge c\delta.
\]
For
\[
    h_t(x,y)=\ind\{\abs{x_j-y_j}\le t\},
    \qquad 0\le h_t\le1,
\]
Lemma \ref{lem:ustat-tail} gives
\[
\begin{aligned}
\Prb\{\widehat F_{kj}(t_{kj}^+)<\alpha\}
&\le \Prb\{\widehat F_{kj}(t_{kj}^+)-F_{kj}(t_{kj}^+)<-c\delta\} \\
&\le 2\exp(-c n_k\delta^2),
\end{aligned}
\]
\[
\begin{aligned}
\Prb\{\widehat F_{kj}(t_{kj}^-)\ge\alpha\}
&\le \Prb\{\widehat F_{kj}(t_{kj}^-)-F_{kj}(t_{kj}^-)>c\delta\} \\
&\le 2\exp(-c n_k\delta^2).
\end{aligned}
\]
Union over \(k,j\):
\[
\Prb\left\{\max_{k,j}\abs{\frac{\qhat_{kj,\alpha}}{d_{kj}^{1/2}}-1}>\delta\right\}
\le 4p\exp(-cn_0\delta^2)
=4p\exp(-cM^2x_N).
\]
\end{proof}

\begin{proof}[Proof of Theorem \ref{thm:D}]
By Lemma \ref{lem:uq},
\[
    \max_{k,j}\abs{\qhat_{kj,\alpha}/d_{kj}^{1/2}-1}=O_p(r_D).
\]
Since \(r_D\to0\),
\[
    \frac{\hat d_{kj}}{d_{kj}}-1
    =\left(\frac{\qhat_{kj,\alpha}}{d_{kj}^{1/2}}-1\right)
     \left(\frac{\qhat_{kj,\alpha}}{d_{kj}^{1/2}}+1\right),
\]
\[
    \max_{k,j}\abs{\hat d_{kj}/d_{kj}-1}=O_p(r_D).
\]
\end{proof}

\subsection{Spatial median and plug-in matrices}
\label{app:median}

\begin{lemma}\label{lem:sign-calculus}
For \(\vx\ne0\), define
\[
    \mL(\vx)=\norm{\vx}^{-1}\{\mI_p-U(\vx)U(\vx)\T\}.
\]
Then, for all \(\vh\),
\[
    \norm{U(\vx+\vh)-U(\vx)}\le2\min\{1,\norm{\vh}/\norm{\vx}\},
\]

and if \(\norm{\vh}\le\norm{\vx}/2\),
\[
    \norm{U(\vx+\vh)-U(\vx)-\mL(\vx)\vh}\le C\norm{\vh}^2\norm{\vx}^{-2}.
\]
Consequently,
\[
    \E\norm{U(\vY+\vh)-U(\vY)-\mL(\vY)\vh}
    \le C\norm{\vh}^2\E\norm{\vY}^{-2}+2\Prb(\norm{\vY}\le2\norm{\vh}).
\]
\end{lemma}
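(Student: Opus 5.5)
We prove the three claims of Lemma~\ref{lem:sign-calculus} in order. Each relies only on elementary geometry of the map \(U\) and on a second-order Taylor expansion of \(\vx\mapsto\vx/\norm{\vx}\) on the ball \(\norm{\vh}\le\norm{\vx}/2\).

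\emph{First bound.} Since \(\norm{U(\cdot)}\le1\), we always have \(\norm{U(\vx+\vh)-U(\vx)}\le2\). For the second branch, assume \(\vx+\vh\ne0\) and write
\[
U(\vx+\vh)-U(\vx)=\frac{\vx+\vh}{\norm{\vx+\vh}}-\frac{\vx+\vh}{\norm{\vx}}+\frac{\vh}{\norm{\vx}} .
\]
The first difference has norm
\[
\norm{\vx+\vh}\,\frac{\abs{\norm{\vx}-\norm{\vx+\vh}}}{\norm{\vx}\norm{\vx+\vh}}\le\frac{\norm{\vh}}{\norm{\vx}} ,
\]
so the total is at most \(2\norm{\vh}/\norm{\vx}\). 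If \(\vx+\vh=0\), then \(\norm{\vh}=\norm{\vx}\) and the difference is \(U(\vx)\), whose norm is \(1\le2\). Combining the two branches gives \(2\min\{1,\norm{\vh}/\norm{\vx}\}\).

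\emph{Second-order bound.} Let \(\phi(t)=U(\vx+t\vh)\) for \(t\in[0,1]\). Because \(\norm{\vh}\le\norm{\vx}/2\), the segment stays at distance at least \(\norm{\vx}/2\) from the origin, so \(\phi\) is smooth there. The Jacobian of \(U\) at \(\vy\) is \(\mL(\vy)\), hence \(\phi'(0)=\mL(\vx)\vh\). The second derivative of \(U\) at \(\vy\) is a bilinear map of norm at most \(C\norm{\vy}^{-2}\). This follows by homogeneity: \(U\) is \(0\)-homogeneous, so \(D^2U\) is \((-2)\)-homogeneous, and it is continuous on the unit sphere. Taylor's formula with integral remainder then gives
\[
\norm{\phi(1)-\phi(0)-\phi'(0)}\le\tfrac12\sup_{t}\norm{D^2U(\vx+t\vh)}\,\norm{\vh}^2\le C\norm{\vh}^2(\norm{\vx}/2)^{-2},
\]
which is the claim after absorbing constants. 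The main care point is this uniform \(D^2U\) bound along the segment. The homogeneity argument handles it cleanly; alternatively one can differentiate \(\mL\) explicitly, which produces terms of the form \(\norm{\vy}^{-2}\) times products of unit vectors.

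\emph{Expectation bound.} Set \(\vr=U(\vY+\vh)-U(\vY)-\mL(\vY)\vh\) and split on the events \(\{\norm{\vY}>2\norm{\vh}\}\) and \(\{\norm{\vY}\le2\norm{\vh}\}\).

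On the first event, the second-order bound applies and gives \(\norm{\vr}\le C\norm{\vh}^2\norm{\vY}^{-2}\). Taking expectations produces the term \(C\norm{\vh}^2\E\norm{\vY}^{-2}\).

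On the second event, \(\norm{\vr}\le2+\norm{\mL(\vY)\vh}\le2+\norm{\vh}/\norm{\vY}\), using \(\norm{\mL(\vx)}_{\rm op}\le\norm{\vx}^{-1}\). The constant part contributes \(2\Prb(\norm{\vY}\le2\norm{\vh})\). For the linear term, note that \(\norm{\vh}/\norm{\vY}\ge1/2\) on this event, so
\[
\norm{\vh}\norm{\vY}^{-1}\le2\norm{\vh}^2\norm{\vY}^{-2}.
\]
This piece is therefore absorbed into \(C\norm{\vh}^2\E\norm{\vY}^{-2}\) after enlarging \(C\). Adding the two contributions gives the stated inequality.
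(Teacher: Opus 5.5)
Your proof is correct and follows essentially the paper's route: you expand \(t\mapsto U(\vx+t\vh)\) along the segment using a second-derivative bound of order \(\norm{\vh}^2\norm{\vx}^{-2}\), then split on \(\{\norm{\vY}>2\norm{\vh}\}\). Your algebraic proof of the first bound is more careful than the paper's integration of \(f'\), which only works when the segment stays away from the origin; your absorption \(\norm{\mL(\vY)\vh}\le 2\norm{\vh}^2\norm{\vY}^{-2}\) on the small-ball event also supplies a step the paper leaves implicit. One caution: continuity on the unit sphere alone yields a constant that could depend on \(p\), so for a dimension-free \(C\) use the explicit formula \(D^2U(\vx)[\vh,\vh]=-\norm{\vx}^{-2}\{2(U(\vx)\T\vh)\mathbf P\vh+(\vh\T\mathbf P\vh)U(\vx)\}\) with \(\mathbf P=\mI_p-U(\vx)U(\vx)\T\), whose norm is at most \(3\norm{\vh}^2\norm{\vx}^{-2}\).
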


\begin{proof}
For \(f(t)=U(\vx+t\vh)\),
\[
    f'(t)=\norm{\vx+t\vh}^{-1}\{\mI_p-U(\vx+t\vh)U(\vx+t\vh)\T\}\vh,
\]
\[
    \norm{f'(t)}\le \norm{\vh}\norm{\vx+t\vh}^{-1},
    \qquad
    \norm{f''(t)}\le C\norm{\vh}^2\norm{\vx}^{-2}
\]
whenever \(\norm{\vh}\le\norm{\vx}/2\).  Thus
\[
    U(\vx+\vh)-U(\vx)=\int_0^1f'(t)dt,
\]
\[
    U(\vx+\vh)-U(\vx)-\mL(\vx)\vh=\int_0^1\{f'(t)-f'(0)\}dt.
\]
The last display follows by splitting \(\{\norm{\vY}>2\norm{\vh}\}\) and its complement.
\end{proof}

Let
\[
    \mathcal E_{D,n}(M)=
    \left\{\max_{k,j}\abs{\hat d_{kj}/d_{kj}-1}\le Mr_D\right\}.
          \]
Theorem \ref{thm:D} gives, for each large fixed \(M\),
\[
    \Prb\{\mathcal E_{D,n}(M)^c\}\le C_M\exp(-c_Mx_N+\log p).
\]
Define
\[
    \widehat{\mA}_{12}=\widehat{\mD}_1^{-1/2}\widehat{\mD}_2^{1/2},
    \qquad
    \widehat{\mA}_{21}=\widehat{\mD}_2^{-1/2}\widehat{\mD}_1^{1/2}.
\]
On \(\mathcal E_{D,n}(M)\),
\begin{equation}\label{eq:D-event}
    \max_k\norm{\widehat{\mD}_k^{-1/2}\mD_k^{1/2}-\mI_p}_{\rm op}
    +\norm{\hat \mA_{12}-\mA_{12}}_{\rm op}
    +\norm{\hat \mA_{21}-\mA_{21}}_{\rm op}
    \le CMr_D.
\end{equation}

\begin{lemma}\label{lem:median-rate}
Under Assumptions \ref{ass:np}--\ref{ass:radius},
\[
    \mD_k^{-1/2}(\widehat{\vtheta}_k-\vtheta_k)
    =\mG_k^{-1}\bar{\vS}_k+\vDelta_k,
    \qquad
    \max_k\norm{\vDelta_k}=O_p(q_{\theta,n}).
\]
\end{lemma}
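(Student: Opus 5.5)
We will prove the Bahadur representation in \(\mD_k\)-standardized coordinates, working on the event \(\mathcal E_{D,n}(M)\), whose complement has vanishing probability by Theorem \ref{thm:D}. Reparametrize by \(\vDelta=\mD_k^{-1/2}(\vtheta-\vtheta_k)\) and write \(\mE_k=\widehat{\mD}_k^{-1/2}\mD_k^{1/2}\). By \eqref{eq:D-event}, \(\mE_k\) is diagonal with \(\norm{\mE_k-\mI_p}_{\rm op}\le CMr_D\). Then \(\widehat{\mD}_k^{-1/2}(\vX_{ki}-\vtheta)=\mE_k(\vY_{ki}-\vDelta)\). Multiplying the first-order condition of \eqref{eq:thetahat} by the invertible matrix \(\widehat{\mD}_k^{-1/2}\mD_k^{1/2}\), we see that the minimizer \(\hat\vDelta\) satisfies
\[
  \vg_n(\hat\vDelta) = \bm 0,\qquad
  \vg_n(\vDelta)=n_k^{-1}\sum_i \mE_k U\{\mE_k(\vY_{ki}-\vDelta)\}.
\]
Strictly, this holds up to a term of norm at most \(n_k^{-1}\), coming from a possible tie with a data point. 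That term is negligible because \(n_k^{-1}\le x_N/n_0\).

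\textbf{Step 1 (consistency at rate \(n_0^{-1/2}\)).} The objective is convex, so it suffices to show that on the sphere \(\norm{\vDelta}=Cp^{-1/2}\cdot\) (a slowly growing factor) \(\times (x_N/n_0)^{1/2}p^{1/2}\) the directional derivative \(\vDelta\T\vg_n(\vDelta)\) is negative. Here \(\norm{\bar{\vS}_k}=O_p(n_k^{-1/2})\), since \(\tr\mOmega_k=1\). The natural scale of \(\hat\vDelta\) is \(\norm{\mG_k^{-1}\bar\vS_k}\lesssim p^{1/2}n_k^{-1/2}\), because \(\norm{\mG_k^{-1}}_{\rm op}\le \mathfrak C_n^{1/4}p^{1/2}\). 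We linearize \(\vg_n\) using Lemma \ref{lem:sign-calculus}. The truncation for small \(\norm{\vY_{ki}}\) is controlled by Lemma \ref{lem:bern-ind} together with \eqref{eq:smallball}. The curvature comes from \(\vDelta\T\mG_k\vDelta\ge \norm{(p^{1/2}\mG_k)^{-1}}_{\rm op}^{-1}p^{-1/2}\norm{\vDelta}^2\). This gives the preliminary bound \(\norm{\hat\vDelta}=O_p(\mathfrak C_n^{1/2} p^{1/2}(x_N/n_0)^{1/2})\).

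\textbf{Step 2 (linearization).} We expand
\[
  \vg_n(\vDelta)=\mE_k\bar{\vS}_k^{E}-\mE_k\,\widehat{\mL}\,\mE_k\vDelta+\vr_n(\vDelta),
\]
where \(\bar{\vS}^E_k=n_k^{-1}\sum_i U(\mE_k\vY_{ki})\) and \(\widehat{\mL}=n_k^{-1}\sum_i\mL(\mE_k\vY_{ki})\). The remainder is bounded by Lemma \ref{lem:sign-calculus}:
\[
  \sup_{\norm{\vDelta}\le \delta}\norm{\vr_n}\lesssim \delta^2\, n_k^{-1}\sum_i\norm{\vY_{ki}}^{-2}+ n_k^{-1}\sum_i\ind\{\norm{\vY_{ki}}\le 2\delta\}.
\]
With \(\delta\asymp p^{1/2}(x_N/n_0)^{1/2}\), the first term is \(O_p(M_{Y,n}\, x_N/n_0)\) by \eqref{eq:inverse-radius}. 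The second term is \(O_p(M_{Y,n}x_N/n_0)\) by Lemma \ref{lem:bern-ind}. Multiplying by \(\norm{\mG_k^{-1}}_{\rm op}\lesssim\mathfrak C_n^{1/4}p^{1/2}\) yields a contribution of order \(q_{\theta,n}\).

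Next, Lemma \ref{lem:matrix-bern-L} gives \(\norm{\widehat{\mL}-\E\mL(\mE_k\vY)}_{\rm op}=O_p(M_{Y,n}(x_N/(n_0p))^{1/2})\). We also have \(\norm{\E\mL(\mE_k\vY)-\mG_k}_{\rm op}\lesssim r_D\,\E\norm{\vY}^{-1}\lesssim M_{Y,n}r_Dp^{-1/2}\). Each of these, applied to \(\hat\vDelta\) and premultiplied by \(\mG_k^{-1}\), contributes \(\mathfrak C_n\,x_N/n_0\cdot p^{1/2}\) or less.

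\textbf{Step 3 (the sign term with perturbed standardization).} This is the main obstacle. We must show
\[
  \norm{\mG_k^{-1}(\mE_k\bar\vS^E_k-\bar\vS_k)}=O_p(q_{\theta,n}).
\]
A naive bound \(\norm{\mE_k\bar\vS^E_k-\bar\vS_k}\lesssim r_D\) is far too crude, because it gets multiplied by \(p^{1/2}\). The remedy is to exploit the fact that \(\mE_k\) is measurable with respect to the pairwise-difference quantiles.

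Conditional on the diagonal, the map \(\mE\mapsto n_k^{-1}\sum_i\{U(\mE\vY_{ki})-\vS_{ki}\}\) has mean zero for every fixed diagonal \(\mE\). This follows from the symmetry \(\vY\stackrel{d}{=}-\vY\), under which \(U(\mE\vY)\) is odd. Hence we will bound the supremum over \(\{\mE\text{ diagonal}:\norm{\mE-\mI}_{\rm op}\le CMr_D\}\) of the centered process. Its increments have second moment of order \(r_D^2/n_k\) per direction. Combining a chaining/net argument over the diagonal (using Lipschitz dependence of \(U(\mE\vY)\) on \(\mE\) with constant \(1\)) with Lemma \ref{lem:matrix-bern-general} should yield a bound of order \(r_D(x_N/n_0)^{1/2}=x_N/n_0\). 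If the net argument is too costly in \(p\), a fallback is available. Note that \(\hat d_{kj}\) is invariant under sign flips \(\vX_{ki}-\vtheta_k\mapsto-(\vX_{ki}-\vtheta_k)\) of individual observations. Conditioning on \(\{\pm(\vX_{ki}-\vtheta_k)\}\) then makes \(\mE_k\) fixed and the signs Rademacher, so the centered sign average is a Rademacher sum with conditional variance \(n_k^{-1}\norm{\mE_k-\mI}^2\), giving the same rate directly.

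Collecting Steps 1--3 and solving the linear equation yields
\[
  \hat\vDelta=\mG_k^{-1}\bar\vS_k+\vDelta_k,\qquad \norm{\vDelta_k}=O_p(\mathfrak C_n p^{1/2}x_N/n_0)=O_p(q_{\theta,n}).
\]
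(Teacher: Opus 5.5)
Steps 1--2 broadly match the paper's proof. You use the first-order condition in $\mD_k$-standardized coordinates and a preliminary rate for the minimizer. You then linearize, with remainders from Lemmas \ref{lem:sign-calculus}, \ref{lem:bern-ind} and \ref{lem:matrix-bern-L}, and multiply by $\norm{\mG_k^{-1}}_{\rm op}\lesssim\mathfrak C_n^{1/4}p^{1/2}$. You also correctly isolate the crux. Write $\widehat{\mE}_k=\widehat{\mD}_k^{-1/2}\mD_k^{1/2}-\mI_p$ (your $\mE_k-\mI_p$), which is built from the same sample. Then one needs $\norm{n_k^{-1}\sum_i[U\{(\mI_p+\widehat{\mE}_k)\vY_{ki}\}-\vS_{ki}]}=O_p(x_N/n_0)$, rather than the trivial $O(r_D)$. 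Neither of your arguments for Step 3 establishes this.

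The chaining/net route is indexed by a $p$-dimensional cube of diagonals, and you control increments only through the Lipschitz bound. A net at resolution $x_N/n_0$ has log-cardinality of order $p\log(n_0/x_N)$, not $x_N$; Dudley's entropy integral behaves the same way. So this route gives at best $r_D(p/n_0)^{1/2}$ up to logarithms. After the factor $p^{1/2}$ this is of order $p\,x_N^{1/2}/n_0$, which is not $O(q_{\theta,n})$ and does not even vanish when $p\gtrsim n_0$.

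The fallback rests on a false invariance. Reflecting one observation about $\vtheta_k$ turns $\abs{X_{kij}-X_{k\ell j}}$ into $\abs{(X_{kij}-\theta_{kj})+(X_{k\ell j}-\theta_{kj})}$. So $\widehat{\mD}_k$ is invariant only under reflecting the whole sample at once, which gives no Rademacher averaging. Conditioning on $\{\pm(\vX_{ki}-\vtheta_k)\}$ therefore does not freeze $\widehat{\mD}_k$.

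A smaller point: in Step 2, Lemma \ref{lem:matrix-bern-L} cannot be applied directly to the dependent vectors $(\mI_p+\widehat{\mE}_k)\vY_{ki}$. The deterministic bound $\norm{\mL\{(\mI_p+\mE)\bm y\}(\mI_p+\mE)-\mL(\bm y)}_{\rm op}\lesssim\norm{\mE}_{\rm op}\norm{\bm y}^{-1}$ repairs this, as in the paper.

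The missing idea is the paper's decoupling through the explicit structure of $\widehat{\mD}_k$.
\begin{itemize}
\item \emph{Linearize in the diagonal.} Since $\widehat{\mE}_k$ is diagonal, Lemma \ref{lem:sign-calculus} gives $U\{(\mI_p+\widehat{\mE}_k)\vY_{ki}\}-\vS_{ki}=\mH_{ki}\widehat{\ve}_k+O(r_D^2)$ uniformly in $i$. Here $\mH_{ki}=\mL(\vY_{ki})\diag(\vY_{ki})$ and $\widehat{\ve}_k$ is the diagonal of $\widehat{\mE}_k$.
\item \emph{Insert the Bahadur expansion.} Write $\widehat e_{kj}=n_k^{-1}\sum_{\ell}\varphi_{kj}(Y_{k\ell,j})+\rho_{kj}$ for the pairwise-difference quantiles.
\item \emph{Use parity.} The influence function $\varphi_{kj}$ is even: it is affine in $y\mapsto\Prb(\abs{y-Y_{k2,j}}\le1)$, and $Y_{k2,j}$ is symmetric. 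By contrast, $\mH_{ki}$ is odd. Hence $\E\{\mH_{ki}\varphi_k(\vY_{ki})\}=\bm 0$, the cross terms vanish in expectation, and $n_k^{-2}\sum_{i,\ell}\mH_{ki}\varphi_k(\vY_{k\ell})$ has second moment $O(n_k^{-2})$.
\item \emph{Handle the remainder separately.} It enters only through $\norm{\mH_{ki}\va}\le\norm{\va}_\infty$.
\end{itemize}
This yields $\norm{n_k^{-1}\sum_iU\{(\mI_p+\widehat{\mE}_k)\vY_{ki}\}-\bar{\vS}_k}=O_p(r_Dn_k^{-1/2}+x_N/n_k)$, which is exactly what your Step 3 needs. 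Your sign-flip intuition is correct only for the linear (H\'ajek) part of $\widehat{\mD}_k$. The way to use it is to isolate that part and exploit parity against it, not to condition on sign orbits.
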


\begin{proof}
Put
\[
    \widehat{\mE}_k=\widehat{\mD}_k^{-1/2}\mD_k^{1/2}-\mI_p,
    \qquad
    \vdelta_k=\mD_k^{-1/2}(\widehat{\vtheta}_k-\vtheta_k),
\]
\[
    \Phi_{k,n}(\vt,\mE)
    =n_k^{-1}\sum_{i=1}^{n_k}U\{(\mI_p+\mE)(\vY_{ki}-\vt)\},
\]
\[
    \widehat{\mG}_{k,n}(\mE)
    =n_k^{-1}\sum_{i=1}^{n_k}
      \mL\{(\mI_p+\mE)\vY_{ki}\}(\mI_p+\mE).
\]
Since
\[
    \widehat{\mD}_k^{-1/2}(\vX_{ki}-\widehat{\vtheta}_k)
    = (\mI_p+\widehat{\mE}_k)(\vY_{ki}-\vdelta_k),
\]
the first-order condition for \eqref{eq:thetahat} is
\[
    \Phi_{k,n}(\vdelta_k,\widehat{\mE}_k)=\bm 0.
\]
On \(\mathcal E_{D,n}(M)\),
\[
    \norm{\widehat{\mE}_k}_{\rm op}\le CMr_D,
    \qquad
    \Prb\{\mathcal E_{D,n}(M)^c\}\le C_M\exp(-c_Mx_N+\log p).
\]
For every deterministic diagonal \(\mE\) satisfying \(\norm{\mE}_{\rm op}\le1/2\), central symmetry gives
\[
    \E U\{(\mI_p+\mE)\vY_{k1}\}=\bm0,
    \qquad
    \E U(\vY_{k1})=\bm0.
\]
Moreover,
\[
\begin{aligned}
&\norm{U\{(\mI_p+\mE)\vY\}-U(\vY)}
        \le 2\norm{\mE}_{\rm op},\qquad
\E\norm{U\{(\mI_p+\mE)\vY\}-U(\vY)}^2
        \le C\norm{\mE}_{\rm op}^2,
\end{aligned}
\]
\[
    \E\left\|n_k^{-1}\sum_{i=1}^{n_k}
   \left[U\{(\mI_p+\mE)\vY_{ki}\}-U(\vY_{ki})\right]\right\|^2
        \le C\norm{\mE}_{\rm op}^2/n_k .
\]
The pairwise-quantile Bahadur expansion gives, uniformly in \(j\),
\[
    \widehat e_{kj}:=(\hat d_{kj}/d_{kj})^{-1/2}-1
    =n_k^{-1}\sum_{\ell=1}^{n_k}\varphi_{kj}(Y_{k\ell,j})+\rho_{kj},
\]
\[
    \E\varphi_{kj}(Y_{k1,j})=0,
    \qquad
    \varphi_{kj}(-y)=\varphi_{kj}(y),
    \qquad
    \max_j\abs{\rho_{kj}}=O_p(x_N/n_k),
\]
\[
    \max_j\E\varphi_{kj}^2(Y_{k1,j})\le C,
    \qquad
    \max_j\abs{\widehat e_{kj}}=O_p(r_D).
\]
Let
\[
    \mH_{ki}=\mL(\vY_{ki})\diag(\vY_{ki}),
    \qquad
    \widehat{\ve}_k=(\widehat e_{k1},\ldots,\widehat e_{kp})\T.
\]
Then \(\norm{\mH_{ki}\va}\le\norm{\va}_{\infty}\),
\(\E\mH_{ki}=\bm0\), and
\[
    \E\{\mH_{ki}\varphi_k(\vY_{ki})\}=\bm0,
    \qquad
    \varphi_k(\vY_{ki})=(\varphi_{k1}(Y_{ki,1}),\ldots,
    \varphi_{kp}(Y_{ki,p}))\T,
\]
because \(\mH_{ki}\) is odd and \(\varphi_k\) is even under
\(\vY_{ki}\mapsto-\vY_{ki}\). Consequently,
\[
\begin{aligned}
&\E\left\|
      n_k^{-1}\sum_{i=1}^{n_k}\mH_{ki}\widehat{\ve}_k
      \right\|^2\cr
&\quad\le
 C\E\left\|
      n_k^{-2}\sum_{i=1}^{n_k}\sum_{\ell=1}^{n_k}
      \mH_{ki}\varphi_k(\vY_{k\ell})
      \right\|^2
   +C\E\left\|
      n_k^{-1}\sum_{i=1}^{n_k}\mH_{ki}\bm\rho_k
      \right\|^2\cr
&\quad\le Cn_k^{-2}+C x_N^2 n_k^{-2},
       \qquad
       \bm\rho_k=(\rho_{k1},\ldots,\rho_{kp})\T.
\end{aligned}
\]
Together with Lemma \ref{lem:sign-calculus}, \eqref{eq:inverse-radius} and
\eqref{eq:smallball}, this yields
\[
    \norm{\Phi_{k,n}(\bm0,\widehat{\mE}_k)-\bar{\vS}_k}
    =O_p(r_Dn_k^{-1/2}+x_N/n_k).
\]
For \(\norm{\vt}\le1\) and \(\norm{\mE}_{\rm op}\le1/2\),
\[
\begin{aligned}
&\Phi_{k,n}(\vt,\mE)
 =\Phi_{k,n}(\bm0,\mE)-\widehat{\mG}_{k,n}(\mE)\vt
    +\mR_{k,n}(\vt,\mE),\cr
&\norm{\mR_{k,n}(\vt,\mE)}
\le C\norm{\vt}^2 n_k^{-1}\sum_i\norm{\vY_{ki}}^{-2}
   +C n_k^{-1}\sum_i\ind\{\norm{\vY_{ki}}\le C\norm{\vt}\}.
\end{aligned}
\]
Furthermore,
\[
    n_k^{-1}\sum_i\norm{\vY_{ki}}^{-2}=O_p(M_{Y,n}p^{-1}),
\]
\[
    n_k^{-1}\sum_i\ind\{\norm{\vY_{ki}}\le C\norm{\vt}\}
    =O_p(M_{Y,n}\norm{\vt}^2p^{-1}+x_N/n_k),
\]
and hence
\[
    \norm{\mR_{k,n}(\vt,\mE)}
    =O_p(M_{Y,n}\norm{\vt}^2p^{-1}+x_N/n_k).
\]
Lemma \ref{lem:matrix-bern-L} and the preceding scale expansion imply
\[
\begin{aligned}
\norm{\widehat{\mG}_{k,n}(\widehat{\mE}_k)-\mG_k}_{\rm op}
&\le
\left\|n_k^{-1}\sum_i\{\mL(\vY_{ki})-\mG_k\}\right\|_{\rm op}
 +\left\|n_k^{-1}\sum_i
       [\mL\{(\mI_p+\widehat{\mE}_k)\vY_{ki}\}(\mI_p+\widehat{\mE}_k)
        -\mL(\vY_{ki})]\right\|_{\rm op}\cr
&=O_p\left\{\mathfrak C_n^{1/2}p^{-1/2}(x_N/n_k)^{1/2}
      +\mathfrak C_np^{-1/2}r_D\right\}.
\end{aligned}
\]
Also
\[
    \norm{\bar{\vS}_k}=O_p(n_k^{-1/2}),
    \qquad
    \norm{\mG_k^{-1}}_{\rm op}\le p^{1/2}\mathfrak C_n^{1/4}.
\]
The equation \(\Phi_{k,n}(\vdelta_k,\widehat{\mE}_k)=\bm0\) gives
\[
\begin{aligned}
\vdelta_k-\mG_k^{-1}\bar{\vS}_k
={}&\mG_k^{-1}\{\Phi_{k,n}(\bm0,\widehat{\mE}_k)-\bar{\vS}_k\}
   -\mG_k^{-1}\{\widehat{\mG}_{k,n}(\widehat{\mE}_k)-\mG_k\}\vdelta_k\cr
&\quad+\mG_k^{-1}\mR_{k,n}(\vdelta_k,\widehat{\mE}_k).
\end{aligned}
\]
The same display and the lower bound
\(\lambda_{\min}(\mG_k)\ge p^{-1/2}\mathfrak C_n^{-1/4}\) yield
\[
    \norm{\vdelta_k}
    =O_p\{\mathfrak C_n^{1/4}(p/n_k)^{1/2}+q_{\theta,n}\}.
\]
Therefore
\[
\begin{aligned}
\norm{\vdelta_k-\mG_k^{-1}\bar{\vS}_k}
&\le
  O_p\{p^{1/2}\mathfrak C_n^{1/4}(r_Dn_k^{-1/2}+x_N/n_k)\}\cr
&\quad+O_p\left[p^{1/2}\mathfrak C_n^{1/4}
      \{\mathfrak C_n^{1/2}p^{-1/2}(x_N/n_k)^{1/2}
      +\mathfrak C_np^{-1/2}r_D\}
      \{\mathfrak C_n^{1/4}(p/n_k)^{1/2}+q_{\theta,n}\}\right]\cr
&\quad+O_p\left[p^{1/2}\mathfrak C_n^{1/4}
     \{M_{Y,n}\norm{\vdelta_k}^2p^{-1}+x_N/n_k\}\right]\cr
&=O_p(q_{\theta,n}).
\end{aligned}
\]
\end{proof}

\begin{lemma}\label{lem:sign-pert}
Under Assumptions \ref{ass:np}--\ref{ass:radius},
\[
    \max_k n_k^{-1}\sum_{i=1}^{n_k}\norm{\widehat{\vS}_{ki}-\vS_{ki}}
    =O_p(a_{S,n}).
\]
\end{lemma}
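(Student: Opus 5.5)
The plan is to write the fitted sign as a perturbation of the oracle sign and apply Lemma \ref{lem:sign-calculus} observation by observation. We have $\widehat{\vY}_{ki}=(\mI_p+\widehat{\mE}_k)(\vY_{ki}-\vdelta_k)$, with $\widehat{\mE}_k=\widehat{\mD}_k^{-1/2}\mD_k^{1/2}-\mI_p$ and $\vdelta_k=\mD_k^{-1/2}(\widehat{\vtheta}_k-\vtheta_k)$, as in the proof of Lemma \ref{lem:median-rate}. Since $U$ is invariant to positive scalars, this gives
\[
\widehat{\vS}_{ki}-\vS_{ki}
=\bigl[U\{(\mI_p+\widehat{\mE}_k)(\vY_{ki}-\vdelta_k)\}-U\{(\mI_p+\widehat{\mE}_k)\vY_{ki}\}\bigr]
+\bigl[U\{(\mI_p+\widehat{\mE}_k)\vY_{ki}\}-U(\vY_{ki})\bigr].
\]
All the work is done on $\mathcal E_{D,n}(M)$, which has probability tending to one by Theorem \ref{thm:D}. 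On that event $\norm{\widehat{\mE}_k}_{\rm op}\le CMr_D\le1/2$.

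\emph{Scale part.} Write $\vx=\vY_{ki}$ and $\vh=\widehat{\mE}_k\vY_{ki}$, so that $\norm{\vh}\le\norm{\widehat{\mE}_k}_{\rm op}\norm{\vY_{ki}}$. The first bound of Lemma \ref{lem:sign-calculus} then gives, deterministically and for every $i$,
\[
\norm{U\{(\mI_p+\widehat{\mE}_k)\vY_{ki}\}-U(\vY_{ki})}\le 2\norm{\widehat{\mE}_k}_{\rm op}.
\]
Averaging over $i$, this contributes $O_p(r_D)\le O_p(a_{S,n})$. No moment conditions are needed here, which matters because $\widehat{\mE}_k$ depends on the whole sample.

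\emph{Location part.} Set $\vx=(\mI_p+\widehat{\mE}_k)\vY_{ki}$ and $\vh=-(\mI_p+\widehat{\mE}_k)\vdelta_k$. Then $\norm{\vx}\ge\norm{\vY_{ki}}/2$ and $\norm{\vh}\le 2\norm{\vdelta_k}$. The first bound of Lemma \ref{lem:sign-calculus} gives the summand bound $\min\{2,\,8\norm{\vdelta_k}/\norm{\vY_{ki}}\}$. This bound is again free of $\widehat{\mE}_k$, so its average is at most
\[
8\norm{\vdelta_k}\,n_k^{-1}\sum_i\norm{\vY_{ki}}^{-1}.
\]
By Markov's inequality and \eqref{eq:inverse-radius}, $n_k^{-1}\sum_i\norm{\vY_{ki}}^{-1}=O_p(M_{Y,n}p^{-1/2})$.

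For $\norm{\vdelta_k}$ we use Lemma \ref{lem:median-rate}. It gives $\norm{\vdelta_k}\le\norm{\mG_k^{-1}}_{\rm op}\norm{\bar{\vS}_k}+O_p(q_{\theta,n})$. Here $\norm{\mG_k^{-1}}_{\rm op}\le p^{1/2}\mathfrak C_n^{1/4}$, and $\E\norm{\bar{\vS}_k}^2=\tr(\mOmega_k)/n_k=1/n_k$. Hence $\norm{\vdelta_k}=O_p(\mathfrak C_n^{1/4}(p/n_k)^{1/2}+q_{\theta,n})$. The location part is therefore
\[
O_p\bigl(M_{Y,n}\mathfrak C_n^{1/4}n_0^{-1/2}+M_{Y,n}p^{-1/2}q_{\theta,n}\bigr)
=O_p\bigl(\mathfrak C_n\{n_0^{-1/2}+x_N/n_0\}\bigr),
\]
using $M_{Y,n}\mathfrak C_n^{1/4}\le\mathfrak C_n$ and $p^{-1/2}q_{\theta,n}=\mathfrak C_nx_N/n_0$. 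The factor $M_{Y,n}\mathfrak C_n$ from the second term is absorbed into the generous polynomial power of $\mathfrak C_n$ in $a_{S,n}$. If necessary, the constant in the definition is adjusted, or one notes $M_{Y,n}\le\mathfrak C_n^{1/2}$.

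The main obstacle is the bookkeeping of powers of $\mathfrak C_n$. The minimum with $2$ also has to be used so that tiny radii cause no blow-up. The bound $\min\{2,8\norm{\vdelta}/\norm{\vY}\}\le 8\norm{\vdelta}/\norm{\vY}$ is crude, but it is enough because $\E\norm{\vY}^{-1}\lesssim M_{Y,n}p^{-1/2}$ by Assumption \ref{ass:radius}. Finally, take the maximum over $k=1,2$ and combine the two parts to obtain $O_p(a_{S,n})$.
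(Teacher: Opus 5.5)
Your argument follows the paper's proof step by step: the same split into a scale part bounded deterministically by $2\norm{\widehat{\mE}_k}_{\rm op}$ and a location part bounded by $\norm{\vdelta_k}\,n_k^{-1}\sum_i\norm{\vY_{ki}}^{-1}$, the same bound $n_k^{-1}\sum_i\norm{\vY_{ki}}^{-1}=O_p(M_{Y,n}p^{-1/2})$, and the same bound $\norm{\vdelta_k}=O_p\{\mathfrak C_n^{1/4}(p/n_k)^{1/2}+q_{\theta,n}\}$ from Lemma \ref{lem:median-rate}. You use Markov where the paper uses mean and variance, which is immaterial.

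\textbf{The gap.} One step fails as written: absorbing $M_{Y,n}p^{-1/2}q_{\theta,n}=M_{Y,n}\mathfrak C_n x_N/n_0$ into $a_{S,n}$. There is no ``generous polynomial power'' to absorb it, because $a_{S,n}$ contains $\mathfrak C_n$ only to the first power. Adjusting a constant cannot absorb a possibly diverging $M_{Y,n}$. Using $M_{Y,n}\le\mathfrak C_n^{1/2}$ only yields $\mathfrak C_n^{3/2}x_N/n_0$, which is not $O(a_{S,n})$ under Assumptions \ref{ass:np}--\ref{ass:radius} alone. The paper's own proof silently drops the same factor $M_{Y,n}$ when it writes $O_p\{\mathfrak C_n n_k^{-1/2}+\mathfrak C_n x_N/n_0\}$.

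\textbf{The repair.} Use the boundedness you mention and then discard, but keep the minimum at the level of the average. Put $A_k=n_k^{-1}\sum_i\norm{\vY_{ki}}^{-1}$. Each summand is at most $2$, so the location part is at most $\min\{2,8\norm{\vdelta_k}A_k\}$. Split $\norm{\vdelta_k}\le p^{1/2}\mathfrak C_n^{1/4}\norm{\bar{\vS}_k}+\norm{\vDelta_k}$ and use $\min\{2,a+b\}\le\min\{2,a\}+\min\{2,b\}$.

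The first piece is $O_p(\mathfrak C_n^{3/4}n_0^{-1/2})$, as you showed. Since $x_N/n_0=r_D^2$, the second piece is $\min\{2,O_p(M_{Y,n}\mathfrak C_n r_D^2)\}$, and
\[
\min\{1,M_{Y,n}\mathfrak C_n r_D^2\}\le \mathfrak C_n r_D .
\]
To see this, consider two cases. If $M_{Y,n}r_D\le1$, the left side is at most $\mathfrak C_n r_D$. If $M_{Y,n}r_D>1$, then $\mathfrak C_n r_D\ge M_{Y,n}r_D>1$, because $\mathfrak C_n\ge 1+M_{Y,n}^2\ge M_{Y,n}$. Hence the location part is $O_p(\mathfrak C_n r_D+\mathfrak C_n n_0^{-1/2})=O_p(a_{S,n})$ with no extra hypothesis.

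Alternatively, under Assumption \ref{ass:spectral} one has $\mathfrak C_n r_D\le a_{K,n}\to0$, so $M_{Y,n}r_D\to0$ and the extra factor is harmless. However, the lemma is stated without that assumption, so the truncation argument above is the appropriate fix.
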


\begin{proof}
Let
\[
    \widehat{\mE}_k=\widehat{\mD}_k^{-1/2}\mD_k^{1/2}-\mI_p,
    \qquad
    \vdelta_k=\mD_k^{-1/2}(\widehat{\vtheta}_k-\vtheta_k).
\]
Then
\[
    \widehat{\vY}_{ki}
    =(\mI_p+\widehat{\mE}_k)(\vY_{ki}-\vdelta_k),
    \qquad
    \widehat{\vS}_{ki}=U\{(\mI_p+\widehat{\mE}_k)(\vY_{ki}-\vdelta_k)\}.
\]
On \(\mathcal E_{D,n}(M)\),
\[
    \norm{\widehat{\mE}_k}_{\rm op}=O(r_D),
    \qquad
    \norm{(\mI_p+\widehat{\mE}_k)\vY_{ki}}
    \ge (1-Cr_D)\norm{\vY_{ki}}.
\]
Lemma \ref{lem:sign-calculus} gives
\[
\begin{aligned}
\norm{\widehat{\vS}_{ki}-\vS_{ki}}
&\le
\norm{U\{(\mI_p+\widehat{\mE}_k)(\vY_{ki}-\vdelta_k)\}
       -U\{(\mI_p+\widehat{\mE}_k)\vY_{ki}\}}\cr
&\quad+\norm{U\{(\mI_p+\widehat{\mE}_k)\vY_{ki}\}-U(\vY_{ki})}\cr
&\le C\min\{1,\norm{\vdelta_k}/\norm{\vY_{ki}}\}+C\norm{\widehat{\mE}_k}_{\rm op}.
\end{aligned}
\]
Moreover,
\[
\begin{aligned}
\E\left(n_k^{-1}\sum_i\norm{\vY_{ki}}^{-1}\right)&\le M_{Y,n}p^{-1/2},\cr
\Var\left(n_k^{-1}\sum_i\norm{\vY_{ki}}^{-1}\right)&\le n_k^{-1}\E\norm{\vY_{k1}}^{-2}
      \le M_{Y,n}n_k^{-1}p^{-1},
\end{aligned}
\]
and hence
\[
    n_k^{-1}\sum_i\norm{\vY_{ki}}^{-1}=O_p(M_{Y,n}p^{-1/2}).
\]
By Lemma \ref{lem:median-rate},
\[
    \norm{\vdelta_k}
    \le \norm{\mG_k^{-1}\bar{\vS}_k}+O_p(q_{\theta,n})
    =O_p\{\mathfrak C_n^{1/4}(p/n_k)^{1/2}+q_{\theta,n}\}.
\]
Thus
\[
\begin{aligned}
 n_k^{-1}\sum_i\min\{1,\norm{\vdelta_k}/\norm{\vY_{ki}}\}
&\le \norm{\vdelta_k}\, n_k^{-1}\sum_i\norm{\vY_{ki}}^{-1}\cr
&=O_p\{\mathfrak C_n n_k^{-1/2}+\mathfrak C_n x_N/n_0\}.
\end{aligned}
\]
Consequently,
\[
\begin{aligned}
 n_k^{-1}\sum_i\norm{\widehat{\vS}_{ki}-\vS_{ki}}
&\le C\norm{\widehat{\mE}_k}_{\rm op}
   +C n_k^{-1}\sum_i\min\{1,\norm{\vdelta_k}/\norm{\vY_{ki}}\}\cr
&=O_p\{r_D+\mathfrak C_n n_k^{-1/2}+\mathfrak C_n x_N/n_0\}
 =O_p(a_{S,n}).
\end{aligned}
\]
\end{proof}

\begin{proof}[Proof of Proposition \ref{prop:derived}]
The expansion \eqref{eq:Bahadur-main} is Lemma \ref{lem:median-rate}.  The perturbation bound \eqref{eq:sign-pert} is Lemma \ref{lem:sign-pert}.

For \(\widehat{\mG}_k\),
\[
    \widehat{\mG}_k-\mG_k
    =n_k^{-1}\sum_i\{L(\vY_{ki})-\mG_k\}
     +n_k^{-1}\sum_i\{\mL(\widehat{\vY}_{ki})-L(\vY_{ki})\}.
\]
Lemma \ref{lem:matrix-bern-L} gives
\[
    \left\|n_k^{-1}\sum_i\{L(\vY_{ki})-\mG_k\}\right\|_{\rm op}
    =O_p\{\mathfrak C_n^{1/2}p^{-1/2}(x_N/n_k)^{1/2}\}.
\]
Lemma \ref{lem:sign-calculus}, \eqref{eq:inverse-radius}, \eqref{eq:smallball}, and Lemma \ref{lem:sign-pert} give
\[
    \left\|n_k^{-1}\sum_i\{\mL(\widehat{\vY}_{ki})-L(\vY_{ki})\}\right\|_{\rm op}
    =O_p(\mathfrak C_np^{-1/2}a_{S,n}).
\]
Hence
\[
    \norm{\widehat{\mG}_k-\mG_k}_{\rm op}
    =O_p\{\mathfrak C_n^{1/2}p^{-1/2}(x_N/n_k)^{1/2}
          +\mathfrak C_np^{-1/2}a_{S,n}\}.
\]
Hence
\[
    \norm{\widehat{\mG}_k^{-1}-\mG_k^{-1}}_{\rm op}
    \le
    \norm{\mG_k^{-1}}_{\rm op}^2\norm{\widehat{\mG}_k-\mG_k}_{\rm op}
    \{1+O_p(a_{K,n})\}.
\]
Together with \eqref{eq:D-event},
\[
    \max_{r=1,2,3}\norm{\widehat{\mK}_r-\mK_r}_{\rm op}=O_p(a_{K,n}).
\]
This proves \eqref{eq:Khat-rate}.  The plug-in covariance matrices \(\widehat{\mOmega}_k\) are still used in the algorithm, but the theory below calibrates the diagonal-deleted quadratic form through \(\widehat{\mH}_{n,0}\).  Thus no Frobenius consistency of the full empirical covariance matrix is required in regimes where \(p\) is larger than \(n_0\).
\end{proof}

\subsection{Expansion of the full-sample statistic}
\label{app:expansion}

\begin{proof}[Proof of Theorem \ref{thm:feasible}]
Under \(H_0\), \(\vtheta_1=\vtheta_2=\vtheta\).  Let
\[
    \vdelta_k=\mD_k^{-1/2}(\widehat{\vtheta}_k-\vtheta)
    =\mG_k^{-1}\bar{\vS}_k+\vDelta_k,
    \qquad
    \max_k\norm{\vDelta_k}=O_p(q_{\theta,n}).
\]
The first-order sign expansion used for \(\hat R_n^{\rm PDQ}\) gives, uniformly over the two samples,
\[
\begin{aligned}
U\{\widehat{\mD}_1^{-1/2}(\vX_{1i}-\widehat{\vtheta}_2)\}
&=\vS_{1i}-\mL(\vY_{1i})\mA_{12}\vdelta_2+\vr_{1i},\\
U\{\widehat{\mD}_2^{-1/2}(\vX_{2j}-\widehat{\vtheta}_1)\}
&=\vS_{2j}-\mL(\vY_{2j})\mA_{21}\vdelta_1+\vr_{2j},
\end{aligned}
\]
where
\[
    n_k^{-1}\sum_i\norm{\vr_{ki}}
    =O_p\{r_Dn_0^{-1/2}+x_Nn_0^{-3/2}\}.
\]
Deleting the empirical diagonal terms before bounding remainders gives
\[
\begin{aligned}
T_n^{\rm PDQ}
={}&\frac1{n_1^2}\sum_{i\ne \ell}\vS_{1i}\T\mK_1\vS_{1\ell}
      +\frac1{n_2^2}\sum_{j\ne \ell}\vS_{2j}\T\mK_2\vS_{2\ell}
      -\frac1{n_1n_2}\sum_{i,j}\vS_{1i}\T\mK_3\vS_{2j}\\
&\quad+O_p\{(r_D+a_{K,n}+L_{K,n}a_{S,n})\tau_n\}+O_p(x_N/n_0^2).
\end{aligned}
\]
The leading term is \(U_n\).  Therefore
\[
    T_n^{\rm PDQ}-U_n=O_p\{(r_D+a_{K,n}+L_{K,n}a_{S,n})\tau_n\}+O_p(x_N/n_0^2),
\]
which is \eqref{eq:main-expansion}.  The distributional conclusion \eqref{eq:full-law} follows from Theorem \ref{thm:oracle} and Assumption \ref{ass:spectral}.
\end{proof}

\subsection{Oracle quadratic-form law}
\label{app:oracle}

\begin{proof}[Proof of Theorem \ref{thm:oracle}]
The statistic \(U_n\) in \eqref{eq:Un} is a centered canonical quadratic statistic.  Its variance is
\[
\begin{aligned}
    \Var(U_n)
    ={}&\frac{2(n_1-1)}{n_1^3}
    \tr\{(\mOmega_1^{1/2}\mK_1\mOmega_1^{1/2})^2\}\\
    &+\frac{2(n_2-1)}{n_2^3}
    \tr\{(\mOmega_2^{1/2}\mK_2\mOmega_2^{1/2})^2\}
    +\frac1{n_1n_2}\tr(\mOmega_1\mK_3\mOmega_2\mK_3\T).
\end{aligned}
\]
Since \(n_k^{-1}(n_k-1)=1+O(n_0^{-1})\),
\[
    \Var(U_n)/\tau_n^2=1+O(n_0^{-1}).
\]
For the zero-diagonal kernel matrix of \(U_n\), the row sums satisfy
\[
\begin{aligned}
\max_a\sum_b\mathcal A_{n,0,ab}^2
\le C\max\Bigg\{&
\frac{\norm{\mK_1\mOmega_1\mK_1}_{\rm op}}{n_1^3},
\frac{\norm{\mK_2\mOmega_2\mK_2}_{\rm op}}{n_2^3},\\
&\frac{\norm{\mK_3\mOmega_2\mK_3\T}_{\rm op}}{n_1^2n_2},
\frac{\norm{\mK_3\T\mOmega_1\mK_3}_{\rm op}}{n_1n_2^2}
\Bigg\},
\end{aligned}
\]
and hence
\[
    \frac{\max_a\sum_b\mathcal A_{n,0,ab}^2}{\Var(U_n)}
    \le C\Delta_{{\rm row},n}\{1+O(n_0^{-1})\}.
\]
Let \(V_{1i}=\vS_{1i}\), \(V_{2j}=\vS_{2j}\), with block kernels induced by \(U_n\).  Lemma \ref{lem:dejong-tool} gives
\[
    h_n\le C\Delta_{{\rm row},n},
    \qquad
    d_n\le C\mathfrak C_n\Delta_{{\rm row},n}=C\delta_{{\rm dJ},n}.
\]
Therefore
\[
    \sup_t\abs{\Prb(U_n/\tau_n\le t)-\Prb(U_{n,G}/\tau_n\le t)}
    \le \omega_{\rm dJ}\{C\delta_{{\rm dJ},n}+C\Delta_{{\rm row},n}\}+O(n_0^{-1}).
\]
If
\[
    \vZ=(\vZ_1\T,\vZ_2\T)\T\sim N\{0,\diag(\mOmega_1,\mOmega_2)\},
\]
then the associated full Gaussian quadratic form is
\[
    \vZ\T\mC_n\vZ-\tr(\mB_n),
    \qquad
    \mC_n=
    \begin{pmatrix}
        n_1^{-1}\mK_1&-(2\sqrt{n_1n_2})^{-1}\mK_3\\
        -(2\sqrt{n_1n_2})^{-1}\mK_3\T&n_2^{-1}\mK_2
    \end{pmatrix}.
\]
The difference between \(U_{n,G}\) and this full centered Gaussian quadratic form is the Gaussian diagonal correction.  Its variance is bounded by
\[
    C\tau_n^2 n_0^{-1},
\]
using the same fourth-projection bound in \(\mathfrak C_n\).  Hence
\[
    \frac{U_{n,G}-\{\vZ\T\mC_n\vZ-\tr(\mB_n)\}}{\tau_n}=O_p(n_0^{-1/2}).
\]
Diagonalization of \(\mB_n\) gives
\[
    \vZ\T \mC_n\vZ-\tr(\mB_n)
    \stackrel{d}{=}
    \sum_{r=1}^{2p}\lambda_{nr}(\chi_{1r}^2-1)=\Gamma_n.
\]
Thus \eqref{eq:oracle-law} holds.  The \(\ell_2\) convergence in \eqref{eq:l2} yields \eqref{eq:limit-mixture}.
\end{proof}

\begin{proof}[Proof of Corollary \ref{cor:normal}]
The no-dominant-eigenvalue condition gives
\[
    \Delta_{{\rm BE},n}
    =\frac{\max_r\abs{\lambda_{nr}}}{(\sum_r\lambda_{nr}^2)^{1/2}}
    \to0.
\]
Lemma \ref{lem:be-chisq}, applied with \(a_{nr}=\lambda_{nr}\), gives
\[
    \sup_t\abs{\Prb(\Gamma_n/\tau_n\le t)-\Phi(t)}
    \le C\Delta_{{\rm BE},n}.
\]
Theorem \ref{thm:feasible} transfers the normal limit to \((\hat R_n^{\rm PDQ}-b_n)/\tau_n\).
\end{proof}

\subsection{Diagonal-deletion correction}
\label{app:bhat}

The identity \eqref{eq:bhat} gives
\[
\hat b_n=\sum_{k=1}^2n_k^{-2}\sum_{i=1}^{n_k}
       \widehat{\vS}_{ki}\T\widehat{\mK}_k\widehat{\vS}_{ki}.
\]
Thus subtracting \(\hat b_n\) from the full-sample quadratic expansion removes exactly the fitted within-sample diagonal terms:
\[
\begin{aligned}
T_n^{\rm PDQ}
={}&\frac1{n_1^2}\sum_{i\ne \ell}\widehat{\vS}_{1i}\T\widehat{\mK}_1\widehat{\vS}_{1\ell}
  +\frac1{n_2^2}\sum_{j\ne \ell}\widehat{\vS}_{2j}\T\widehat{\mK}_2\widehat{\vS}_{2\ell}\\
&\quad-\frac1{n_1n_2}\sum_{i=1}^{n_1}\sum_{j=1}^{n_2}
\widehat{\vS}_{1i}\T\widehat{\mK}_3\widehat{\vS}_{2j}
+O_p\{(r_D+a_{K,n}+L_{K,n}a_{S,n})\tau_n\}+O_p(x_N/n_0^2),
\end{aligned}
\]
where the remainder is the same as in the proof of Theorem \ref{thm:feasible}.  This diagonal-deletion is the reason no condition of the form \(\hat b_n-b_n=o_p(\tau_n)\) is needed.

\subsection{Wild bootstrap}
\label{app:boot}

\begin{proof}[Proof of Proposition \ref{prop:boot-rates}]
For a row corresponding to the first sample,
\[
\begin{aligned}
\sum_{b\ne 1i}\hat a_{1i,b}^2
&\le \frac1{n_1^4}\sum_{\ell\ne i}
  (\widehat{\vS}_{1i}\T\widehat{\mK}_1\widehat{\vS}_{1\ell})^2
 +\frac1{4n_1^2n_2^2}\sum_{j=1}^{n_2}
  (\widehat{\vS}_{1i}\T\widehat{\mK}_3\widehat{\vS}_{2j})^2  \\
&\le C\left\{
\frac{\norm{\widehat{\mK}_1\widehat{\mOmega}_1\widehat{\mK}_1}_{\rm op}}{n_1^3}
+\frac{\norm{\widehat{\mK}_3\widehat{\mOmega}_2\widehat{\mK}_3\T}_{\rm op}}{n_1^2n_2}
\right\}.
\end{aligned}
\]
The second-sample rows are bounded analogously.  By \eqref{eq:sign-pert} and \eqref{eq:Khat-rate},
\[
\begin{aligned}
\max\{&
\norm{\widehat{\mK}_1\widehat{\mOmega}_1\widehat{\mK}_1}_{\rm op},
\norm{\widehat{\mK}_2\widehat{\mOmega}_2\widehat{\mK}_2}_{\rm op},\\
&\norm{\widehat{\mK}_3\widehat{\mOmega}_2\widehat{\mK}_3\T}_{\rm op},
\norm{\widehat{\mK}_3\T\widehat{\mOmega}_1\widehat{\mK}_3}_{\rm op}
\}
\end{aligned}
\]
\[
\le
\max\{
\norm{\mK_1\mOmega_1\mK_1}_{\rm op},
\norm{\mK_2\mOmega_2\mK_2}_{\rm op},
\norm{\mK_3\mOmega_2\mK_3\T}_{\rm op},
\norm{\mK_3\T\mOmega_1\mK_3}_{\rm op}
\}
+O_p\{\tau_n^2(a_{K,n}^2+L_{K,n}^2a_{S,n}^2)\}.
\]
Therefore \eqref{eq:boot-row} follows from \eqref{eq:Delta-row}.  For the conditional variance,
\[
\begin{aligned}
2\tr(\widehat{\mH}_{n,0}^2)
={}&\frac{2}{n_1^4}\sum_{i\ne \ell}
(\widehat{\vS}_{1i}\T\widehat{\mK}_1\widehat{\vS}_{1\ell})^2
+\frac{2}{n_2^4}\sum_{j\ne \ell}
(\widehat{\vS}_{2j}\T\widehat{\mK}_2\widehat{\vS}_{2\ell})^2\\
&\quad+\frac1{n_1^2n_2^2}\sum_{i=1}^{n_1}\sum_{j=1}^{n_2}
(\widehat{\vS}_{1i}\T\widehat{\mK}_3\widehat{\vS}_{2j})^2.
\end{aligned}
\]
The order-two U-statistic concentration applied to the three displayed sums gives
\[
    \frac{2\tr(\widehat{\mH}_{n,0}^2)}{\tau_n^2}
    =1+O_p\{a_{K,n}^2+L_{K,n}^2a_{S,n}^2+\Delta_{{\rm row},n}+n_0^{-1}\},
\]
which is \eqref{eq:boot-var}.  Equation \eqref{eq:hattau-star-rate} follows from
\[
    \abs{\sqrt{1+x}-1}\le C\abs{x},\qquad \abs{x}\le1/2,
\]
and the preceding display.
\end{proof}

\begin{proof}[Proof of Theorem \ref{thm:boot}]
Since \(e_a^2\equiv1\),
\[
    T_n^*=Q_n^*-\hat b_n
    =\sum_{a\ne b}\hat a_{ab}e_ae_b
    =\ve\T\widehat{\mH}_{n,0}\ve.
\]
Thus
\[
    \Estar T_n^*=0,
    \qquad
    \Varstar(T_n^*)=2\tr(\widehat{\mH}_{n,0}^2).
\]
Let \(g_1,\ldots,g_N\) be conditionally independent \(N(0,1)\) variables and write \(\vg=(g_1,\ldots,g_N)\T\).  Set
\[
    T_{n,G}^*=\vg\T\widehat{\mH}_{n,0}\vg.
\]
The conditional row influence satisfies, by Proposition \ref{prop:boot-rates},
\[
    h_n^*=\frac{\max_a\sum_{b\ne a}\hat a_{ab}^2}{\tr(\widehat{\mH}_{n,0}^2)}
    =O_p(\Delta_{{\rm row},n}+a_{K,n}^2+L_{K,n}^2a_{S,n}^2).
\]
Lemma \ref{lem:dejong-tool}, conditionally on \(\calX\), gives
\[
    \sup_t\abs{\Prstar(T_n^*/\tau_n\le t)-\Prstar(T_{n,G}^*/\tau_n\le t)}
    =O_p\{\omega_{\rm dJ}(C\delta_{H,n})\}.
\]
Let \(\hat\nu_{nr}\) be the eigenvalues of \(\widehat{\mH}_{n,0}\).  Conditional diagonalization gives
\[
    T_{n,G}^*\stackrel{d^*}{=}\sum_r\hat\nu_{nr}(\chi_{1r}^2-1).
\]
The variance and kernel perturbation bounds in Proposition \ref{prop:boot-rates}, together with Lemma \ref{lem:HW}, imply
\[
    \sum_r(\hat\nu_{nr}-\lambda_{nr})^2=O_p(\delta_{H,n}\tau_n^2).
\]
Since \(\E(\chi_1^2-1)^2=2\),
\[
    \frac{\sum_r(\hat\nu_{nr}-\lambda_{nr})(\chi_{1r}^2-1)}{\tau_n}
    =O_p(\delta_{H,n}^{1/2})
    \quad\hbox{in }L^2.
\]
Therefore
\[
    \sup_t\abs{\Prstar(T_n^*/\tau_n\le t)-\Prb(\Gamma_n/\tau_n\le t)}\to0
\]
in probability.  Theorem \ref{thm:feasible} gives
\[
    \frac{T_n^{\rm PDQ}}{\tau_n}=\frac{U_n}{\tau_n}
     +O_p\left(r_D+a_{K,n}+L_{K,n}a_{S,n}+\frac{x_N}{n_0^2\tau_n}\right),
\]
and Theorem \ref{thm:oracle} gives
\[
    \sup_t\abs{\Prb(T_n^{\rm PDQ}/\tau_n\le t)-\Prb(\Gamma_n/\tau_n\le t)}\to0.
\]
Combining the last two displays proves \eqref{eq:boot-main}.  Proposition \ref{prop:boot-rates} and Slutsky's theorem give \eqref{eq:boot-main-hattau}.  The size statement follows from bootstrap quantile consistency at continuity points of the limiting null law.
\end{proof}

\section{Comments on the radius and spectral assumptions}
\label{app:assumption-comments}

This appendix records several elementary implications of the elliptical model that are used to interpret Assumptions \ref{ass:radius} and \ref{ass:spectral}.

\subsection{Pairwise-quantile scaling and the working diagonal}

Let \(\ve_j\) denote the \(j\)th coordinate vector.  Under \eqref{eq:model} {\color{black} and $H_0$},
\[
        X_{ki,j}-X_{k\ell,j}
        =(d_{kj}^{\circ})^{1/2}
        \ve_j\T\mR_k^{1/2}
        (\xi_{ki}\vu_{ki}-\xi_{k\ell}\vu_{k\ell}).
\]
Since \(\mR_k\) is a correlation matrix,
\[
        \norm{\mR_k^{1/2}\ve_j}^2
        =\ve_j\T\mR_k\ve_j=1.
\]
Rotational invariance of \(\vu_{ki}\sim {\rm Unif}(\mathbb S^{p-1})\) yields
\[
        \ve_j\T\mR_k^{1/2}\xi_{ki}\vu_{ki}
        \stackrel{d}{=}\xi_{ki}U_{ki,1},
\]
where \(U_{ki,1}\) is the first coordinate of a uniform random vector on \(\mathbb S^{p-1}\).  Thus, with
\[
        q_{k,\alpha}
        =
        Q_{\alpha}\left(
        \abs{\xi_{k1}U_{k1,1}-\xi_{k2}U_{k2,1}}
        \right),
\]
we have, for every coordinate \(j\),
\[
        Q_{\alpha}\{\abs{X_{k1,j}-X_{k2,j}}\}
        =
        q_{k,\alpha}(d_{kj}^{\circ})^{1/2}.
\]
Consequently the population target of the pairwise-quantile diagonal estimator is
\[
        d_{kj}=q_{k,\alpha}^2d_{kj}^{\circ},
        \qquad
        \mD_k=q_{k,\alpha}^2\mD_k^{\circ}.
\]
The unknown scalar \(q_{k,\alpha}\) is immaterial for spatial signs because, for any \(c>0\),
\[
        U\{(c\mD_k)^{-1/2}\vx\}=U(\mD_k^{-1/2}\vx).
\]
In particular,
\[
        \vY_{ki}
        =
        \mD_k^{-1/2}(\vX_{ki}-\vtheta_k)
        =
        q_{k,\alpha}^{-1}\xi_{ki}\mR_k^{1/2}\vu_{ki},
\]
and
\[
        \norm{\vY_{ki}}^2
        =
        q_{k,\alpha}^{-2}\xi_{ki}^2
        B_{ki},
        \qquad
        B_{ki}=\vu_{ki}\T\mR_k\vu_{ki}.
\]

\subsection{Interpretation of Assumption \ref{ass:radius}}

From the preceding display, for \(r=1,2,4\),
\[
        p^{r/2}\E\norm{\vY_{k1}}^{-r}
        =
        p^{r/2}q_{k,\alpha}^r
        \E(\xi_{k1}^{-r})\E(B_{k1}^{-r/2}).
\]
Thus Assumption \ref{ass:radius} separates a one-dimensional radial lower-tail requirement from the shape contribution \(\E(B_{k1}^{-r/2})\).

For the AR(1) shape \((\mR_k)_{ab}=\rho_k^{\abs{a-b}}\), \(\abs{\rho_k}\le\bar\rho<1\),
\[
        \frac{1-\bar\rho}{1+\bar\rho}
        \le B_{k1}\le
        \frac{1+\bar\rho}{1-\bar\rho},
        \qquad
        \E(B_{k1}^{-r/2})=O(1).
\]
Moreover,
\[
        \E B_{k1}=1,
        \qquad
        \Var(B_{k1})
        =
        \frac{2}{p(p+2)}
        \{\tr(\mR_k^2)-p\}
        =O(p^{-1}),
\]
and hence \(B_{k1}=1+O_p(p^{-1/2})\).

For the compound-symmetry shape
\[
        \mR_k=(1-\rho_k)\mI_p+\rho_k\bm{1}_p\bm{1}_p\T,
        \qquad
        0\le\rho_k\le\bar\rho<1,
\]
write \(\bm v_p=p^{-1/2}\bm{1}_p\).  Then
\[
        B_{k1}
        =
        1-\rho_k+\rho_k p(\bm v_p\T\vu_{k1})^2,
        \qquad
        (\bm v_p\T\vu_{k1})^2
        \sim {\rm Beta}\{1/2,(p-1)/2\}.
\]
Therefore
\[
        B_{k1}\ge1-\bar\rho,
        \qquad
        \E(B_{k1}^{-r/2})=O(1),
        \qquad
        B_{k1}=O_p(1).
\]
Also, for both AR(1) and fixed-\(\rho\) compound symmetry,
\[
        \Prb\{\norm{\vY_{k1}}\le t p^{1/2}\}
        \le
        \Prb\{\xi_{k1}\le Cq_{k,\alpha}t p^{1/2}\},
        \qquad
        0<t\le1.
\]
Thus Assumption \ref{ass:radius} reduces in these benchmark cases to a one-dimensional lower-tail condition on the radial variable.  If \(\rho_k\to1\) in the compound-symmetry model, the lower bound on \(B_{k1}\) degenerates and \(M_{Y,n}\) may have to diverge.

\subsection{Relation with the inverse-radius condition of \citet{ZouPengFengWang2014}}

\citet{ZouPengFengWang2014} imposed, in their notation with \(R_i=\norm{\vX_i-\vtheta}\),
\[
        \frac{\E(R_i^{-r})}{\{\E(R_i^{-1})\}^{r}}
        \to d_r\in[1,\infty),
        \qquad
        r=2,3,4.
\]
Assumption \ref{ass:radius} is a rate-explicit analogue for the standardized radius \(R_{k1}^{Y}=\norm{\vY_{k1}}\).  Indeed, \eqref{eq:inverse-radius} gives
\[
        p^{r/2}\E\{(R_{k1}^{Y})^{-r}\}
        \le C M_{Y,n},
        \qquad r=1,2,4,
\]
and the case \(r=3\) follows from Cauchy's inequality:
\[
        p^{3/2}\E\{(R_{k1}^{Y})^{-3}\}
        \le
        \{p\E(R_{k1}^{Y})^{-2}\}^{1/2}
        \{p^2\E(R_{k1}^{Y})^{-4}\}^{1/2}
        \le M_{Y,n}.
\]
If \(p^{1/2}\E(R_{k1}^{Y})^{-1}\) is bounded away from zero, then
\[
        \frac{\E\{(R_{k1}^{Y})^{-r}\}}
        {\{\E(R_{k1}^{Y})^{-1}\}^{r}}
        =O(M_{Y,n}),
        \qquad r=2,3,4.
\]
Conversely, the condition of \citet{ZouPengFengWang2014}, together with
\[
        \E\{(R_{k1}^{Y})^{-1}\}\asymp p^{-1/2},
\]
implies the inverse-moment part of \eqref{eq:inverse-radius} with bounded \(M_{Y,n}\).  The additional small-ball condition \eqref{eq:smallball} is used to control the empirical spatial-median Jacobian uniformly; it is a lower-tail condition and does not require positive moments of the radial variable.

\subsection{Common elliptical models satisfying Assumption \ref{ass:radius}}

Let \(\vZ\sim N_p(\mathbf 0,\mI_p)\) and \(R_p=\norm{\vZ}\).  For fixed \(r<p\),
\[
        \E R_p^{-r}
        =
        2^{-r/2}
        \frac{\Gamma\{(p-r)/2\}}{\Gamma(p/2)}
        \asymp p^{-r/2}.
\]
Also, for every fixed \(c>0\),
\[
        \sup_{0<t\le1}t^{-2}
        \Prb(R_p\le ctp^{1/2})
        \le C_c.
\]
Consequently the multivariate normal model, for which \(\xi_{ki}=R_p\), satisfies \eqref{eq:inverse-radius} and \eqref{eq:smallball} with \(M_{Y,n}=O(1)\) whenever \(q_{k,\alpha}\) is bounded away from zero and infinity.

For the multivariate \(t_{\nu}\) model with fixed \(\nu>2\),
\[
        \xi_{ki}=\frac{R_p}{(W/\nu)^{1/2}},
        \qquad
        W\sim\chi_{\nu}^2,
        \qquad
        W\perp R_p.
\]
Hence, for fixed \(r\le4\),
\[
        \E\xi_{ki}^{-r}
        =
        \E R_p^{-r}\,
        \E(W/\nu)^{r/2}
        \asymp p^{-r/2}.
\]
Choose \(\beta\in(2/\nu,1)\).  For \(p\) sufficiently large,
\[
\begin{aligned}
        \Prb(\xi_{ki}\le ctp^{1/2})
        &\le
        \Prb\{R_p\le ct^{1-\beta}p^{1/2}\}
        +\Prb\{W/\nu>t^{-2\beta}\}  \\
        &\le C t^2,
        \qquad 0<t\le1.
\end{aligned}
\]
Thus the fixed-degree \(t_{\nu}\) model, including \(t_3\), satisfies Assumption \ref{ass:radius}.

For a finite-scale mixture normal model,
\[
        \xi_{ki}=S_{ki}^{1/2}R_p,
        \qquad
        0<s_-\le S_{ki}\le s_+<\infty,
\]
we have
\[
        \E\xi_{ki}^{-r}
        \le s_-^{-r/2}\E R_p^{-r}
        \asymp p^{-r/2},
\]
and
\[
        \Prb(\xi_{ki}\le ctp^{1/2})
        \le
        \Prb(R_p\le cs_-^{-1/2}tp^{1/2})
        \le C t^2.
\]
Therefore the finite-scale mixture normal model also satisfies Assumption \ref{ass:radius}.  These calculations justify the three simulation models used in Section \ref{sec:simulation}.

\subsection{Benchmark implications of Assumption \ref{ass:spectral}}

Assumption \ref{ass:spectral} is the only condition involving the shape structure.  It does not impose sparsity, bandedness, bounded maximum row sum, or weak correlation on \(\mR_k\).  Its row-leverage condition is expressed through the angular covariance matrices \(\mOmega_k\), rather than the crude bound \(L_{K,n}^2n_0^{-3}\tau_n^{-2}\).

Assume for simplicity that
\[
        n_1\asymp n_2\asymp n_0,
        \qquad
        \mathfrak C_n=O(1),
        \qquad
        L_{K,n}=O(1).
\]
Then
\[
        r_D\asymp a_{S,n}\asymp a_{K,n}
        \asymp \left(\frac{x_N}{n_0}\right)^{1/2},
        \qquad
        q_{\theta,n}\asymp \frac{p^{1/2}x_N}{n_0}.
\]

In the bounded-spectrum regime,
\[
        0<c\le \lambda_{\min}(\mR_k)
        \le \lambda_{\max}(\mR_k)\le C<\infty,
\]
one has
\[
        \mOmega_k\asymp p^{-1}\mI_p,
        \qquad
        \tau_n^2\asymp \frac{1}{n_0^2p},
        \qquad
        \tau_n\asymp \frac{1}{n_0p^{1/2}}.
\]
The angular row-leverage term satisfies
\[
        \Delta_{{\rm row},n}\asymp n_0^{-1},
        \qquad
        \frac{x_N}{n_0^2\tau_n}
        \asymp
        \frac{p^{1/2}x_N}{n_0},
        \qquad
        \delta_{H,n}\asymp \frac{x_N}{n_0}+\frac1{n_0}.
\]
Thus a convenient sufficient condition is
\[
        \frac{x_N}{n_0}\to0,
        \qquad
        \frac{p^{1/2}x_N}{n_0}\to0,
\]
or equivalently
\[
        p=o\left\{\frac{n_0^2}{\log^2(p+n_0)}\right\}.
\]
This includes the usual AR(1) correlation model with fixed \(\abs{\rho}<1\).

In the compound-symmetry regime,
\[
        \mR_k=(1-\rho_k)\mI_p+\rho_k\bm{1}_p\bm{1}_p\T,
        \qquad
        0<\underline\rho\le \rho_k\le \bar\rho<1,
\]
let
\[
        \bm v_p=p^{-1/2}\bm{1}_p,
        \qquad
        \mathbf P_{\parallel}=\bm v_p\bm v_p\T,
        \qquad
        \mathbf P_{\perp}=\mI_p-\mathbf P_{\parallel}.
\]
Then
\[
        \mR_k
        =
        \{1-\rho_k+\rho_kp\}\mathbf P_{\parallel}
        +(1-\rho_k)\mathbf P_{\perp},
\]
and, by symmetry,
\[
        \mOmega_k
        =
        \omega_{k1}\mathbf P_{\parallel}
        +\omega_{k2}\mathbf P_{\perp},
        \qquad
        \omega_{k1}\asymp1,
        \qquad
        \omega_{k2}\asymp p^{-1}.
\]
If the leading compound-symmetry direction is not annihilated by \(\mK_1,\mK_2,\mK_3\), then
\[
        \tau_n^2\asymp n_0^{-2},
        \qquad
        \tau_n\asymp n_0^{-1},
        \qquad
        \Delta_{{\rm row},n}\asymp n_0^{-1}.
\]
Consequently,
\[
        \frac{x_N}{n_0^2\tau_n}\asymp \frac{x_N}{n_0},
        \qquad
        \delta_{H,n}\asymp \frac{x_N}{n_0}+\frac1{n_0}.
\]
The binding plug-in requirement remains
\[
        q_{\theta,n}\to0,
        \qquad\text{that is,}\qquad
        \frac{p^{1/2}x_N}{n_0}\to0,
\]
which again gives
\[
        p=o\left\{\frac{n_0^2}{\log^2(p+n_0)}\right\}.
\]
Thus weakly dependent AR-type structures are not subject to the artificial \(p\log(p+n_0)=o(n_0)\) restriction produced by crude operator-norm row bounds.  Under compound symmetry the same dimensional range is allowed by the plug-in analysis, while the limiting law is typically non-Gaussian because the leading eigenvalue of \(\mB_n\) is non-negligible.

\bibliographystyle{apa}
\bibliography{ref}

\end{document}